\newcommand\diff{\mathop{}\!\mathrm{d}}
\newtheorem{theorem}{Theorem}[section]
\newtheorem{corollary}{Corollary}[theorem]
\newtheorem{lemma}[theorem]{Lemma}
\theoremstyle{definition}
\newtheorem{definition}{Definition}[section]
\newtheorem{remark}{Remark}
\begin{document}
\title{An unbiased non-parametric correlation estimator in the presence of ties}
\author{\href{mailto:ljrhurley@gmail.com}{Landon Hurley, PhD}}
\maketitle

\begin{abstract}
An inner-product Hilbert space formulation of the Kemeny distance is defined over the domain of all permutations with ties upon the extended real line, and results in an unbiased minimum variance (Gauss-Markov) correlation estimator upon a homogeneous i.i.d. sample.  In this work, we construct and prove the necessary requirements to extend this linear topology for both Spearman's \(\rho\) and Kendall's \(\tau_{b}\), showing both spaces to be both biased and inefficient upon practical data domains. A probability distribution is defined for the Kemeny \(\tau_{\kappa}\) estimator, and a Studentisation adjustment for finite samples is provided as well. This work allows for a general purpose linear model duality to be identified as a unique consistent solution to many biased and unbiased estimation scenarios.
\end{abstract}

Correlation estimators have been a foundational part of the mathematics of linear model spaces. Building off of the constructive foundations of negative definite distance matrices upon complete metric space, guarantees concerning many highly desirable characteristics for correlation estimators, such as the central limit theorem, unbiasedness, finite sample generalisability, and guaranteed probability distributions have been provided. However while the Pearson correlation, which resolves around the well studied Euclidean distance or Frobenius norm function space is nearly ubiquitous, the Spearman and Kendall correlation estimators are substantially less well understood and available for researchers. This, we argue in this manuscript, is the direct consequence of an improper restriction of the domain to the symmetric group of order \(n\), (\(S_{n})\), representing the ordering of \(n\) data items on a vector of permutations with no ties for each variable.

This assumption is quite useful for data analysis beginning with continuous random variables and also in the asymptomatic analysis wrt \(n\), in which the rank-score bijection is ensured under the weak law of large numbers. Leveraging the birthday paradox, it is straightforward to conclude that the probability of a tie, the collision of two data elements with the same score on a random variable, but different values on the second,  tends quickly to 0 even for small samples. This result follows directly from the infinite cardinality of originating bivariate domain, which is equivalent to assuming continuous random variables. As has been already observed, the Wilcoxon Rank-Sum and the Mann-Whitney U tests are both equivalent to the respective Spearman and Kendall correlation estimate between a binomial grouping variable and a continuous endogenous response variable; effectively, these present a point-biserial correlation model framework, using different metric topologies (measures of distance function spaces). 
We present here an supersuming framework for the construction of the distribution-free estimators and test statistics, which are parametric (stochastically regular) in structure, in the sense of \textcite[Ch.~1]{hollander2014}.


These highly regular and unbiased minimum-variance estimator properties allow us to establish several highly noteworthy observations, including that the common non-parametric estimators are both biased and not a minimum variance estimator in the analysis of discrete data upon finite samples, compromising the generalisability of the partial Wald tests as put forth by \textcite{diaconis1977}. An explicit geometric duality, directly connected to Pearson's \(r\) is established between the novel estimators, and we show that under certain conditions, the non-linear bijection between  Kemeny's \(\tau_{\kappa}\) and Pearson's \(r\) is maintained. This duality is exceptionally powerful, allowing for non-positive definite (n.p.d) matrices and corresponding multivariate biased estimating equations upon the Pearson's \(\Sigma\) matrix to be uniquely solved according to empirical condition which satisfy the Gauss-Markov conditions and converge to an unbiased population solution as a Tikhinov regularised system of linear equations. The utility of these applications has been explored with positive results in applications to Linear Factor Analysis, solutions to Missingness Not at Random, and high-dimensional Kernel estimation framework solutions to general Karush-Kuhn-Tucker constrained optimisation problems.  

\section{A distribution free yet parametric general linear estimating function}

The theory of a standard general linear model requires an Hilbert, inner-product, metric space. These properties allow for convergence and regularity to be established for all observable elements upon finite independent and identical samples (i.e., stability). However, Spearman's \(\rho\) and Kendall's \(\tau\) estimators do not possess such a regular signed inner-product distance formulation, and only satisfy the necessary requirements of a Banach, or complete, metric space, which does asymptotically converge. Assume the existence of a sample matrix independently and identically sampled over the extended real line, \(\mathbf{X}^{n \times p} \in \overline{\mathbb{R}}\), indexed for \(i = 1,\ldots,n\) and \(j = 1,\ldots,p\) with restriction that \(n \gg p\). At this time, these assumptions are consistent standard practice, however we would note here that empirically investigation has demonstrated that a large ratio \(\tfrac{n}{p}\) is not necessary for the Kemeny metric to remain positive definite and thus unique. Allow lower case vectors, such as \(x_{i,j=1},y_{i,j=2} \in X^{n \times p=2}\) to denote univariate vectors of length \(n \times 1\) uniformally sampled upon the extended real line which are indexed upon matrix \(\mathbf{x}\). 

\subsection{First moment estimating equations upon the population}
We propose the following distance function, constructed as the Hadamard inner-product of two independently arising skew-symmetric matrices which represent the linear permutation space upon a basis \(\kappa: \overline{\mathbb{R}}^{n} \to n \times n\), which is indexed \(k,l = 1,\ldots,n\) for each data matrix column \(j \in \mathbf{X}_{n \times p}, j = 1,\dots,p\):

\begin{subequations}
\begin{equation}
\label{eq:kem_dist}
\rho_{\kappa}(x,y) = \frac{n^{2}-n}{2} + \sum_{k,l=1}^{n} \kappa_{kl}(x) \odot\kappa_{kl}^{\intercal}(y),~ k,l = 1,\ldots,n.
\end{equation}
\begin{minipage}{.45\linewidth}
\scriptsize
\begin{equation}
\noindent
\label{eq:kem_score}
\kappa_{kl}(x) = {
\begin{dcases}
\: \sqrt{.5} & \text{if } x_{k} > x_{l}\\
\: 0 & \text{if } x_{k} = x_{l},\\
\: -\sqrt{.5} & \text{if } x_{k} < x_{l}\\
\end{dcases}
}
\end{equation}
\end{minipage}
\begin{minipage}{.5\linewidth}
\begin{equation}
\scriptsize
\kappa_{kl}(y) = {
\begin{dcases}
\: \sqrt{.5} & \text{if } y_{k} > y_{l}\\
\: 0 & \text{if } y_{k} = y_{l},\\
\: -\sqrt{.5} & \text{if } y_{k} < y_{l}\\
\end{dcases}
}
\end{equation}
\end{minipage}
\end{subequations}

The \(\kappa\) function maps each extended real vector onto a skew-symmetric matrix of order \(n \times n\), \(\kappa: \overline{\mathbb{R}}^{n} \to (n \times n)\). Each entry in said \(\kappa\) matrix corresponding to vectors \(x,y\) are denoted by an entry in the \(k^{th}\) row and \(l^{th}\) column as element \(\kappa_{kl}(x_{i})\) and \(\kappa_{kl}(y_{i})\), respectively. Upon these two \(\kappa\) skew-symmetric matrices is then performed the Hadamard product (\(\odot\)), each representing the bivariate variable pair, whose inner-product is then summated over all \(n \times n\) elements. The distance calculated by the linear combination of the ordered vector space which results from the \(\kappa\) mapping, i.e., the permutation representation, of any two elements upon vector \(x_{i}\), allows for the linear representation of all permutations with ties, a complete space of cardinality \(\mathcal{M} = n^{n}-n\), rather than \(n!\), while excluding all constant \(n\) degenerate random variables. The matrix transpose is denoted by the superscript \((\cdot)^{\intercal}\).


We establish the existence of a neighbourhood around the expectation, which is equivalent to establishing the consistency and convergence of any random variables measured upon the abelian linear Kemeny function space. Let \(\rho_{\kappa} \in [0,n^{2}-n]\) as measured in equation~\ref{eq:kem_dist}. The largest possible Kemeny distance for any space in \(\mathcal{M}\) is obtained upon the Identity permutation \(I_{n \times 1} = 1,\ldots,n\) and its reverse \(I^{\prime}_{n \times 1} = n,\ldots,1\), whose distance is always \(n^{2}-n\) for all finite \(n\), and denotes a linear distance (as a metric space) over the extended real domain upon \(x,y\). \(E(\rho_{\kappa}(\mathcal{M}) = \tfrac{n^{2}-n}{2},\) and any affine linear transformation, including the subtraction of the expectation upon equation~\ref{eq:kem_dist}, is still a complete Hilbert space, by the definition of a metric space. Further, all metrisable spaces are recognised to be perfectly normal \(T_{6}\) Haussdorff spaces. By subtracting the set of all distances from about the expectation, the neighbourhood of the Kemeny metric is now signed about 0: \(U(\mathcal{M}) = [-\tfrac{n^{2}-n}{2},\tfrac{n^{2}-n}{2}]\). Said neighbourhood, which is almost surely finite as the mapping of \(n^{n} - n \mapsto [0,n^{2}-n]\), is an equivalent condition which establishes the consistency of the Kemeny distance. Therefore,

\begin{lemma}
\label{lem:kem_bounded}
The Kemeny metric space is compact and totally bounded for any finite \(n\).
\end{lemma}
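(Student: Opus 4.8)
The plan is to exploit the fact that, for finite $n$, the object under study is a \emph{finite} subset of a finite-dimensional Euclidean (Hilbert) space, so that both compactness and total boundedness follow from Heine--Borel rather than from any delicate limiting argument. Concretely, I would realise each element of $\mathcal{M}$ through its image under the $\kappa$ map as a point among the skew-symmetric matrices in $\mathbb{R}^{n \times n}$, equipped with the Frobenius inner product that induces $\rho_{\kappa}$ via Eq.~\ref{eq:kem_dist}, and then show this image set is closed and bounded. Compactness and total boundedness are topological invariants carried across by this representation, so establishing them for the matrix image establishes them for the Kemeny metric space.

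First, I would bound the cardinality of the image: each off-diagonal entry $\kappa_{kl}$ takes a value in the three-point set $\{-\sqrt{.5},\,0,\,\sqrt{.5}\}$, and skew-symmetry fixes the lower triangle from the upper, so there are at most $3^{\binom{n}{2}}$ distinct $\kappa$ matrices, consistent in spirit with the stated cardinality $n^{n}-n$; the image of $\mathcal{M}$ is therefore finite. Second, I would bound it geometrically: every such matrix carries exactly $n^{2}-n$ off-diagonal entries of squared modulus $\tfrac{1}{2}$, so its squared Frobenius norm equals exactly $\tfrac{n^{2}-n}{2}$, placing the entire image inside a closed ball of fixed radius. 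Third, since any finite subset of a metric space is closed, the image is closed and bounded in finite-dimensional $\mathbb{R}^{n \times n}$; Heine--Borel yields compactness, and every compact subset of a metric space is totally bounded — indeed, being finite, each point may serve as its own element of an $\varepsilon$-net for any $\varepsilon > 0$. Finally, the centred neighbourhood $U(\mathcal{M})$ is the image of this set under the continuous affine map subtracting $\tfrac{n^{2}-n}{2}$, so compactness and total boundedness transfer verbatim.

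The only place requiring genuine care is not an analytic obstacle but a bookkeeping one: confirming that $\rho_{\kappa}$ is a bona fide metric on the image, so that the phrase ``metric space'' is justified and the three-valued entry count does not silently collapse distinct tied orderings onto a single matrix. Positive-definiteness (identity of indiscernibles) is precisely what guarantees that the map from $\mathcal{M}$ into $\mathbb{R}^{n \times n}$ is injective on equivalence classes of orderings, and hence that the finite image faithfully represents $\mathcal{M}$; this is inherited from the metric-space hypothesis already established above the lemma. Once that is granted, nothing else is delicate — finiteness does all the work — and the asymptotic-in-$n$ regime, in which total boundedness would become a substantive constraint, lies explicitly outside the scope of a fixed finite $n$.
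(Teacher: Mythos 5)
Your proof is essentially correct but takes a genuinely different, and more elementary, route than the paper. The paper's own argument never mentions finiteness of the image or Heine--Borel: it derives completeness from the expansion-constant bound of Lemma~\ref{lem:kem_expansion}, invokes the two-sided distance bound $0 \le \rho_{\kappa}(a\kappa^{*}(x_{A}),a\kappa^{*}(x_{B})) \le a(n^{2}-n)$ from Lemma~\ref{lem:complete_space} as its notion of total boundedness, and then concludes compactness from the conjunction of the two. Your argument instead observes that for fixed finite $n$ the image of $\kappa$ is a \emph{finite} set of skew-symmetric matrices sitting inside a ball of finite-dimensional $\mathbb{R}^{n\times n}$, whence compactness and total boundedness are immediate (a finite metric space is trivially its own $\varepsilon$-net). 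This buys a shorter and more self-contained proof that does not lean on the appendix lemmas, at the cost of depending on the explicit matrix representation; the paper's route, by contrast, is phrased so that the scalar-homogeneity property $a\rho_{\kappa}$ carries through, which it reuses elsewhere. One slip to correct: you assert that every $\kappa$ matrix has \emph{exactly} $n^{2}-n$ off-diagonal entries of squared modulus $\tfrac{1}{2}$, so that the squared Frobenius norm equals exactly $\tfrac{n^{2}-n}{2}$. That holds only in the absence of ties; a tie $x_{k}=x_{l}$ forces $\kappa_{kl}=0$, which is precisely the case the paper is built around, so the norm is in general only bounded \emph{above} by $\tfrac{n^{2}-n}{2}$. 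This does not damage your argument, since an upper bound is all that is needed to place the image in a closed ball, but the word ``exactly'' should be replaced by ``at most.'' Your closing remark about injectivity of the representation is a reasonable precaution, though note the paper itself identifies elements of $\mathcal{M}$ with their orderings, so distinct tied orderings mapping to the same matrix is by construction not a collapse of distinct points of the metric space.
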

\begin{proof}
The Kemeny metric is complete by Lemma~\ref{lem:kem_expansion}, and totally bounded, in that it possesses no point of finite $n$ reals, using finite positive real scalar  \(0 < a \in \mathbb{R}^{1 \times 1}<\infty^{+}\), which is outside the bounds of $0 \le \rho_{\kappa}(a\kappa^{*}(x_{A}),a\kappa^{*}(x_{B})) \le a(n^{2}-n), \forall\ 0 < a < \infty^{+}$ by Lemma~\ref{lem:complete_space}, where \(a\) is an arbitrary finite scalar. Therefore the Kemeny metric space is shown to be both compact and complete and consequently, separably dense as well as a \(T_{6}\) topological space.
\end{proof}

The Kemeny linear variance or concentration measure results from the summation of the \(n^{2}-n\) free parameters representing the finite support for all finite \(\kappa(x_{n \times 1})\): 
\begin{equation}
\label{eq:kem_variance}
\sigma_{\kappa}^{2}(x) = \tfrac{2}{n(n-1)}\Big(\sum_{k=1}^{n}\sum_{l=1}^{n} \kappa_{kl}(x)\kappa_{kl}^{\intercal}(x) \Big) \equiv \tfrac{2}{n(n-1)}\sum_{l,k=1}^{n}\kappa_{kl}^{2}(x).
\end{equation}
From this expression of the variance is defined a mapping of the extended real domain sub-space to a singular real which is positive if and only if \(x\) is non-degenerate. The scaling space reflects the squared (and therefore always non-negative) element composition, as guaranteed by Lemma~\ref{lem:kem_bounded}. When all squared elements in the skew-symmetric matrix are summed over, there exist \(n^{2}-n\) free elements, the diagonal \(n\) elements which strictly 0, producing a maximum order variance equals to 0.5 (proportional and equivalent to the scalar constant of \((\sqrt{.5})^{2}\), for each independent random variable under examination, which when computed upon the same variable twice is almost surely no greater than 1. The almost sure convergence and finiteness of the variance, as well as the mean, are guaranteed by Lemma~\ref{lem:kem_bounded}, a consequence of Kolmogorov's stronger order property. As the minimum of this same mapping is 0, which does not occur upon \(\mathcal{M}\), the support of equation~\ref{eq:kem_variance} is therefore \(\sigma_{\kappa}^{2} \in (0,1],\) for all possible non-constant vectors in the extended real line of any finite length \(n\). Note that, when considering the subset domain \(S_{n}\), \(\sigma^{2}_{\kappa}\) is for all element vectors a constant of 1. Therefore, the presence of ties upon the permutation space reduces the estimated variability of the sample, such that \(\sigma^{2}_{\kappa}(\mathcal{M} \setminus S_{n}) \in (0,1)\).

As each vector \(x\) is centred as the image of the \(\kappa\) matrix, one degree of freedom (the median) is lost in the estimation of the variance. This necessitates the introduction of a Bartlett correction, establishing the variance of the order statistics upon the Kemeny metric space in the presence of ties. By assuming independent and identical sampling, we obtain the following expression 
\begin{equation}
\label{eq:kem_population_variance}
\dot{\sigma}_{\kappa}^{2}(\mathcal{M}) = \frac{(n - 1)^2 (n + 4) (2 n - 1)}{18 n},
\end{equation}
representing the population variance of all Kemeny distances of a fixed length \(n\) between bivariate independent random variables, concordant with the empirical findings in Table~\ref{tab:1}. Note that this domain explicitly presumes that permutations are only sampled once, however which will be addressed later in Subsection~\ref{subsection:studentising}. 

These corrections for continuous random variables would naturally produce $n!$ such occurrences, which are indeed observed. We also note that under these conditions, limiting scenarios such as a Binomial random variable obtain identical variance when constructed upon either metric function space: this allows for a fundamental identity upon the orthonormal characterisation of finite data to be realised. At the moment however, consider that without the allowance for the measure of ties, the Kemeny variance function, and its equality to the Euclidean function space would be undefined. Therefore, as a viable tool for statistical learning, without the allowance for ties, multiple common scenarios become only asymptotically unbiased (and thus identified). 

We next proceed to prove the estimator constructed upon the distance function is unbiased:
\begin{lemma}
\label{lem:unbiased}
The Kemeny correlation (equation~\ref{eq:kem_cor}) is an unbiased estimator.
\end{lemma}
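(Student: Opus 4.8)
The plan is to exhibit eq.~\ref{eq:kem_cor} as a deterministically-normalised average of pairwise scores and then apply linearity of expectation. First I would write the estimator as \(\hat\tau_\kappa = \langle\kappa(x),\kappa(y)\rangle_{F}/\bigl(\|\kappa(x)\|\,\|\kappa(y)\|\bigr)\), the Frobenius inner product of the two skew-symmetric images normalised by their Kemeny norms of eq.~\ref{eq:kem_variance}. Skew-symmetry makes each image automatically centred, \(\sum_{k,l}\kappa_{kl}(\cdot)=0\), so the inner product already behaves as a covariance and no sample mean need be subtracted; this is precisely what removes the finite-sample centring bias that afflicts Pearson's \(r\) on raw scores.

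On the untied domain \(S_{n}\) the two norms are the pure constant \(\sqrt{(n^{2}-n)/2}\), since \(\sigma_{\kappa}^{2}\equiv 1\) there, so the denominator is non-random and \(\hat\tau_\kappa\) is a deterministic multiple of \(\sum_{k,l}\kappa_{kl}(x)\kappa_{kl}(y)\). The \(n\) diagonal terms vanish identically and, under the homogeneous i.i.d.\ (exchangeable) law, the \(n^{2}-n\) off-diagonal summands share one common expectation. Linearity then gives \(E[\hat\tau_\kappa]\) equal to that per-pair expectation, which I would take as the definition of the population functional \(\tau_\kappa\); unbiasedness is thus exact for every finite \(n\), with no \(O(1/n)\) remainder, and coincides with the Gauss--Markov reading that \(\hat\tau_\kappa\) is the best linear unbiased projection coefficient of \(\kappa(y)\) onto \(\kappa(x)\) in the Hilbert space supplied by Lemma~\ref{lem:kem_bounded}.

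To extend this to tied data I would condition on the two marginal tie patterns. The crucial observation is that \(\sum_{k,l}\kappa_{kl}^{2}(x)\) in eq.~\ref{eq:kem_variance} depends on the \(x\)-marginal alone (it never sees \(y\)), and symmetrically for \(y\), so once both marginals are fixed the two norms become measurable constants and can be pulled out of the conditional expectation. The numerator is again an exchangeable pairwise sum, so its conditional expectation is a common per-pair term; dividing by the fixed norms yields the conditional population correlation, and the tower property returns unconditional unbiasedness. The same computation exposes why Kendall's \(\tau_{b}\) fails: its tie-corrected denominator mixes information from both coordinates and stays sample-random even after conditioning, so it cannot be removed from the expectation.

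The step I expect to be the main obstacle is precisely this interchange of expectation and division in the tied case, namely showing that the ratio expectation collapses to expectation-of-numerator over a fixed denominator. It rests on two checks: that each Kemeny norm is a function of its own marginal only, and that homogeneity forces a single per-pair concordance expectation across all off-diagonal \((k,l)\). Once both are secured the argument reduces to one application of linearity followed by the tower property, and the bias-free conclusion holds for all finite \(n\) rather than merely asymptotically.
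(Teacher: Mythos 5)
Your core idea --- writing the estimator as a sum of $n^{2}-n$ exchangeable off-diagonal summands, invoking linearity of expectation, and identifying the common per-pair expectation as the estimand --- is sound, and it is a cleaner, more standard (U-statistic-style) route than the paper's. The paper instead argues by symmetry over the finite population $\mathcal{M}$ of cardinality $n^{n}-n$: the $\kappa$ images form an even, skew-symmetric family, the positive and negative inner products telescope and cancel in aggregate, so $E_{\mathcal{M}}[\rho_{\kappa}]=\tfrac{n^{2}-n}{2}$ and hence $E_{\mathcal{M}}[\tau_{\kappa}]=0$; that is, it establishes unbiasedness for the specific null value $0$ under uniform sampling of permutation pairs, whereas your argument establishes unbiasedness for the per-pair concordance functional under arbitrary i.i.d.\ sampling. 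Your version buys generality and a sharper statement; the paper's fixes an explicit numerical target. One caveat: taking the per-pair expectation ``as the definition of the population functional'' makes unbiasedness true by construction, so you should state explicitly which parameter you claim the estimator is unbiased for.

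The genuine gap is in your handling of ties. Equation~\ref{eq:kem_cor} reads $\tau_{\kappa}(x,y)=-\tfrac{2}{n^{2}-n}\sum_{k,l}\kappa_{kl}(x)\odot\kappa_{kl}^{\intercal}(y)$: the normalisation is the deterministic constant $\tfrac{2}{n^{2}-n}$ for every sample, tied or not; it is \emph{not} the product of sample-dependent Kemeny norms $\|\kappa(x)\|\,\|\kappa(y)\|$ that you posit. Your third paragraph therefore addresses a non-problem, and the step it relies on does not go through as stated: conditioning on the two marginal tie patterns does make the norms constant and lets you write $E[N/D\mid\text{patterns}]=E[N\mid\text{patterns}]/D$, but the tower property then yields $E[N/D]=E\bigl[E[N\mid\text{patterns}]/D\bigr]$, a tie-pattern-weighted average of conditional ratios, which is not in general equal to any fixed unconditional target such as $E[N]/E[D]$. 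Had the estimator really carried that random denominator, your proof would be incomplete at exactly the point you flag as the main obstacle. As it stands, the fix is simply to delete the conditioning machinery: with the constant denominator your first two paragraphs already cover the tied case verbatim, since the off-diagonal summands remain exchangeable when ties are permitted (a tie merely contributes a zero kernel value). Note also the leading minus sign and the transpose in equation~\ref{eq:kem_cor}, which your expression for the estimator drops.
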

\begin{proof}
For the space of the Kemeny correlation, observe that upon the countable finite population \(\mathcal{M}\), that the expectation of the inner-product of the even function (as guaranteed by the symmetry of any metric space about \(U(\mathcal{M})\)) is of distance 0, and equivalently the correlation is 0. For \(n=2\), the set of \(\mathcal{M} = 2^{2} - 2 = 2\) permutations which possess a symmetric (by the even function nature of equation~\ref{eq:kem_dist}) neighbourhood of distances in \([-(\sqrt{0.5})^{2},+(\sqrt{0.5})^{2}]\). When the set of all elements \(m\) in \(\mathcal{M}\) is summated over, the population expectation \(\lim_{m \to\infty^{+}} E(U(\mathcal{M}) = 0\). Thus, the Kemeny estimator is asymptotically unbiased in expectation. By induction, observe the finite telescoping sequence of \(m_{i} \in \mathcal{M}, i = 1,\ldots,n^{n}-n\), from which follows both the finite expectation of 0 (by the closure under addition for the skew-symmetric matrix of equation~\ref{eq:kem_score} in the Kemeny metric space) and also the symmetry of the telescoping positive and negative distances, which arise by the even function nature of the \(\kappa\) function centred at the arbitrary point of origin (see Lemma~\ref{lem:even}). As all the absolute distances are never greater than \(\tfrac{n^{2}-n}{2}\), as the space is compact and totally bounded, regardless of the values which arise from the extended real line. For any finite \(n\), the sum and inner product of two random variables \(x,y\) of length \(n\) on the population \(\mathcal{M}\) indexed by \(m\mid{n}\) is \[\mu_{1} = \lim_{m\to \mathcal{M}}\tfrac{n^{2}-n}{2} + E_{m}(\tau_{x,y}) = \tfrac{n^{2}-n}{2} + \sum_{k,l = 1}^{n} \kappa_{kl}(x_{m})\odot\kappa_{kl}^{\intercal}(y_{m}) = \tfrac{n^{2}-n}{2} + 0,\] thereby completing the proof that the Kemeny correlation is an unbiased estimator of the median bivariate distance. 
\end{proof}

Transformation of the distance from the complete metric properties of the Hilbert space to a signed distance space is obtained by subtraction of the leading median distance, resulting in a finite subset \(m\) which in aggregate averages out to 0 in the neighbourhood of \(U(\mathcal{M})\) which is uniformly operated upon by the Glivenko-Cantelli theorem (Theorem~\ref{thm:gc}) for this linear function space. Further examination will demonstrate that the errors of approximation are themselves symmetrically distributed around the expectation (as the Kemeny metric is an even function; Lemma~\ref{lem:even}), and that by sub-additivity (Theorem~\ref{lem:hilbert}), the affine linear transformation by subtraction of the unbiased expectation does not bias the estimator, for all finite bivariate distances between random variables of length \(n\).

\subsubsection{Kemeny correlation \(\tau_{\kappa}\)}
Turning the distance function (equation~\ref{eq:kem_dist}) into an inner-product correlation coefficient is achieved in equation~\ref{eq:kem_cor},

\begin{equation}
\label{eq:kem_cor}
\tau_{\kappa}(x,y) = -\tfrac{2}{n^{2}-n}\sum_{k=1}^{n}\sum_{l=1}^{n} \kappa(x)_{kl}\odot\kappa^{\intercal}(y)_{kl} = -\tfrac{2}{n^{2}-n} \Big(\rho_{\kappa}(x,y) - \tfrac{n^{2}-n}{2}\Big), \{x,y\} \in \overline{\mathbb{R}}^{n \times 1}.
\end{equation}
By taking the finite, compact, and totally bounded neighbourhood about 0, the distance \(\rho_{\kappa} \in [0,n^{2}-n]\), for arbitrary \(n\), and its support of the extended real domain of two extended real number lines of length \(n\), is transformed to a closed and compact neighbourhood centred at 0, \(U(\mathcal{M}) = [-\tfrac{n^{2}-n}{2},\dots,0,\dots,\tfrac{n^{2}-n}{2}]\). Multiplying by 2 rescales the extrema to span \(\pm (n^{2}-n)\) without changing the expectation, and the negation ensures that the further the realised distance from the origin of 0, which is always finite, the closer to a reverse permutation the random variable is. The extremal distance \(\tfrac{n^{2}-n}{n^{2}-n} = 1 \) under negation becomes \(-1\), supporting the interpretation that relative to the origin, the furthest distance is a vector for which all elements are completely reversed (termed the reverse Identity permutation vector, and noting that the origin is the Identity permutation vector). 

A scaled distance upon the neighbourhood at \(-(n^{2}-n)\), obtained by comparing the origin with itself after the median distance is subtracted,  may be rescaled to -1, and under negation is 1 (thereby denoting an identical ordering to the origin has distance 0, and has a cosine similarity of measure of \(\cos(0) = 1\)), as required for a correlational measure. Finally, all midpoint distances equal to the median distance are equal to 0, and any affine linear transformation upon them by the prescribed constants remains 0, denoting a normed correlation or distance measure which is fixed and uniquely defined at points on the finite field \(\{-1,0,1\}\) for all finite measures upon the neighbourhood about 0. This enables the utilisation of Brouwer's fixed-point theorem, as the isometric Euclidean distance and its cosine are non-linear at all other points.

\subsubsection{Theoretical properties of the estimator}
We now show that the Kemeny correlation estimator also satisfies the Gauss-Markov theorem, providing a best linear unbiased estimator which satisfies the Lehmann-Scheff\'{e} theorem, under the following definition:

\begin{definition}
\label{def:gauss_markov}
The necessary conditions to satisfy Gauss Markov theorem guarantee that the distance minimising function provides the best linear unbiased estimate (BLUE) possible point estimates upon a sample. These five Gauss Markov conditions are:
\begin{multicols}{2}
\begin{enumerate}
\footnotesize{
    \item{Linearity: estimated parameters must be linear.}
    \item{Variables arise i.i.d. by stochastic sampling from a common population.}
    \item{No variables are perfectly correlated.}
    \item{Exogeneity: the random variables are conditionally orthonormal.}
    \item{Homoscedasticity: the error of the variance is constant across the given population.}
}
\end{enumerate}
\end{multicols}
\end{definition}

\begin{theorem}
\label{thm:gauss-markov}
The Kemeny correlation is a Gauss-Markov estimator for any bivariate vector pair of length \(n\) which are independently sampled from a common population.
\end{theorem}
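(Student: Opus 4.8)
The plan is to verify the theorem by checking each of the five Gauss-Markov conditions of Definition~\ref{def:gauss_markov} in turn, invoking the structural results already established, and then to promote the conclusion from ``unbiased'' to ``minimum-variance unbiased'' by the classical projection argument together with the Lehmann-Scheff\'{e} completeness theorem. Because the conditions are stated as an explicit checklist, the natural proof is a condition-by-condition verification rather than a single monolithic derivation.

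First I would dispatch the two conditions that follow almost immediately. Linearity (condition~1) is inherited from the inner-product structure: since \(\tau_{\kappa}\) is defined in equation~\ref{eq:kem_cor} as an affine rescaling of the Hadamard inner product \(\sum_{k,l}\kappa_{kl}(x)\odot\kappa_{kl}^{\intercal}(y)\), and Theorem~\ref{lem:hilbert} establishes that this is a genuine bilinear form on a Hilbert space, the estimator is linear in the \(\kappa\)-score coordinates. Condition~2 (i.i.d.\ sampling) is exactly the hypothesis of the theorem and requires nothing further.

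Next I would address the three conditions that draw on the earlier analytic results. For condition~3 (no perfect correlation), I would appeal to the strict positivity of the variance in equation~\ref{eq:kem_variance}: Lemma~\ref{lem:kem_bounded} guarantees \(\sigma_{\kappa}^{2}(x)\in(0,1]\) for every non-degenerate vector, so the \(\kappa\)-image is never the zero element and the design built from the \(\kappa\)-scores retains full rank, excluding only the constant (degenerate) vectors already removed from \(\mathcal{M}\). For condition~4 (exogeneity / conditional orthonormality), I would combine the unbiasedness of Lemma~\ref{lem:unbiased} with the even-function symmetry of Lemma~\ref{lem:even}: because the expected inner product over \(\mathcal{M}\) is \(0\), the approximation error is centred and orthogonal to the fitted direction, which is precisely the statement that the conditional expectation of the error vanishes. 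For condition~5 (homoscedasticity), I would invoke the closed-form population variance of equation~\ref{eq:kem_population_variance}, which depends only on \(n\) and not on the particular realised permutation, so that the error dispersion is constant across the population.

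The hard part will be condition~5 in the presence of ties. The text itself records that \(\sigma_{\kappa}^{2}(\mathcal{M}\setminus S_{n})\in(0,1)\) varies with the tie pattern, so the naive per-realisation variances are \emph{not} identical; the homoscedasticity I actually need is at the level of the Bartlett-corrected population dispersion of equation~\ref{eq:kem_population_variance}, and I must argue carefully that the affine centring about the median distance preserves this constancy without reintroducing dependence on the tie structure. I would close the argument by noting that once all five conditions hold, the distance-minimising (least-squares) solution on this Hilbert space is the minimum-variance linear unbiased estimator by the standard orthogonal-projection argument, and that the completeness and sufficiency of the \(\kappa\)-statistic then upgrade this, via Lehmann-Scheff\'{e}, to the unique best linear unbiased estimator, as claimed.
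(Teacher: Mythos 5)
Your overall strategy --- a condition-by-condition verification against Definition~\ref{def:gauss_markov}, leaning on Theorem~\ref{lem:hilbert} for linearity and Lemma~\ref{lem:unbiased} for unbiasedness --- is the same as the paper's, and conditions 1, 2 and 4 are handled in essentially the same spirit (the paper invokes the Riesz representation theorem where you invoke Lemma~\ref{lem:even} and centring of the error, but both reduce exogeneity to the Hilbert-space structure). There are, however, two substantive divergences. First, your argument for condition 3 does not actually address the condition. You establish that \(\sigma_{\kappa}^{2}(x)\in(0,1]\) for every non-degenerate vector, i.e.\ that no variable is constant; but condition 3 is that no two \emph{distinct} variables are perfectly correlated, which is a statement about collinear pairs, not about zero variance. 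Two non-degenerate vectors with identical \(\kappa\)-images are perfectly correlated and pass your full-rank test vacuously. The paper instead argues that for any fixed variable there are exactly \(|Q|=2\) elements of \(\mathcal{M}\) collinear with it (itself and its reversal), so under independent sampling \(\Pr(\mathcal{M}\cap Q)=\lim_{n\to\infty^{+}}2/(n^{n}-n)=0\); some argument of this measure-theoretic type is needed and is missing from your proposal.

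Second, the paper devotes a significant portion of the proof to establishing Gramian positive definiteness of the resulting correlation (hence covariance) matrix, via the Mercer condition, Schoenberg's characterisation, or Bochner's theorem on the locally compact abelian group, which you omit entirely; this is what licenses treating the estimator as a valid inner-product (and hence distance-minimising) object in the first place. On condition 5 you are actually more candid than the paper: you correctly observe that \(\sigma_{\kappa}^{2}\) varies with the tie pattern on \(\mathcal{M}\setminus S_{n}\) and that homoscedasticity must be pitched at the level of the population dispersion of equation~\ref{eq:kem_population_variance}, whereas the paper asserts the condition ``by definition'' of \(0<\sigma_{\kappa}^{2}<\infty^{+}\) on a common CDF --- but you then leave the careful argument as a promissory note rather than closing it. Your concluding Lehmann--Scheff\'{e} upgrade is not part of the paper's proof of this theorem (it appears separately in the sufficiency subsection) and is harmless but unnecessary for the statement as posed.
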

\begin{proof}
The linearity of the Kemeny function space follows by definition for any Hilbert norm space. The Kemeny distance and functions thereof (such as given in equation~\ref{eq:kem_cor}) are a proven in Hilbert space (see also Theorem~\ref{lem:hilbert}) by the existence a valid inner-product. Unbiasedness follows from Lemma~\ref{lem:unbiased}, and Gramian positive definiteness follows from either the satisfaction of the Mercer condition, or as the finite sum of the squared totally bounded positive variances of the Kemeny metric space (Lemma~\ref{lem:kem_bounded} enacted upon \(\mathcal{M}\); \cite{schoenberg1938}); both conditions are equivalent, and thus valid for the Kemeny measure space. Utilisation of Bochner's theorem to guarantee the Gramian nature is also valid, recognising that the Hilbert space is a continuous positive-definite function on a locally compact abelian group. Equivalently, the correlation between the two vectors must solely depend upon the distance between them, while remaining a.s. positive definite (i.e., a valid distance function) for any domain as defined by assumption. This property is valid for the Kemeny distance (equation~\ref{eq:kem_cor}) and the removal of the two collinear points guarantees that a positive definite and finite distance is always observed for all \(n < \infty^{+} \in \mathbb{N}^{+}\). Therefore, the correlation (and therefore covariance) matrix is always positive definite, for any population which is neither collinear or degenerate. 

Condition 3 is satisfied by axiomatic assumption, wherein the unique points \(I,I^{\prime} \setminus \mathcal{M} \equiv Q\) (uniqueness guaranteed by the Riesz representation theorem for a Hilbert space, upon a topological function space of monotone affine-linear invariance). For arbitrary random sampling upon the Kemeny neighbourhood for finite \(n\) though, there are exist only 2 \(\kappa\) mappings \(\mathcal{M} \setminus Q\)  which are collinear for any pair of independent random variables. The probability of observation (see also Theorem~\ref{thm:clt_kem}) quickly then tends to 0 in the limit wrt \(n\): \[ \Pr(\mathcal{M} \cap {Q}) = \lim_{n\to\infty^{+}} \frac{|Q| = 2}{n^{n} - n} = 0.\] Exogeneity in turn follows from the Riesz representation theorem for any Hilbert space, as independent random sampling follows under axiomatic assumption of the theorem domain such that \(\lim_{n \to \infty^{+}} \tfrac{2}{n^{n}-n}\). The homoscedasticity assertion follows by definition of the Kemeny variance \(0<\sigma^{2}_{\kappa}<\infty^{+}\) whereupon only linearly comparable scores may be ordered, as there may only exist one set of permutations upon any common population function (Cumulative Distribution Function). This excludes therefore the possibility of unique measurement upon periodic functions, which is valid by axiomatic assumption, else there must exist realisations upon the extended real line which may not be validly assessed by equation~\ref{eq:kem_score}. This completes the proof that the Kemeny correlation satisfies all necessary requirements of a Gauss-Markov estimator.
\end{proof}

The non-negative definiteness of each variable is given by the square of a skew-symmetric matrix (whose transpose provides a negation which is then squared), with 0 obtained only upon \(n\) degenerate sample distributions, and is therefore otherwise excluded from consideration upon \(\mathcal{M}\). The sum of any such bivariate sequence of \(n^{2}-n\) such numbers, a given consequence for the skew-symmetric matrix, which imposes a diagonal of \(n\) 0's, multiplied by \(\sqrt{0.5}^{2}\) must almost surely be contained in the interval \((0,\dots,n^{2}-n]\), which by the properties of a metric space, is almost surely positive and also trivially confirmed to be finite for all finite \(n\).
\subsection{Probability regularity of the Kemeny distance and all affine-linear transformations}
Assume without loss of generality that the extended real vector space of \(p\) variates are expressible upon a positive definite variance-covariance matrix \(\Xi_{p\times p}\). With the fulfilment of the definition of the observation of a positive finite variance for any non-constant random sequence, i.e., a variable, along with the compact support of the neighbourhood about 0 for the signed distance measure on \(U(\mathcal{M})\) (equation~\ref{eq:kem_variance}), the distribution of the distances upon \(U(\mathcal{M})\) satisfies the necessary conditions to be sub-Gaussian \parencite[Ch.~1]{buldygin2000}.


The expansion to the necessity of the minimum necessary moment sequence (i.e., \(n\)is not a sufficient population parameter) is now discussed. A sub-Gaussian variable \(\xi\) is said to be strictly sub-Gaussian if and only if
\begin{definition}
\label{def:stict_sg}
\begin{equation}
\label{eq:strict}
E(e^{\lambda\xi}) \le e^{\frac{\lambda^{2}\sigma^{2}_{\xi}}{2}}, \forall~ \lambda \in \overline{\mathbb{R}} \equiv \|\xi\|_{\kappa} \le \sigma_{\xi} = \|\xi\|\ell_{2}(\Omega).
\end{equation}
\end{definition}

\begin{theorem}
\label{thm:strict_subgauss}
The distribution of the Kemeny distance is strictly sub-Gaussian (Definition~\ref{def:stict_sg}) for any finite sample \(n < \infty^{+}\), and is therefore stable.
\end{theorem}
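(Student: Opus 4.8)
The plan is to prove strictness in the precise sense of Definition~\ref{def:stict_sg}: for any centred random variable the optimal sub-Gaussian proxy variance always dominates the true variance (matching the second-order Taylor coefficients of the two sides of equation~\ref{eq:strict} at $\lambda = 0$), so the reverse inequality $\|\xi\|_{\kappa} \le \sigma_{\xi}$ is the only thing that needs proof. Concretely, writing $\xi = \rho_{\kappa}(x,y) - \tfrac{n^{2}-n}{2}$ for the centred distance, I would establish $E(e^{\lambda\xi}) \le e^{\lambda^{2}\sigma_{\xi}^{2}/2}$ with $\sigma_{\xi}^{2}$ the \emph{exact} variance, rather than merely some proxy. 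I would first note why the obvious route fails: a Hoeffding-type bound applied to the bounded support guaranteed by Lemma~\ref{lem:kem_bounded} yields proxy variance of order $(n^{2}-n)^{2}$, which dwarfs the true variance of order $n^{3}$. This is exactly why boundedness and symmetry (Lemma~\ref{lem:even}) already give ordinary sub-Gaussianity via Buldygin--Kozachenko but cannot give strictness; a finer decomposition is required.

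The core of the argument is a decomposition of $\xi$ into independent summands under the i.i.d.\ (null) sampling model. Exploiting the monotone affine-linear (relabelling) invariance of the Kemeny metric, I would condition on one variate being the identity ordering, so that the centred distance is an affine image of the discordance count of a uniformly drawn ranking, giving $\xi = 2\sum_{j=2}^{n} V_{j}$, where the $V_{j}$ are the centred inversion-table (Lehmer-code) coordinates. These $V_{j}$ are mutually independent, each symmetric about $0$ by Lemma~\ref{lem:even}, each a centred discrete uniform on a consecutive integer range with variance $\tfrac{j^{2}-1}{12}$, and $\sum_{j} \mathrm{Var}(2V_{j}) = \sigma_{\xi}^{2}$ recovers the exact variance of the statistic on $S_{n}$.

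Strictness then follows from two facts. First, each centred discrete uniform $V_{j}$ is itself \emph{strictly} sub-Gaussian, its optimal proxy variance coinciding with $\tfrac{j^{2}-1}{12}$; the flat mass of the uniform is precisely what rules out the excess-kurtosis obstruction that leaves a generic symmetric bounded variable only weakly sub-Gaussian. Second, strict sub-Gaussianity tensorises under independence: since $E(e^{\lambda\xi}) = \prod_{j} E(e^{2\lambda V_{j}}) \le \prod_{j} e^{(2\lambda)^{2}\mathrm{Var}(V_{j})/2} = e^{\lambda^{2}\sigma_{\xi}^{2}/2}$, the proxy variances add to exactly the true variance. This delivers $\|\xi\|_{\kappa} = \sigma_{\xi}$, hence the norm equivalence of Definition~\ref{def:stict_sg}, from which stability is immediate.

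I expect the main obstacle to be the strict sub-Gaussianity of the individual components together with the treatment of ties. For the components, the subtlety is that symmetry and boundedness do not by themselves force a matching proxy variance, so I would verify the scalar inequality $E(e^{\lambda V_{j}}) \le e^{\lambda^{2}(j^{2}-1)/24}$ directly, reducing it to showing that the log-moment-generating function of the discrete uniform is dominated by the matching parabola (a convexity comparison in which the second derivatives agree at the origin and the remainder stays non-negative); this is the one genuinely analytic step. For ties the clean independent decomposition is no longer exact, since a tied ranking is not a permutation and the tie-augmented population variance is instead that of equation~\ref{eq:kem_population_variance}; here I would argue by a contraction/conditioning step, showing that coalescing values can only contract each $V_{j}$ towards $0$ and thereby reduce both the true and the proxy variance, so that the strict bound established on $S_{n}$ persists across all of $\mathcal{M}\setminus S_{n}$, consistent with $\sigma^{2}_{\kappa}(\mathcal{M}\setminus S_{n}) \in (0,1)$ noted above.
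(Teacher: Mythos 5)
Your route is genuinely different from the paper's. The paper's own proof never engages the moment-generating-function inequality of Definition~\ref{def:stict_sg}: it obtains ordinary sub-Gaussianity from the conjunction of bounded support (Lemma~\ref{lem:kem_bounded}) and finite variance, asserts strictness by declaring the bivariate inner product to be dominated by the Euclidean \(\ell_{2}\)-norm ``uniformly across \(U(\mathcal{M})\)'', and then spends the remainder on a contrapositive showing that the finite-\(n\) distribution is not exactly Gaussian --- which establishes non-Gaussianity, not the inequality \(E(e^{\lambda\xi}) \le e^{\lambda^{2}\sigma_{\xi}^{2}/2}\) with the exact variance. Your decomposition of the centred discordance count into independent centred discrete uniforms via the inversion table, followed by tensorisation, is the correct and essentially standard way to obtain the exact proxy variance on \(S_{n}\): the coordinates are independent, their variances add up to the variance of the statistic, and the scalar inequality for a symmetric discrete uniform is a known and provable fact (your ``one genuinely analytic step''). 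On \(S_{n}\) your argument is therefore more concrete and strictly stronger than what the paper writes down; it proves the displayed condition in equation~\ref{eq:strict} rather than restating it.

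The gap is the extension to \(\mathcal{M}\setminus S_{n}\), and it is not a technicality here: ties are the entire subject of the paper, and the population variance against which the theorem is calibrated is equation~\ref{eq:kem_population_variance}, taken over all \(n^{n}-n\) tied rankings rather than the \(n!\) untied ones (already for \(n=2\) the two variance formulas disagree, so the object you analyse is not the object the theorem is about). Your ``contraction/conditioning'' step is a hope, not an argument: coalescing values does not act coordinatewise on the Lehmer code, so after coalescing the summands are no longer independent and the tensorisation that delivered the additive, exact proxy variance is unavailable; moreover the uniform measure on \(\mathcal{M}\) is not the push-forward of the uniform measure on \(S_{n}\) under any tie-coalescing map, so a monotone comparison of the two moment generating functions does not follow from a pointwise comparison of summands. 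Even your first reduction --- conditioning on one variate being the identity --- relies on the right-invariance of the metric under the group \(S_{n}\); \(\mathcal{M}\) is not a group and its uniform measure is not invariant in the required sense, so the bivariate problem on \(\mathcal{M}\) does not reduce to a univariate one. Without an independent (or at least exchangeable-pair or negatively associated) decomposition on \(\mathcal{M}\), the only generic bounds available are of Hoeffding or Azuma type, whose proxy variance is of order \((n^{2}-n)^{2}\) --- precisely the obstruction you identify at the outset. As written, the proposal proves the theorem on \(S_{n}\) but leaves it open on the domain the theorem actually concerns.
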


\begin{proof}
The distribution of the Kemeny distance is, for all finite \(n\), defined upon the neighbourhood \(U(\mathcal{M})\), and is therefore almost surely finite and measurable (the existence of a perfectly normal Hilbert space ensures the existence of a Borel \(\sigma\)-measure). The finite definiteness of all possibly observable variances has also already been established (Lemma~\ref{lem:kem_bounded}) upon \(\mathcal{M}\). A sub-Gaussian variable is defined as the conjunction of a finite variance and compact totally bounded support, and therefore the Kemeny distance is sub-Gaussian for finite \(n\) \parencite[Ch.~1]{buldygin2000}. For either the bivariate or univariate case then, we observe a sub-Gaussian distribution: in the univariate scenario, the spectrum of the distance measure is \((0,b], b \in (0,\ldots,\mathbb{N}^{+},\infty^{+})\), with the trivial restriction that \(n>0\) to allow \(b \ne 0\), and for which the expectation under affine transformation of 0, satisfies Condition~\ref{eq:strict}. For the bivariate scenario, the spectrum of the vector inner-product for the \(\kappa\) skew-symmetric matrices \(\{\kappa_{x},\kappa_{y}\}\) is also finite and totally bounded, and is also therefore strictly sub-Gaussian for any finite \(n\), as their distance is almost surely less than or equal to the Euclidean distance \(\ell_{2}\)-norm uniformly across \(U(\mathcal{M})\). Thus, both the marginal distribution and bivariate distributions are strictly sub-Gaussian for all finite \(n\).

For the contrapositive condition, consider if for finite \(n\) the distribution of the population of Kemeny distances were Gaussian. If this were so, then there would exist measurable events which occur with non-zero density upon the Gaussian pdf but which are not observable for any finite \(n\): 
\begin{equation}
\lim_{n\to\infty^{+}} \Pr(x \setminus U(\mathcal{M}) \in \pm \tfrac{n^{2}-n}{2}) =  \int_{\infty^{-}}^{-\tfrac{n^{2}-n}{2}} \frac {1}{\sigma {\sqrt {2\pi }}} e^{-{\frac {1}{2}}\left({\frac {x-\mu }{\sigma }}\right)^{2}}\diff{x} - \int_{-\tfrac{n^{2}-n}{2}}^{\infty^{+}} \frac {1}{\sigma {\sqrt {2\pi }}} e^{-{\frac {1}{2}}\left({\frac {x-\mu }{\sigma }}\right)^{2}}\diff{x} \ge 0, ~x \in \mathbb{R}.
\end{equation}
As such, there must exist for all populations of permutations observable distances upon the Gaussian probability distribution which do not occur upon the Kemeny distance support, and thus equality is only obtained in the limit, wherein equality to 0 holds. Thus for the distribution is not Gaussian over the domain \(x^{n \times 1},y^{n \times 1} \in \overline{\mathbb{R}}, n < \infty^{+}\).
\end{proof}

\begin{corollary}
If then the Kemeny distance is almost surely strictly sub-Gaussian, we observe that the distribution is centred at 0, and possesses symmetric tails of density which is less than or equal to that of a standard normal distribution. Therefore it immediately follows that four moments are sufficient to characterise the probability distribution upon any population of size \(\mathcal{M}\).
\end{corollary}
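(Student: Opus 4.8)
The plan is to chain the three premises already in hand. First I would note that centring at $0$ is immediate from Lemma~\ref{lem:unbiased}: the affine subtraction of the median distance $\tfrac{n^{2}-n}{2}$ places the expectation of the signed distance at the origin, so the first moment vanishes identically. Next I would invoke the even-function property (Lemma~\ref{lem:even}) of the $\kappa$-distance about $U(\mathcal{M})$: symmetry of the law about $0$ forces every odd central moment to vanish, in particular the third (skewness) moment. Combined with Theorem~\ref{thm:strict_subgauss}, which supplies the bound $E(e^{\lambda\xi}) \le e^{\lambda^{2}\sigma_{\xi}^{2}/2}$, the density is symmetric and dominated in the tails by the standard normal, establishing the first two assertions of the statement.

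The substantive claim is that four moments suffice to characterise the law, and I would approach it through the cumulant generating function. Strict sub-Gaussianity guarantees that the moment generating function is finite for every $\lambda \in \overline{\mathbb{R}}$, so all moments exist and the moment sequence grows slowly enough to satisfy Carleman's condition; the moment problem is therefore determinate and the distribution is fixed by its cumulant sequence $\{\gamma_{r}\}$. By the symmetry established above, every odd cumulant is zero, so the sequence collapses to the even cumulants $\gamma_{2},\gamma_{4},\gamma_{6},\dots$, of which $\gamma_{2}$ is the Kemeny variance (equation~\ref{eq:kem_population_variance}) and $\gamma_{4}$ measures the departure of the law from its Gaussian envelope.

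To reduce this to exactly four moments I would control the higher even cumulants through the strict equality $\|\xi\|_{\kappa} = \sigma_{\xi}$ of Definition~\ref{def:stict_sg}: since the $L_{2}$ norm already attains the sub-Gaussian norm, the cumulant generating function is squeezed between $0$ and $\lambda^{2}\sigma_{\xi}^{2}/2$ on the whole compact support, which pins the contribution of $\gamma_{r}$ for $r \ge 6$ at a scale negligible against $\gamma_{2}$ and $\gamma_{4}$. Consequently the Gram--Charlier (Edgeworth) representation of the density truncated after the fourth cumulant reproduces the law, so the mean, variance, vanishing skewness, and kurtosis form a sufficient sequence --- precisely the four moments required for the finite-sample Studentisation of Subsection~\ref{subsection:studentising}.

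The hard part will be this last step: making precise the sense in which the sixth and higher cumulants are negligible. A genuinely exact four-moment characterisation requires those cumulants to be determined by $\gamma_{2}$ and $\gamma_{4}$ rather than merely bounded, so the argument must lean on the strict norm-equality form of sub-Gaussianity rather than on sub-Gaussianity alone, and I would additionally need to verify that the truncated Edgeworth series remains a valid (non-negative) density on $U(\mathcal{M})$ for every finite $n$.
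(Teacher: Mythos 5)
Your route is genuinely different from, and considerably more ambitious than, the paper's own proof. The paper simply enumerates the four moments --- expectation $0$ after the affine centring, all odd moments $0$ by the even-function property (Lemma~\ref{lem:even}), the variance from equation~\ref{eq:kem_variance}, and a negative excess kurtosis forced by strict sub-Gaussianity --- and then \emph{asserts} that these four suffice to characterise the law. It makes no appeal to moment determinacy, cumulants, or any Edgeworth-type representation. Your first two paragraphs reproduce the paper's content (and your invocation of Carleman's condition is actually overkill here: the support $U(\mathcal{M})$ is compact, so the law is automatically determined by its \emph{full} moment sequence), but your third paragraph attempts to prove the one claim the paper leaves unargued.

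The gap you flag at the end is real, and it is not resolved by the paper either. Determinacy by the full moment sequence does not reduce to determinacy by the first four moments: a symmetric, compactly supported, strictly sub-Gaussian law is not in general pinned down by $\gamma_{2}$ and $\gamma_{4}$, and the squeeze $0 \le \log E(e^{\lambda\xi}) \le \lambda^{2}\sigma_{\xi}^{2}/2$ bounds the higher even cumulants without determining them; nor is a fourth-order Gram--Charlier truncation guaranteed to be a non-negative density on $U(\mathcal{M})$. The only way to close this for the Kemeny distance is to use information outside the corollary's stated hypotheses --- namely the paper's separate identification of the law as Beta-Binomial with support fixed by $n$, a two-parameter family within which four (indeed two) moments do suffice. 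If you restrict the claim to that parametric family, your argument closes; as a consequence of strict sub-Gaussianity and symmetry alone, it does not, and neither does the paper's.
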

\begin{proof}
It is self-evident that any power of an expectation of 0 is also 0, and therefore that all higher order odd-moments are equal to 0 for the Kemeny metric (thus also establishing an even symmetric function in distribution over all \(m\)). The second central moment is defined in equation~\ref{eq:kem_variance} and is therefore orthonormal of the expectation of 0; by the symmetry of the distance (Lemma~\ref{lem:even}) the skewness is 0, and therefore the final free moment to examine is the excess kurtosis \(\mu_{4}\) upon a finite sample. The Kemeny distribution however is strictly sub-Gaussian and therefore must posses negative excess kurtosis for any finite \(n\), and thus cannot be normally distributed. This unique probability distribution is therefore symmetric and unbiased, has a spread and measure of all scores which is almost surely positive and finite, possesses no skewness (by the even function property), and a finite negative excess kurtosis which tends to 0 asymptotically from below \parencite[Ch.~1]{buldygin2000}. This paradoxically contradicting the extrapolated asserted conclusion that the asymptotic performance holds for finite samples (i.e., is stable), as the distribution for the finite sample (strictly sub-Gaussian) distances cannot be normally distributed in the presence of ties upon finite samples.
\end{proof}


An interesting corollary is noted, wherein the Kemeny distance and correlation functions are shown to be related to the Kendall $\tau$ distance and affine linear functions thereof including correlations. The Kendall \(\tau\) function space is therefore a strict subset of the larger, complete, Kemeny space.

\begin{corollary}
 \label{cor:density}
 Assume that for $n>0$, the density of the permutation spaces for the respective measures are $\mathcal{M} = n^{n} - n$ for the Kemeny metric space, and $\mathcal{M}^{\prime} = {n}!$ for the Kendall metric space, where $\mathcal{M}^{\prime} \subset \mathcal{M}$. It would then follow that the number $\mathcal{M}$ must be well represented for the Stirling approximation of factorials \[{\sqrt {2\pi n}}\ \left(\frac {n}{e}\right)^{n}e^{\frac {1}{12n+1}} < {n}! < {\sqrt {2\pi n}}\ \left({\frac {n}{e}}\right)^{n}e^{\frac{1}{12n}},\] or else \(\mathcal{M}^{\prime} \subset \mathcal{M}\).\end{corollary}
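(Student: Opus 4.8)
The plan is to reduce the asserted inclusion $\mathcal{M}^{\prime} \subset \mathcal{M}$ to the single scalar inequality $n! \le n^{n} - n$ between the two cardinalities, and then to certify that inequality for all but the smallest $n$ using the upper Stirling bound already displayed in the statement, disposing of the remaining cases by direct arithmetic.

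First I would invoke the upper bound
\[
n! < \sqrt{2\pi n}\left(\frac{n}{e}\right)^{n} e^{\frac{1}{12n}},
\]
so that it suffices to establish $\sqrt{2\pi n}\,(n/e)^{n} e^{1/(12n)} \le n^{n} - n$. Dividing through by the strictly positive $n^{n}$ recasts the target as
\[
\sqrt{2\pi n}\; e^{-n + \frac{1}{12n}} \le 1 - n^{1-n}.
\]
The right-hand side is increasing in $n$ and converges to $1$, whereas the left-hand side tends to $0$, since the factor $e^{-n}$ overwhelms both the polynomial $\sqrt{2\pi n}$ and the bounded correction $e^{1/(12n)} \le e^{1/12}$. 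I would make this rigorous by exhibiting a small explicit threshold $N_{0}$ beyond which the left side lies below the right; a one-step ratio comparison $a_{n+1}/a_{n}$ for each side establishes the required monotonicity without further estimation, and in fact $N_{0} = 3$ already suffices (at $n = 3$ the left side is below $0.23$ while the right exceeds $0.88$).

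The only delicate points are the boundary indices, which I would settle by hand. At $n = 1$ one has $n^{n} - n = 0 < 1 = 1!$, so the inclusion fails outright; this is consistent with the exclusion of the $n$ degenerate constant vectors from $\mathcal{M}$ and should be flagged as a vacuous index rather than a counterexample. At $n = 2$ the counts coincide, $2! = 2 = 2^{2} - 2$, giving $\mathcal{M}^{\prime} = \mathcal{M}$ rather than a proper subset, so the strict inclusion $\mathcal{M}^{\prime} \subsetneq \mathcal{M}$ first appears at $n = 3$, where $6 < 24$. Accordingly the clean conclusion is $\mathcal{M}^{\prime} \subseteq \mathcal{M}$ for $n \ge 2$ with strictness for $n \ge 3$, the Stirling tail argument covering $n \ge 3$ and the single case $n = 2$ checked directly. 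The main obstacle is thus not analytic but one of careful bookkeeping at the degenerate endpoints, where the asymptotic Stirling comparison is not yet in force and the naive reading of $\mathcal{M}^{\prime} \subset \mathcal{M}$ as strict would be false.
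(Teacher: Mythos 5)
Your argument is correct and rests on the same engine as the paper's proof --- the upper Stirling bound reduces the comparison of $n!$ with $n^{n}-n$ to the observation that $\sqrt{2\pi n}$ is eventually (in fact immediately) dwarfed by $e^{n}$ --- but you organize it directly rather than by contradiction, and you are considerably more careful at the endpoints. The paper assumes $\mathcal{M}\subseteq\mathcal{M}^{\prime}$, pushes the Stirling upper bound through the logarithm to arrive at $2n\le\log(2\pi n)$, and declares this false for every $n\in\mathbb{N}^{+}$; but that reduction silently discards both the subtracted $n$ and the correction factor $e^{1/(12n)}$, and the resulting blanket claim overreaches at the boundary: at $n=1$ one has $1!=1>0=1^{1}-1$, so the asserted inclusion genuinely fails, and at $n=2$ the cardinalities coincide ($2!=2=2^{2}-2$), so the inclusion is not strict. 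The paper tacitly concedes this in its closing sentence (``holds uniformly for all $n>2$'') without reconciling it with the ``no $n\in\mathbb{N}^{+}$'' claim. Your version --- dividing by $n^{n}$, checking $n=3$ numerically, and establishing monotonicity of $\sqrt{2\pi n}\,e^{-n+1/(12n)}$ (decreasing) against $1-n^{1-n}$ (increasing) via a one-step ratio test --- yields the clean and correct statement $\mathcal{M}^{\prime}\subseteq\mathcal{M}$ for $n\ge2$ with strictness for $n\ge3$, which is what the corollary should actually assert. The only thing your write-up buys at a cost is verbosity; what it buys in exchange is a proof whose quantifiers are actually true.
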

 \begin{proof}
 Assume $\mathcal{M} \subseteq \mathcal{M}^{\prime}$, denoting that the $\tau$-distance space is of greater or equal density to the Kemeny $\rho_{\kappa}$-space. If this were so, it would therefore follow that \(2n \le \log\big(2\pi n\big)\) (see Lemma~\ref{lem:density} for completed expansion), which is strictly false by contradiction under all conditions, as there is no $n\in \mathbb{N}^{+}$ which satisfies this strict inequality. It follows then that $\mathcal{M}^{\prime} \subset \mathcal{M}$, and therefore that all measurements upon the Kendall distance are a strict subset within the Kemeny distance, $\tau(x,y) \subset \rho_{\kappa}(x,y)$. As the permutations with ties must always strictly subsume the set of all permutations without ties, the set difference must always be positive \(\lim_{n\to\infty^{+}} n^{n}-n - n! > 0,\) a relational which holds uniformly for all \(n>2\), and thereby ensures that the density of the set-space of all permutations with ties is almost surely greater.
 \end{proof}

 \begin{corollary}
The linear proportionality of the two distance functions is obtained upon the subset $x$ for which $\sigma_{\kappa}^{2} = \frac{n^{2}-n}{2}$, the set of $n!$ such permutations without ties. For this set, observe that $\tau(x,y) \in [0,\frac{n^{2}-n}{4}]$, whereas $\rho_{\kappa}(x,y) \in [0,\frac{n^{2}-n}{2}]$, and that for each permutation, $\tau(x,y) = \frac{1}{2}\rho_{\kappa}(x,y)$. \end{corollary}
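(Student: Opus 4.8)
The plan is to reduce both distance functions to a single combinatorial invariant, the number of discordant index pairs between the two rankings, after restricting the domain to the tie-free permutations \(S_n\) singled out by the hypothesis \(\sigma_\kappa^2 = \tfrac{n^2-n}{2}\). First I would record the structural consequence of having no ties: for every off-diagonal pair \((k,l)\), both \(\kappa_{kl}(x)\) and \(\kappa_{kl}(y)\) equal \(\pm\sqrt{0.5}\) and never take the tie-value \(0\), so the \(\kappa\)-images have full off-diagonal support. This is exactly the configuration in which the squared-entry sum of equation~\ref{eq:kem_variance} is maximal, which is the meaning of the stated variance hypothesis and, by Lemma~\ref{lem:kem_bounded}, keeps every quantity finite and bounded.

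Next I would evaluate the Hadamard inner product of equation~\ref{eq:kem_dist} term by term. Calling a pair concordant when \(\sign(x_k-x_l)=\sign(y_k-y_l)\) and discordant otherwise, and using skew-symmetry \(\kappa^\intercal_{kl}(y)=\kappa_{lk}(y)=-\kappa_{kl}(y)\), each concordant pair contributes \(-\tfrac12\) and each discordant pair \(+\tfrac12\) to the sum. Writing \(C\) and \(D\) for the concordant and discordant tallies, the leading median constant of equation~\ref{eq:kem_dist} cancels the concordant mass and leaves \(\rho_\kappa(x,y)\) equal to a fixed positive multiple of \(D\) alone. Evaluating at the identity permutation (\(D=0\)) realises the lower endpoint \(0\), and at the reverse permutation (every pair discordant) realises the upper endpoint, which pins the interval claimed for \(\rho_\kappa\).

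I would then run the identical reduction on the Kendall \(\tau\) distance. The only structural difference between the two constructions is the range of summation: the Kendall functional tallies each disagreement once over the unordered pairs \(\{k,l\}\) (the strict triangle \(k<l\)), whereas the Kemeny Hadamard sum runs over all ordered pairs and so records every disagreement twice. Consequently \(\tau\) and \(\rho_\kappa\) are the very same discordance functional up to this single counting factor, and dividing through yields \(\tau(x,y)=\tfrac12\rho_\kappa(x,y)\) for every permutation of \(S_n\); halving the \(\rho_\kappa\) extremum then gives the endpoint \(\tfrac{n^2-n}{4}\) for \(\tau\).

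The step I expect to be delicate is not any inequality but the normalisation bookkeeping: I must keep the factor \(\tfrac12\) arising from ordered-versus-unordered counting consistent with the \(\sqrt{0.5}\) scaling built into \(\kappa\) and with the median constant \(\tfrac{n^2-n}{2}\), so that the extreme permutations land exactly on the stated endpoints rather than on values off by a factor of two. Once both distances are expressed in the common coordinates \((C,D)\) this reconciliation is mechanical, and the resulting proportionality, together with the strict containment \(\mathcal{M}^\prime\subset\mathcal{M}\) of Corollary~\ref{cor:density}, completes the identification of the Kendall space as precisely the tie-free restriction of the Kemeny space.
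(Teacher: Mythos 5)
Your argument is correct in its essentials and reaches the key identity \(\tau(x,y)=\tfrac12\rho_{\kappa}(x,y)\) by a more explicit route than the paper. The paper's own proof works locally: it takes the single transposition \(x=\{a,b\}\to y=\{b,a\}\), observes that this unit move has Kendall distance \(1a\) but Kemeny distance \(2a\) (because the skew-symmetric \(\kappa\) matrix registers the disagreement at both \((k,l)\) and \((l,k)\)), and then asserts that the factor of \(2\) propagates to every distance by proportional rescaling (the subsequent Remark invokes induction from \(n=2\)). You instead carry out the global bookkeeping: writing the Hadamard sum of equation~\ref{eq:kem_dist} in concordance/discordance coordinates, using \(\kappa^{\intercal}_{kl}(y)=-\kappa_{kl}(y)\) to get \(-\tfrac12\) per concordant and \(+\tfrac12\) per discordant ordered pair, and letting the median constant \(\tfrac{n^{2}-n}{2}\) absorb the concordant mass so that \(\rho_{\kappa}\) is a pure discordance functional; the factor of \(2\) then falls out of ordered-versus-unordered counting for \emph{every} permutation simultaneously, with no appeal to an unproven extension from the unit case. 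The two proofs rest on the same underlying fact (each disagreement is double-counted in the ordered sum), but yours closes the gap the paper leaves between the \(n=2\) calculation and the general claim, which is worth having.

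The one loose end is exactly the normalisation issue you flagged and did not fully resolve. Your reduction gives \(\rho_{\kappa}(x,y)=D_{\mathrm{ord}}\), the ordered discordance tally, whose extremes on \(S_{n}\) are \(0\) and \(n^{2}-n\); correspondingly the unordered tally ranges over \([0,\tfrac{n^{2}-n}{2}]\). That agrees with the paper's main text (maximum Kemeny distance \(n^{2}-n\)) and with the Remark immediately following the corollary, but it disagrees by a uniform factor of \(2\) with the intervals \([0,\tfrac{n^{2}-n}{2}]\) and \([0,\tfrac{n^{2}-n}{4}]\) printed in the corollary itself. This discrepancy lives in the paper, not in your argument, and it does not affect the ratio \(\tau=\tfrac12\rho_{\kappa}\); but you should state explicitly which normalisation of the Kendall functional you are using (raw unordered discordance count versus the \((\sqrt{0.5})^{2}\)-scaled version) rather than asserting that the endpoints ``pin the interval claimed,'' since as computed they do not.
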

 \begin{proof}
As a direct consequence of the unit distance proportional to $a \in \overline{\mathbb{R}}^{+}$\footnote{Note that the explicit identification of scalar constant \(a = |\sqrt{.5}|\) is unnecessary.} for the $\tau$ function occurs such that for $ x = \{a,b\} \overset{\tau}{\to} y = \{b,a\} = \tau(x,y) = 1a.$ However, for $x = \{a,b\} \overset{\tau}{\to} y = \{b,a\} = \rho_{\kappa}(x,y) = 2a,$. This demonstrates that a distance of 1 upon the Kendall metric is half of the Kemeny metric and that when proportionally rescaled, the minimum non-existent distance of 1 of \(S_{n}\) measured by the Kemeny metric reflects the measurement wherein a tie occurs, but which are likewise not observed. As no ties occur upon the $\tau$ distance domain (\(S_{n}\)), the minimum distance of a tie relative to a total ordering is non-measurable, consistent with Corollary~\ref{cor:density}, and therefore an affine-linear bijection exists $\tau(x,y) = 1a \propto \rho_{\kappa} = 2a \ \therefore 2\times \tau(x,y) = \rho_{\kappa}(x,y), \forall\ (x,y) \in U(\mathcal{M}),$ assuming both random variables are non-degenerate and of finite common length \(n\). Therefore, every $\tau$-distance is validly and uniquely defined upon the $\rho_{\kappa}$-distance, but not vice-versa, as there exist no Borel \(\sigma\)-mappings for a measure space in which a tie is observed to occur upon Kendall's \(\tau\). Thus, the population of Kendall's \(\tau\) is a strict subset of the Kemeny \(U(\mathcal{M})\).
 \end{proof}
\begin{remark}
Note that this is equivalently more easily established by the defined support of \(\tau = \tfrac{n^{2}-n}{2}\) and \(\rho_{\kappa} = n^{2}-n\), which share common extrema (and origin) of a uniquely defined distance 0 for identical random variable pairs, and bivariate permutations which are proportionate by scalar \(2\) for every other distance. Induction from \(n=2; \mathcal{M} \cap S_{n}\) upon \(n\) immediately establishes the proportionality of the common distances. 
\end{remark}

For this linear function space then, we establish the asymptotic normality of the distance and correlation estimator:
\begin{lemma}~\label{lem:kem_asym_normal}
The bivariate strictly sub-Gaussian Kemeny metric space is asymptotically \(wrt n\) normally distributed.
\end{lemma}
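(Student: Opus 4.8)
The plan is to exploit the explicit bilinear structure of the Kemeny inner product in equation~\ref{eq:kem_cor} and recognise the estimator as a U-statistic of degree two, for which a classical central limit theorem applies. Writing $\sum_{k,l}\kappa_{kl}(x)\kappa_{kl}^{\intercal}(y) = \tfrac{1}{2}\sum_{k\neq l}\sign(x_k-x_l)\sign(y_k-y_l)$, the estimator is, up to the affine rescaling already established in Corollary~\ref{cor:density} and the surrounding remarks, a multiple of the Kendall concordance count with bounded symmetric kernel $h\big((x_k,y_k),(x_l,y_l)\big) = \sign(x_k-x_l)\sign(y_k-y_l)$, whose image lies in $\{-1,0,1\}$. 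The first step is therefore to express $\tau_{\kappa}$ exactly in this U-statistic form so that the standard asymptotic machinery of \textcite[Ch.~1]{hollander2014} becomes available.

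Second, I would verify the two regularity conditions that Hoeffding's theorem requires. Finiteness of the second moment of the kernel, $E(h^2)<\infty$, is immediate from the total boundedness established in Lemma~\ref{lem:kem_bounded} together with the strict sub-Gaussianity of Theorem~\ref{thm:strict_subgauss}: a kernel supported on $\{-1,0,1\}$ has all moments finite. The decisive condition is non-degeneracy of the H\'ajek projection $h_1(x_k,y_k) = E\big[h\big((x_k,y_k),(X,Y)\big)\big]$, namely that $\zeta_1 = \operatorname{Var}\big(h_1\big) > 0$. I would establish $\zeta_1>0$ from the strict positivity of the Kemeny variance $\sigma^2_{\kappa}\in(0,1]$ for any non-degenerate variable, excluding only the two collinear points $Q$ whose sampling probability vanishes in the limit (as in the proof of Theorem~\ref{thm:gauss-markov}); this places the statistic in the non-degenerate regime and yields the $\sqrt{n}$ convergence rate rather than a degenerate limit.

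Third, with these conditions in hand, Hoeffding's U-statistic CLT gives $\sqrt{n}\big(\tau_{\kappa} - E\tau_{\kappa}\big)\xrightarrow{d}\mathcal{N}(0,\,4\zeta_1)$, and the limiting variance is identified with the population variance $\dot{\sigma}^2_{\kappa}$ of equation~\ref{eq:kem_population_variance} after the affine rescaling. Equivalently, one may linearise the U-statistic by its H\'ajek projection into a sum of $n$ i.i.d. bounded terms plus an $o_P(\sqrt{n})$ remainder, whereupon the Lindeberg condition is met trivially by boundedness and the classical Lindeberg--Feller CLT delivers the same conclusion.

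I expect the main obstacle to be reconciling this asymptotic normality with the finite-sample behaviour already proved: by the corollary to Theorem~\ref{thm:strict_subgauss}, the distance has strictly negative excess kurtosis for every finite $n$ and hence is not normal at any finite sample size. The reconciliation is that the excess kurtosis tends to $0$ from below as $n\to\infty^{+}$, so the standardised statistic still converges in distribution to the Gaussian law; making this precise---showing that the fourth standardised cumulant vanishes at the rate implied by $\dot{\sigma}^2_{\kappa}$ while the projection variance $\zeta_1$ stays bounded away from zero---is the delicate quantitative step, which I would carry out by bounding the contribution of the second-order (degenerate) term in the Hoeffding decomposition.
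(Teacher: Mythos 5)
Your proposal is essentially correct, but it takes a genuinely different route from the paper. The paper's own argument is moment-theoretic and set-theoretic: it leans on the (empirically identified) Beta--Binomial form of the distance distribution, observes that all odd moments vanish by the even-function property while the excess kurtosis is negative for every finite \(n\) and tends to \(0\) monotonically from below, and then closes with a cardinality/contradiction argument that the diverging population \(\mathcal{M}\) forces the limiting law to be Gaussian via the asymptotic rank--score bijection. Your route instead recognises \(\tau_{\kappa}\) as a degree-two U-statistic with bounded kernel \(\sign(x_k-x_l)\sign(y_k-y_l)\) and invokes Hoeffding's CLT through the H\'ajek projection. What your approach buys is substantial: an explicit \(\sqrt{n}\) rate, an explicit limiting variance \(4\zeta_1\), and a clean separation of the degenerate and non-degenerate regimes --- none of which the paper's kurtosis argument delivers (convergence of the first four moments alone does not imply convergence in distribution without further control). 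What the paper's approach buys is consonance with its own Beta--Binomial parametrisation and the Studentisation machinery built on \(\dot{\sigma}^2_{\kappa}\). Two details in your write-up deserve tightening: first, non-degeneracy should be argued for the variance of the \emph{conditional} expectation \(h_1\), not inferred directly from \(\sigma^2_{\kappa}\in(0,1]\) --- under independence \(\zeta_1 = E\bigl[(2F_X(X)-1)^2\bigr]\,E\bigl[(2F_Y(Y)-1)^2\bigr]\), which is positive precisely when neither marginal is degenerate, and this is the statement you need; second, the identification of \(4\zeta_1\) with the rescaled \(\dot{\sigma}^2_{\kappa}\) of equation~\ref{eq:kem_population_variance} is an assertion that requires verification, since the paper derives that quantity from uniform sampling over \(\mathcal{M}\) rather than from the projection variance of an i.i.d. bivariate sample.
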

\begin{proof}
Consider the kurtosis of the probability distribution of the Kemeny distance function for the Beta-Binomial distribution. As a function of \(n\), the random variable functional representing the Kemeny distance between bivariate independent random variables, is trivially observed to be a monotonically increasing function. By definition of strict sub-Gaussianity the excess kurtosis is always negative, for all finite \(n\) with asymptotic normality (and thus obtains an excess kurtosis of 0 on in the limit wrt \(n\)). By the existence of all odd-moments equal to 0 and a positive variance, the monotonically convergent kurtosis tends to 3 from below in the limit of \(n\). This maintains the strict sub-Gaussianity over all finite sample sizes, which becomes degenerate for the asymptotic limit of \(n\to\infty^{+}\) and which is unbiased for all \(n\) over the population of \(\mathcal{M}\). 

Proof by contradiction assures the asymptotic normality of the set \(U(\mathcal{M})\). Upon the population limit \(n\to\infty^{+}\), the cardinality of the set diverges from the real line and there is no collection of finite distances or cardinalities, as \(U^{*}(\mathcal{M}) \supset U(\mathcal{M})~\forall n\). Mapping said ratio of the elements onto the finite support of the neighbourhood satisfies the upper-limit of Kolmogorov's stronger order property, yet also produces a degenerate bound upon an infinite set. Upon said set, the distances upon an infinite set of elements may no longer be uniquely ordered upon a finite distance relative to the unique origin \(I_{n}\), which is realised as equivalent to a probability measure of size \(0 = (U^{*}(\mathcal{M}))^{-1}\), and therefore cannot be stably strictly sub-Gaussian. Therefore, upon the strict bijection between rank and score measure distances, the Gaussian distribution is sufficient and uniquely capable of denoting said linear function space. Upon the population, by the weak law of large numbers a perfectly Gaussian random variable is observed upon which the linear bijection of rank and score is observed to hold almost surely. Therefore, the divergent Kemeny distance is bijectively collinear with the Gaussian probability density function and its integral, once centred and scaled, and thus both measure spaces upon a common distribution is uniquely known by the Euclidean distance.
\end{proof}

While the neighbourhood of the Kemeny metric space is strictly sub-Gaussian and only asymptotically normally distributed, the operation function space upon the square vector matrix characterisation (that of the mapping \(\kappa\)) presents implementation difficulties upon finite samples (as we may typically assume, but rarely verify, that an asymptotically normal distribution is also normally distributed upon finite samples). This relationship is developed later, in Subsection~\ref{subsection:studentising}. 

\begin{lemma}
\label{lem:lsn}
The Kemeny metric satisfies the strong law of large numbers for any identically and independently distributed pair of independent random variables as a linear distance function.
\end{lemma}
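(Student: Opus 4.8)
The plan is to establish the strong law of large numbers (SLLN) by verifying the hypotheses of Kolmogorov's criterion for the sequence of independent, identically distributed Kemeny distances, and then invoking the almost-sure convergence of the empirical mean to its expectation. First I would fix the vector length \(n\) and regard each independent draw of a bivariate pair \((x,y)\) as producing, through the \(\kappa\) mapping of equation~\ref{eq:kem_score} and the Hadamard inner-product of equation~\ref{eq:kem_dist}, a single real-valued realisation of the centred distance on \(U(\mathcal{M})\). Because the \(\kappa\) map is a deterministic (Borel-measurable) function of the sampled coordinates and sampling is i.i.d.\ by assumption, the induced sequence of distances \(\{\rho_{\kappa}^{(1)},\rho_{\kappa}^{(2)},\ldots\}\) inherits the i.i.d.\ property on the real line, so the entire problem is reduced to a scalar SLLN.

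Second, I would record the two moment conditions already in hand. The centred expectation is finite and equal to \(0\) by Lemma~\ref{lem:unbiased}, and the variance is bounded by Lemma~\ref{lem:kem_bounded} together with equation~\ref{eq:kem_variance}, which confines \(\sigma_{\kappa}^{2}\in(0,1]\) (equivalently, every realisation is confined to the compact, totally bounded neighbourhood \([-\tfrac{n^{2}-n}{2},\tfrac{n^{2}-n}{2}]\)). Boundedness of the summands is the decisive fact: it guarantees a finite first absolute moment, so Kolmogorov's i.i.d.\ SLLN applies verbatim, and it also trivially verifies Kolmogorov's strong-order criterion, since \(\sum_{m}\operatorname{Var}(\rho_{\kappa}^{(m)})/m^{2}\le\sum_{m}m^{-2}=\pi^{2}/6<\infty\).

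Third, with either criterion in force I would conclude that the empirical mean \(\tfrac{1}{N}\sum_{m=1}^{N}\rho_{\kappa}^{(m)}\) converges almost surely, as \(N\to\infty^{+}\), to the population expectation---\(0\) for the centred distance and \(\tfrac{n^{2}-n}{2}\) for the uncentred one---matching the value asserted in Lemma~\ref{lem:unbiased}. The linearity qualifier in the statement is then discharged by noting that this empirical mean is itself an affine-linear functional of the underlying \(\kappa\) matrices, and that closure of the Kemeny space under addition (its Hilbert structure, Theorem~\ref{lem:hilbert}) ensures the almost-sure limit remains inside the same signed distance space, so the law holds ``as a linear distance function.''

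I expect the main obstacle to be not the convergence itself---boundedness renders the probabilistic core routine---but rather the careful reconciliation of what is being averaged. Because the ambient \(\kappa\) matrix grows with \(n\), one must keep the SLLN indexed over the number of independent samples \(N\) at a \emph{fixed} vector length \(n\), rather than conflating it with the asymptotic-in-\(n\) normality of Lemma~\ref{lem:kem_asym_normal}. The argument must therefore make explicit that the i.i.d.\ sequence is the sequence of scalar distance realisations, so that the measurable lift from the real line back into the compact Kemeny neighbourhood is unambiguous and the strong-order criterion is applied to the correct index.
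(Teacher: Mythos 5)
Your proposal is correct and follows essentially the same route as the paper: both arguments rest on the total boundedness of the Kemeny distance (Lemma~\ref{lem:kem_bounded}) to secure finite first and second moments, and then invoke a standard strong law (the paper via Markov's inequality and the finite supremum \(\tfrac{n^{2}-n}{2}\); you via Kolmogorov's i.i.d.\ criterion and the variance-sum test). Your version is the more carefully stated of the two --- in particular, your explicit insistence that the i.i.d.\ sequence being averaged is the sequence of scalar distance realisations indexed by the number of independent draws at \emph{fixed} \(n\), not the \(n\)-asymptotics, resolves an ambiguity the paper's own proof leaves open.
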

\begin{proof}
By Markov's inequality the existence of a finite expectation satisfies the strong law of large numbers, which is guaranteed for any finite \(n\) upon the Kemeny metric, as by the totally bounded nature of the Kemeny distance function, follows \(E(I\{\kappa(x_{i}) \le \kappa(x)\}) < \infty\) for all finite \(n\) then. This condition is equivalent to establishing that \(\sup{\sqrt{(\pm \frac{n^{2}-n}{2})^{2}}} = \frac{n^{2}-n}{2} \equiv \sup{\rho_{\kappa}(x,y)}, \,\forall~  n \in \mathbb{N}^{+}\), is easily verified. A second condition trivially holds that for any finite vector sequence \(\{x_{i}\}_{i=1}^{n}\), a linear ordering may be obtained, such as by using the image of the \(\kappa\) mapping from any random variable independently and identically sampled upon the extended real line. Thus, the strong law of large numbers is observed to hold for any finite sample upon \(U(\mathcal{M})\) for which \(\kappa(\cdot)\) may be validly measured.
\end{proof}
By the strong law of large numbers then, we ensure that a linear function converges asymptotically to the true distribution, and one which does so stably for all finite \(n\), by the strict sub-Gaussian nature of its distribution (Theorem~\ref{thm:strict_subgauss}). Further, for each \(n\) we may partition the observation of an independently random vector \(x\) within \(\mathcal{M}\) as a consequence of the orthonormal relationship conditionally observed in isolation (i.e., assuming i.i.d.). Therefore, an unbiased estimator of the CDF of \(F_{\kappa}\) arises, which by Lemma~\ref{lem:lower_equality} and Lemma~\ref{lem:upper_equality} are shown to converge to the true distribution by the strong law of large numbers, regardless of the non-linear score relationship by the Total Variational Distance upon the \(\sigma_{\kappa}\)-algebra of the Kemeny metric space. The distribution however is not self-evident, however by empirical moment matching, we obtained the distribution of the Kemeny distance to be Beta-Binomial in distribution for all finite \(n\). The compound nature of the probability distribution is a conventional problem, however as the support is almost surely compact and totally bounded with regular probability, the Binomial nature of the distribution may be removed for any sample by multiplication of \(U(\mathcal{M}) \cdot \sup U(\mathcal{M})\), which cumulatively normalises the support from 0 to 1.

\subsection{Sufficiency of the Kemeny parameter space}
Given a set $x$ of independently distributed data conditioned upon an unknown parameter vector $\theta$, a sufficient statistic is a function $T(x)$ whose value contains all information necessary to compute any estimate of the parameter. By the factorisation theorem, for a sufficient statistic $T(x)$, the probability density may be then expressed as $f_{x} = h(x)g(\theta,T(x))$; the unknown parameter vector $\theta$ is composed of functions upon the original data space which are sufficient, in that for the conditional probability distribution of the data $x$, given statistic $t = T(x)$, $(x \perp \theta \mid t)$. A statistic is sufficient if its mutual information with $\theta$ is strictly equal to the information between $x$ and $\theta$ (by the monotonicity of the squared function, all non-equal elements are strictly less). Thus, it will been shown that the sufficient statistic $\mu_{\kappa} = \sum_{j=1}^{n}\sum_{i=1}^{n} \kappa_{ij}(x)$ is sufficient for the central tendency of any distribution which is independently distributed upon the reals. 

By additionally satisfying $\mu_{\kappa}^{1}(x) = 0$ for any $x \in \overline{\mathbb{R}}^{n \times 1}$ is found a sufficient and unbiased estimator of location for which $E(\kappa_{i}(x)) = E(\kappa(x)) = 0\; \forall\, \theta \implies \kappa(x) = 0\; \forall\, \theta,$ almost everywhere (Lemma~\ref{lem:kem_asym_normal}). Therefore, the Kemeny metric satisfies the necessary conditions of the Lehmann-Scheff\'{e} theorem for completeness, and therefore it is also the the best unbiased estimator function, as was already shown by Theorem~\ref{thm:gauss-markov}. Sufficiency therefore follows by the asymptotic normality upon the Kemeny measure space, and for all \(n\) by the strict sub-Gaussianity. 

\begin{theorem}~\label{thm:clt_kem}
Let \(\{x_{1},\ldots,x_{n}\}\) be a random sample of size $n$ drawn from a distribution of expected value given by $\mu$ and finite variance given by $\sigma^{2}$. By the law of large numbers upon the Kemeny metric function space (Lemma~\ref{lem:lsn}), the sample converges almost surely to the expected value $\mu$ for the limit wrt $n$, as proven here by the Lindeberg-L\'{e}vy Central Limit Theorem:

Suppose $\{x_{1},\ldots ,x_{n}\}$ is a sequence of i.i.d. random variables with $E[x_{i}]=\mu$ and $\operatorname{Var}[x_{i}]=\sigma^{2}<\infty$. Then as $n$ approaches infinity, the random variables \({\sqrt {n}}({\bar {x}}_{n}-\mu )\) converge in distribution to a normal distribution \(\mathcal{N}(0,\sigma ^{2})\): \[\sqrt {n}\left({\bar {x}}_{n}-\mu \right)\ \xrightarrow {d} \ {\mathcal {N}}\left(0,\sigma ^{2}\right).\]
\end{theorem}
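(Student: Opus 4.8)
The plan is to verify that the centred Kemeny-distance variables satisfy the three hypotheses of the Lindeberg--L\'{e}vy theorem and then to run the classical characteristic-function argument, closing with L\'{e}vy's continuity theorem. The i.i.d.\ requirement holds by the sampling assumption of the theorem domain; the finite first moment \(\mu\) (indeed \(\mu=0\) after the affine centring of equation~\ref{eq:kem_cor}) is supplied by Lemma~\ref{lem:unbiased}; and the finite second moment \(0<\sigma^{2}<\infty\) follows from the total boundedness of \(U(\mathcal{M})\) established in Lemma~\ref{lem:kem_bounded} together with the explicit variance expression~\ref{eq:kem_variance}. In fact Theorem~\ref{thm:strict_subgauss} gives the far stronger conclusion that every moment is finite and uniformly controlled, so the characteristic function is not merely continuous but analytic in a neighbourhood of the origin.

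The steps, in order, would be as follows. First set \(Y_{i}=x_{i}-\mu\) so that \(E[Y_{i}]=0\) and \(\operatorname{Var}[Y_{i}]=\sigma^{2}\), and write the normalised partial sum as \(S_{n}=\tfrac{1}{\sqrt{n}}\sum_{i=1}^{n}Y_{i}=\sqrt{n}(\bar{x}_{n}-\mu)\). Second, use independence to factor the characteristic function as \(\varphi_{S_{n}}(t)=\bigl[\varphi_{Y}(t/\sqrt{n})\bigr]^{n}\). Third, Taylor-expand the single-variable characteristic function about \(0\), \(\varphi_{Y}(s)=1-\tfrac{1}{2}\sigma^{2}s^{2}+o(s^{2})\), where the remainder is legitimately \(o(s^{2})\) because the second moment is finite and the sub-Gaussian tail bound of Theorem~\ref{thm:strict_subgauss} makes the control uniform. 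Fourth, substitute \(s=t/\sqrt{n}\) and pass to the limit, obtaining \(\varphi_{S_{n}}(t)=\bigl[1-\tfrac{\sigma^{2}t^{2}}{2n}+o(1/n)\bigr]^{n}\to e^{-\sigma^{2}t^{2}/2}\), which is exactly the characteristic function of \(\mathcal{N}(0,\sigma^{2})\). Fifth, invoke L\'{e}vy's continuity theorem to upgrade pointwise convergence of characteristic functions to convergence in distribution, yielding \(\sqrt{n}(\bar{x}_{n}-\mu)\xrightarrow{d}\mathcal{N}(0,\sigma^{2})\). The almost-sure convergence of \(\bar{x}_{n}\) to \(\mu\) asserted in the preamble is separately furnished by the strong law established in Lemma~\ref{lem:lsn}.

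The main obstacle I expect is not the analytic limit but a modelling subtlety hidden in the hypotheses: the Kemeny neighbourhood \(U(\mathcal{M})=[-\tfrac{n^{2}-n}{2},\tfrac{n^{2}-n}{2}]\) and its variance~\ref{eq:kem_population_variance} both depend on \(n\), so the distance summands are not literally identically distributed across sample sizes, and the classical Lindeberg--L\'{e}vy statement, which fixes a common law \(\varphi_{Y}\) independent of \(n\), does not apply verbatim. To be rigorous one must either fix the ambient population and apply Lindeberg--L\'{e}vy to the fixed-law sample \(\{x_{i}\}\), reading \(\mu,\sigma^{2}\) as the population constants, or recast the sum as a triangular array and invoke the Lindeberg--Feller theorem, in which case the uniform sub-Gaussian bound of Theorem~\ref{thm:strict_subgauss} is precisely what verifies the Lindeberg condition \(\tfrac{1}{\sigma^{2}}\sum_{i}E[Y_{i}^{2}\,\mathbf{1}\{|Y_{i}|>\varepsilon\sqrt{n}\}]\to 0\), since uniformly sub-Gaussian tails force the truncated contributions to vanish. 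I would therefore make the triangular-array reading explicit and discharge the Lindeberg condition through the sub-Gaussian tail, which is the one place where the Kemeny-specific structure does genuine work rather than merely reciting a textbook CLT.
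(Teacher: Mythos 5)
Your proposal is correct, and it is in fact considerably more complete than the argument the paper gives. The paper's own proof does not run the characteristic-function machinery at all: it verifies three conditions --- an informal ``atomless measure'' substitution \(\lim_{n\to\infty^{+}}\tfrac{1}{n}\equiv|\mathcal{M}|^{-1}=0\) asserting that every non-degenerate vector is measurable under \(\kappa\), the existence of the first moment via Lemma~\ref{lem:unbiased} and Lemma~\ref{lem:markov}, and the finiteness of the variance via Lemmas~\ref{lem:kem_bounded} and~\ref{lem:chebyshev} --- and then cites the central limit theorem as a black box. You verify essentially the same moment hypotheses (citing the same lemmas for the mean and variance) but then actually carry out the classical proof: centring, factorisation of the characteristic function under independence, the second-order Taylor expansion, the limit \(\bigl[1-\tfrac{\sigma^{2}t^{2}}{2n}+o(1/n)\bigr]^{n}\to e^{-\sigma^{2}t^{2}/2}\), and L\'{e}vy's continuity theorem. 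What your approach buys, beyond rigour, is the explicit identification of the one genuine subtlety here: the Kemeny support \(U(\mathcal{M})\) and the population variance of equation~\ref{eq:kem_population_variance} scale with \(n\), so the summands do not share a fixed law across sample sizes, and the theorem as literally stated (fixed \(\mu\), \(\sigma^{2}\)) either must be read against a fixed ambient population or recast as a triangular array with the Lindeberg condition discharged by the sub-Gaussian bound of Theorem~\ref{thm:strict_subgauss}. The paper gestures at this issue only through the \(|\mathcal{M}|^{-1}\) substitution and never resolves it; your triangular-array reading is the correct repair, with the caveat that the Lindeberg verification would still need the variance proxies to scale so that the truncated contributions at level \(\varepsilon\sqrt{n}\,s_{n}\) genuinely vanish, which requires slightly more than boundedness of each row. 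That is a refinement worth spelling out, but the structure of your argument is sound and strictly stronger than what the paper provides.
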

\begin{proof}
The necessary conditions to validly apply the (generalised) central limit theorem to the strictly sub-Gaussian Kemeny metric are threefold. First, substitute $\lim_{n\to\infty^{+}} \frac{1}{n} \equiv |\mathcal{M}|^{-1} = 0,$ a necessary conversion for an atomless probability measure for any finite \(n\). For finite $n$ then, the population of all permutation vectors with repetition upon $n$ independently sampled observations is not atomless, but of population cardinality $|\mathcal{M}| = n^{n}-n$ elements, explicitly rejection all \(n\) degenerate yet observable constant vectors of length \(n\). This produces a complete measure space of probability \(\tfrac{1}{|\mathcal{M}|}\), for which the probability of non-observability tends to 0, and the probability of observation of a point which is non-measurable over \(\mathcal{M}\) tends to 0, almost surely and thus the entire non-degenerate extended real domain is linearly measurable: \[1 -\lim_{m\to\infty^{+}} \Pr(\mathcal{M} \setminus\kappa(x)) = 1.\] 

This is confirmed by the Markov inequality, which shows the absence of any non-measurable finite vectors upon the mapping function \(\kappa\), ignoring degenerate distributions, and which tends to 0 for any sequence of increasing length \(n\). 
Second, let $\mu^{1}_{\kappa}$ exist, as proven in both Lemma~\ref{lem:unbiased} and Lemma~\ref{lem:markov} for the Kemeny metric. Finally, let the variance be finite for any population, which has also proven valid upon the Kemeny metric space (Lemmas~\ref{lem:kem_bounded} and \ref{lem:chebyshev}). By these three conditions it therefore follows that the Central Limit Theorem is a valid construction upon the Kemeny metric space, ensuring a.s. linear convergence in distribution for any sequence of $p$ unknown yet identified linear estimands, which are strictly sub-Gaussian distributed variables for finite \(n\).  
\end{proof}

An equivalent conclusion for the Central Limit Theorem using Lyapunov's condition (Corollary~\ref{cor:lyapunov}) and Kolmogorov's law (Corollary~\ref{cor:kolmogorov}) upon the Kemeny metric space are given in the Appendix.

\begin{theorem}\label{thm:gc}
The Kemeny metric function over \(\mathcal{M}\) satisfies the Glivenko-Cantelli theorem: Let \(\{x_{i}\}_{i=1}^{\mathcal{M}}\) be an independently distributed uniform sequence of random variables with distribution function \(F \in \overline{\mathbb{R}}\). Then \[\sup_{x\in\overline{\mathbb{R}}} | \hat{F}_{m}(x) - F(x)| \to 0, a.s.\]
\end{theorem}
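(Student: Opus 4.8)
The plan is to reduce this to the classical two-step proof of Glivenko--Cantelli, recast in the Kemeny framework: first obtain pointwise almost-sure convergence of the empirical distribution function $\hat{F}_{m}$ directly from the strong law of large numbers already established for this metric (Lemma~\ref{lem:lsn}), and then promote pointwise to uniform convergence by exploiting the monotonicity and $[0,1]$-boundedness of both $\hat{F}_{m}$ and $F$ together with the total boundedness of the support (Lemma~\ref{lem:kem_bounded}).

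First I would fix $x \in \overline{\mathbb{R}}$ and observe that $\hat{F}_{m}(x) = \tfrac{1}{m}\sum_{i=1}^{m}\mathbf{1}\{x_{i} \le x\}$ is an empirical mean of i.i.d.\ Bernoulli indicators sharing the common expectation $F(x) = \Pr(x_{i} \le x)$. Since these indicators are totally bounded in $[0,1]$ and hence possess a finite expectation on $\mathcal{M}$, Lemma~\ref{lem:lsn} yields $\hat{F}_{m}(x) \to F(x)$ almost surely. Because the identical argument applies to the strict event $\{x_{i} < x\}$, the left-limit versions also converge, $\hat{F}_{m}(x^{-}) \to F(x^{-})$ a.s.; this left-limit bookkeeping is precisely what absorbs the jumps of $F$ generated by ties, which is the feature distinguishing the present setting from the continuous $S_{n}$ case.

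Next, to upgrade this to uniform convergence, I would fix $\epsilon > 0$ and, using that $F$ is monotone with range in $[0,1]$ on a totally bounded support (Lemma~\ref{lem:kem_bounded}), select a finite grid $-\infty = t_{0} < t_{1} < \cdots < t_{K} = +\infty$ with $K \le \lceil \epsilon^{-1} \rceil$, chosen so that $F(t_{j}^{-}) - F(t_{j-1}) \le \epsilon$ for every $j$. For any $x \in [t_{j-1}, t_{j})$ the monotonicity of both functions gives the sandwich $\hat{F}_{m}(t_{j-1}) \le \hat{F}_{m}(x) \le \hat{F}_{m}(t_{j}^{-})$ and $F(t_{j-1}) \le F(x) \le F(t_{j}^{-})$, so that $|\hat{F}_{m}(x) - F(x)|$ is bounded by the error at the two neighbouring grid points plus $\epsilon$. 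Taking the maximum over the finitely many points $t_{j}$ and their left limits, a finite intersection of the almost-sure events of the previous paragraph gives $\max_{j} |\hat{F}_{m}(t_{j}) - F(t_{j})| \to 0$ and $\max_{j} |\hat{F}_{m}(t_{j}^{-}) - F(t_{j}^{-})| \to 0$ a.s., whence $\limsup_{m} \sup_{x} |\hat{F}_{m}(x) - F(x)| \le \epsilon$ almost surely; letting $\epsilon \downarrow 0$ along a countable sequence completes the claim.

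The main obstacle I anticipate is this uniformity step in the presence of ties: unlike a continuous $F$, the distribution here carries atoms, so the partition bound $F(t_{j}^{-}) - F(t_{j-1}) \le \epsilon$ can never be made to close an atomic jump, and the grid argument must instead straddle each discontinuity by retaining both $t_{j}$ and its left limit $t_{j}^{-}$. It is exactly here that the compactness and total boundedness of $U(\mathcal{M})$ (Lemma~\ref{lem:kem_bounded}) do the essential work, guaranteeing that a finite grid of the required fineness exists even over $\overline{\mathbb{R}}$, while the explicit $0$-valuation of the $\kappa$-mapping on ties keeps $\hat{F}_{m}$ well defined at the atoms. Everything else reduces to the routine monotone sandwich and the finite-union stability of almost-sure events.
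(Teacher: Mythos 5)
Your proposal is correct and follows essentially the same route as the paper: the paper likewise constructs a finite partition with $F(t_{j+1}^{-}) - F(t_{j}) \le \epsilon$ (its Lemma~\ref{lem:partition}), applies the monotone sandwich on each cell $[t_{j}, t_{j+1})$, and invokes the strong law of large numbers at the finitely many grid points and their left limits. Your version is somewhat more explicit about the pointwise SLLN step for the indicator variables, the finite intersection of almost-sure events, and the final $\epsilon \downarrow 0$ limit, but the underlying argument is identical.
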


\begin{proof}
For any \(\epsilon>0\), holds \[\lim_{m\to\infty}\sup_{x\in\overline{\mathbb{R}}} |\hat{F}_{m}(x) - F(x)| \le \epsilon, a.s.\]  By Lemma~\ref{lem:partition} exists a partition index \(j\) for which \(t_{j} \le x < t_{j+1}\), satisficing:
\footnotesize
\begin{align*}
\hat{F}_{m}(t_{j}) \le \hat{F}_{m}(x) \le \hat{F}_{m}(t^{-}_{j+1}) \land F(t_{j}) \le F(x) \le F(t^{-}_{j+1}),\\
\implies \hat{F}_{m}(t_{j}) - F(t^{-}_{j+1}) \le \hat{F}_{m}(x) - F(x) \le \hat{F}_{m}(t^{-}_{j+1}) - F(t_{j}) \equiv \\
\hat{F}_{m}(t_{j}) - F(t_{j}) + F(t_{j}) - F(t^{-}_{j+1}) \le \hat{F}_{m}(x) - F(x) \le  \hat{F}(t^{-}_{j+1}) - F(t^{-}_{j+1}) + F(t^{-}_{j+1}) - F(t_{j})\\
\therefore \hat{F}_{m}(t_{j}) - F(t_{j}) - \frac{\epsilon}{2} \le \hat{F}_{m}(x) - F(x) \le \hat{F}_{m}(t^{-}_{j+1}) - F(t^{-}_{j+1}) + \frac{\epsilon}{2},
\end{align*}\normalsize
which tends to equality at 0 by the strong law of large numbers. Thus, the rank ordering of any extended real distribution satisfies the Glivenko-Cantelli theorem upon the Kemeny metric for any finite and therefore countable sample.
\end{proof}

\subsection{Haar measure}
\label{subsec:haar}
The existence of Haar measures allows us to define admissible procedures such that optimal invariant decision criteria may be established. More pragmatically, they allow us to take the probability structure guaranteed by the existence of a Banach norm-space, such as invariant transformation of optimisation functionals which are admissible for both Frequentist and Bayesian inference. In particular, Haar measures allow the construction of prior probabilities and conditional inference in statistics (thereby enabling extensions to function spaces including multiple regression). Thus, we establish here a number of mathematical primitives upon the Kemeny metric space, including the existence and validity of Radon derivatives and integrals over the invariant volume of the sample, rather than the population. Both properties are foundational within statistical analysis, and are now shown to be validly enacted upon the Kemeny function space as well. Further, we note that by the properties of the Haar measure, as \(\mathcal{M}\) and functions thereof are discrete, the countably additive left-invariant measure may be wholly defined on all subsets of the domain, by the axiom of choice.

A function as a Haar measure is defined as a unique countably additive, non-trivial measure \(\mu\) on the Borel subsets of \(G\) satisfying the following properties:
\begin{definition}~\label{def:haar}
\begin{multicols}{2}
\begin{enumerate}
    \item{The measure \(\mu\) is left-translation-invariant: \(\mu (gS)=\mu (S)\) for every \(g\in G\) and all Borel sets \(S\subseteq G\).}
    \item{The measure \(\mu\) is finite on every compact set: \(\mu (K)<\infty^{+}\) for all compact \(K\subseteq G\)}
    \item{The measure \(\mu\) is outer regular on Borel sets \(S\subseteq G\): \[\mu (S)=\inf\{\mu (U):S\subseteq U\}.\]}
    \item{The measure \(\mu\) is inner regular on open sets for compact \(K\) \(U\subseteq G:\)\[\mu (U)=\sup\{\mu (K):K\subseteq U\}.\]}
\end{enumerate}
\end{multicols}
A measure on \(G\) which satisfies these conditions is called a left Haar measure, and is a sufficient and necessary condition to establish right Haar measure existence and proportionality, and therefore equivalence.
\end{definition}

\begin{lemma}~\label{lem:radon}
The Kemeny metric space satisfies Definition~\ref{def:radon} of a Radon measure space, thereby proving for the existence of a Radon derivative.
\begin{definition}~\label{def:radon}
If \(x\) is a Hausdorff topological space, then a Radon measure on \(x\) is a Borel measure \(m\) on \(x\) such that \(m\) is locally finite and inner regular on all Borel subsets.
\end{definition}
\end{lemma}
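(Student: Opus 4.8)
The plan is to exhibit the Haar measure whose existence is guaranteed by the discreteness of \(\mathcal{M}\) (Definition~\ref{def:haar} together with the accompanying remark) as the required Borel measure \(m\), and then to verify the two defining clauses of Definition~\ref{def:radon}: local finiteness and inner regularity on all Borel subsets. The Hausdorff hypothesis is already in hand, since Lemma~\ref{lem:kem_bounded} establishes the Kemeny metric space to be a perfectly normal \(T_{6}\) space and \(T_{6}\) implies \(T_{2}\); moreover, being metrisable, the space carries a canonical Borel \(\sigma\)-algebra generated by the open balls of \(\rho_{\kappa}\), so \(m\) is a bona fide Borel measure on a Hausdorff space.

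First I would dispatch local finiteness. By Lemma~\ref{lem:kem_bounded} the entire space is compact, and clause~2 of Definition~\ref{def:haar} asserts that \(m\) is finite on every compact set; taking the compact set to be the whole space yields \(m(\mathcal{M}) < \infty^{+}\). Consequently every point admits a neighbourhood of finite measure (indeed the whole space serves), so \(m\) is locally finite, and this step is essentially automatic once compactness is invoked.

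The hard part will be upgrading inner regularity from open sets, where clause~4 of Definition~\ref{def:haar} supplies it, to \emph{all} Borel subsets, as Definition~\ref{def:radon} demands. Two complementary routes are available. The direct route exploits the fact that for any finite \(n\) the cardinality \(|\mathcal{M}| = n^{n}-n\) is finite, so the metric topology is discrete: every subset \(S \subseteq \mathcal{M}\) is simultaneously open, closed, and compact, whence \(\sup\{m(K) : K \subseteq S,\ K \text{ compact}\} = m(S)\) holds trivially and inner regularity on all Borel sets is immediate. The structural route, which generalises beyond the finite regime, observes that a compact metric space is second countable and separable (again Lemma~\ref{lem:kem_bounded}), so the finite Borel measure \(m\) is automatically regular: the collection of Borel sets approximable from within by closed sets and from without by open sets forms a \(\sigma\)-algebra containing the open sets, hence equal to the full Borel \(\sigma\)-algebra, and since in a compact space every closed set is itself compact, inner regularity by closed sets coincides with inner regularity by compact sets.

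Having verified every clause of Definition~\ref{def:radon}, \(m\) is a Radon measure, and the existence of the Radon--Nikodym (Radon) derivative then follows from the absolute continuity of any competing finite Borel measure with respect to the reference Haar measure \(m\). I expect the only genuine subtlety to be this open-to-Borel bootstrap for inner regularity; everything else reduces to invoking compactness and the finiteness of \(\mathcal{M}\) already established upstream.
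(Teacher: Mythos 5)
Your proposal is correct and, at the decisive step, more explicit than the paper. The paper's own proof takes a different route: it observes that the space is a \(T_{6}\) Hausdorff metric space, invokes the Riesz representation theorem to assert inner regularity on open sets, obtains local finiteness from Lemma~\ref{lem:kem_bounded}, and then appeals to tightness of finite Borel measures (citing Bogachev, Theorem 7.1.7) to conclude the measure is Radon and that a Radon derivative exists; it never names the measure \(m\) and never performs the upgrade from open sets to all Borel sets. You instead identify \(m\) concretely and carry out that open-to-Borel bootstrap, either trivially by discreteness (every subset of the finite set \(\mathcal{M}\) is compact, so inner regularity on all Borel sets is immediate) or by the standard good-sets \(\sigma\)-algebra argument for a finite Borel measure on a compact metric space, where closed sets coincide with compact sets. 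Your version buys rigour exactly where the paper is thinnest. Two cautions. First, on ordering: you draw the clauses of Definition~\ref{def:haar} as properties of \(m\), but in the paper Lemma~\ref{lem:haar} is proved \emph{after} this lemma and cites it for inner regularity; to avoid even the appearance of circularity, take \(m\) to be the counting measure outright --- a Haar measure on a finite discrete group by inspection --- and verify finiteness on compacts directly from \(|\mathcal{M}| = n^{n}-n < \infty^{+}\). Second, your closing claim that the Radon derivative follows from absolute continuity of \emph{any} competing finite Borel measure with respect to \(m\) is stated too strongly in general, though it is harmless here, since every measure on a finite set is absolutely continuous with respect to counting measure.
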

\begin{proof}
The Kemeny metric is a Hilbert metric space (by equation~\ref{eq:kem_dist}), and thus is a \(T_{6}\), or perfectly normal space Hausdorff topological vector space. In consequence, \(\kappa(x), x \in \overline{\mathbb{R}}^{n \times 1}\) must be both locally finite and inner regular. By the Riesz representation theorem, all metric spaces are inner regular on open sets \(K\). By Lemma~\ref{lem:kem_bounded}, the metric space is always locally finite. Thus the finite Borel measure upon the Kemeny metric is tight (in the sense of \cite{bogachev2007} Theorem 7.1.7), and there exists a Radon derivative upon the Kemeny measure.
\end{proof}

\begin{lemma}
\label{lem:haar}
The Kemeny metric space satisfies all properties of Definition~\ref{def:haar} and is therefore a Haar measure space over all random independent variables sampled independently and identically from the extended real line.
\end{lemma}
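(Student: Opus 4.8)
The plan is to exhibit the Kemeny space $\mathcal{M}$ as a compact, Hausdorff, abelian topological group and then to show that the (normalised) counting measure plays the role of the left Haar measure, verifying Definition~\ref{def:haar} property by property. Because $|\mathcal{M}| = n^{n}-n$ is finite for every finite $n$, the induced topology is discrete, so every subset is simultaneously open and compact; this collapses the two regularity conditions into near-trivialities and leaves translation-invariance and finiteness as the only substantive claims. The three ingredients I would draw on are Lemma~\ref{lem:kem_bounded} (compactness and total boundedness), the perfectly normal $T_{6}$ Hausdorff structure inherited from the metric, and Lemma~\ref{lem:radon} (the Radon/inner-regularity property already established on all Borel subsets).

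First I would fix the group structure on $\mathcal{M}$. The identity element is the $\kappa$-image of the Identity permutation $I_{n}$ (the origin of $U(\mathcal{M})$); the inverse of any element is its reverse, whose $\kappa$-image is the negation guaranteed by skew-symmetry (Lemma~\ref{lem:even}); and the binary operation is the closed composition on the $\kappa$-representation invoked in Lemma~\ref{lem:unbiased}. Commutativity follows from the symmetry of the underlying order comparisons in equation~\ref{eq:kem_score}. Since the carrier set is finite and discrete, the group operations and all translation maps $S\mapsto gS$ are automatically continuous, so $(\mathcal{M},\cdot)$ is a compact Hausdorff abelian topological group, which is the locally compact setting in which a Haar measure is guaranteed to exist and to be unique up to a positive scalar.

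With $\mu$ taken to be the counting measure on $\mathcal{M}$ (optionally normalised by $|\mathcal{M}|^{-1}$), I would verify the four conditions in order. Property~1 (left-translation-invariance) holds because for each $g\in\mathcal{M}$ the map $S\mapsto gS$ is a bijection of the finite group onto itself, whence $\mu(gS)=|gS|=|S|=\mu(S)$; abelianness then yields right-invariance and hence unimodularity. Property~2 follows from Lemma~\ref{lem:kem_bounded}: every compact $K\subseteq\mathcal{M}$ satisfies $\mu(K)\le\mu(\mathcal{M})=|\mathcal{M}|=n^{n}-n<\infty^{+}$. Property~3 (outer regularity) is immediate in the discrete topology by taking $U=S$, so the infimum is attained at $S$ itself; Property~4 (inner regularity) is supplied directly by Lemma~\ref{lem:radon}, since every subset of a finite discrete space is compact and the Radon property already furnishes $\mu(U)=\sup\{\mu(K):K\subseteq U\}$.

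The \textbf{main obstacle} is not the measure-theoretic bookkeeping but the algebraic step: one must check that $\mathcal{M}$ is genuinely closed under the chosen operation and that the excluded collinear points $Q=\{I,I^{\prime}\}$ together with the $n$ degenerate constant vectors do not obstruct the group axioms — i.e. that the admissible carrier set $\mathcal{M}\setminus Q$ remains closed under composition and inversion. Once closure, associativity, and the existence of identity and inverses are confirmed there, the discreteness of the space renders the remaining topological-group and regularity requirements automatic, and the concluding clause of Definition~\ref{def:haar} upgrades the left Haar measure to a two-sided (unimodular) Haar measure, completing the proof.
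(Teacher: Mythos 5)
Your overall route is the same as the paper's: you verify the four Haar conditions one by one, getting finiteness on compact sets from Lemma~\ref{lem:kem_bounded}, inner regularity from Lemma~\ref{lem:radon}, outer regularity as a near-triviality, and translation-invariance from an algebraic closure property of the $\kappa$-representation. The paper's own proof is a three-sentence version of exactly this. What you add --- making the topology explicitly discrete so that every subset is clopen and compact, and taking $\mu$ to be (normalised) counting measure --- is a genuine improvement in clarity, since it shows why properties 3 and 4 collapse and why uniqueness up to scalar comes for free on a finite group.

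However, the step you yourself flag as the ``main obstacle'' is a real gap, and it is not one you can defer: the lemma's translation-invariance claim stands or falls on $\mathcal{M}$ actually carrying a group structure under which $S\mapsto gS$ makes sense. None of the candidate operations works as stated. The set of $\kappa$-images is not closed under matrix addition (summing two skew-symmetric sign matrices produces entries outside $\{-\sqrt{.5},0,\sqrt{.5}\}$, and the sum need not encode a transitive ordering); permutations \emph{with ties} (weak orderings) do not form a group under composition, only a monoid, since a surjective collapse onto tied blocks has no inverse; and your proposed inverse (reversal, i.e.\ negation of the $\kappa$-matrix) is an involution, not an inverse for any associative operation you have named. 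The paper's proof has the identical defect --- it asserts ``closed under addition and multiplication'' without exhibiting the operation --- so you have not missed something the paper supplies; rather, you have correctly located the load-bearing unproven claim. To close the gap one would need either to restrict to the subgroup $S_{n}\subset\mathcal{M}$ where composition genuinely is a group law (losing the tied permutations the paper cares about), or to abandon the group framing entirely and define translation-invariance directly on the additive group generated by the $\kappa$-images inside the ambient space of skew-symmetric matrices, with $\mathcal{M}$ merely a distinguished finite subset carrying the restricted counting measure.
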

\begin{proof}
The Kemeny metric and its Borel \(\sigma\)-algebra are closed under addition and multiplication, and therefore are left-translation-invariant. The total boundedness of Lemma~\ref{lem:kem_bounded} guarantees the measure \(\kappa(K)\) is finite for all \(K \subseteq \overline{\mathbb{R}}\). The inner-regularity is proven in Lemma~\ref{lem:radon} and both outer regularity and completeness follow as a Hilbert space.
\end{proof}


\begin{table*}[!ht]
\tiny
\caption{Distributional characterisation of the Kemeny distance function, for finite \(n\), permuted exhaustively for \(n \le 8\), and sampled with 3,294,172 examples for all larger \(n\).}
\label{tab:1}
\centering
\begin{tabular}{lccc | lccc }
\toprule
 \(n\) & \(\mu_{1}\) & \(\sigma_{\kappa}\) & Excess \(\mu_{4}\) &  \(n\) & \(\mu_{1}\)  & \(\sigma_{\kappa}\) & Excess \(\mu_{4}\) \\
\midrule
 2 & 0.000 & 0.707 & -1.875 &    25 & 0.002 & 42.647 & -0.091\\ 
   3 & 0.000 & 1.610 & -1.171 &    26 & 0.019 & 45.183 & -0.083\\ 
   4 & 0.000 & 2.646 & -0.747 &    27 & -0.066 & 50.477 & -0.080\\ 
   5 & 0.000 & 3.795 & -0.548 &    28 & 0.040 & 53.177 & -0.076\\ 
   6 & 0.000 & 5.046 & -0.432 &    29 & 0.040 & 55.900 & -0.075\\ 
   7 & 0.000 & 6.392 & -0.356 &    30 & 0.046 & 58.674 & -0.072\\ 
   8 & 0.000 & 7.826 & -0.302 &    31 & 0.011 & 61.506 & -0.069\\ 
   9 & 0.000 & 9.345 & -0.259 &    32 & 0.043 & 64.418 & -0.066\\ 
   10 & 0.006 & 10.939 & -0.230 &    33 & 0.014 & 67.287 & -0.061\\ 
   11 & 0.009 & 12.622 & -0.212 &    34 & 0.082 & 70.272 & -0.062\\ 
   12 & -0.007 & 14.352 & -0.191 &    35 & -0.016 & 73.262 & -0.065\\ 
   13 & -0.017 & 16.168 & -0.173 &     36 & -0.036 & 73.262 & -0.057\\ 
   14 & 0.006 & 18.064 & -0.161 &    37 & -0.005 & 76.255 & -0.063\\ 
   15 & -0.010 & 19.996 & -0.148 &    38 & 0.010 & 79.419 & -0.060 \\
   16 & -0.025 & 22.005 & -0.141 &    40 & -0.035 & 85.764 & -0.058 \\
   17 & 0.001 & 24.066 & -0.131 &    45 & -0.052 & 102.107 & -0.052 \\
   18 & 0.010 & 26.216 & -0.122 &    50 & 0.035 & 119.342 & -0.043\\ 
   19 & -0.021 & 28.386 & -0.117 &    55 & 0.075 & 137.645 & -0.039\\ 
   20 & 0.025 & 30.645 & -0.105 &    60 & 0.057 & 156.656 & -0.036 \\
   21 & 0.006 & 32.942 & -0.105 &    62 & -0.100 & 164.530 & -0.039 \\
   22 & 0.008 & 35.272 & -0.096 &  64 & -0.020 & 172.447 & -0.039 \\
   23 & 0.031 & 37.694 & -0.096 &    68 & -0.032 & 188.808 & -0.030 \\
   24 & -0.012 & 40.155 & -0.095 &    75 & 0.042 & 218.527 & -0.026 \\
\bottomrule
\end{tabular}
\end{table*}

As the distribution function of the Kemeny distances upon \(U(\mathcal{M})\) is known to be the Beta-Binomial, the excess kurtosis is an explicit function of \(n\), and therefore does not require separate estimation. Instead, the non-normality of the distribution is resolved by the convergence of the symmetric distribution to the standard normal distribution, for the asymptotic limit wrt \(n\), by the exponential characterisation of the Beta distribution (see also Lemma~\ref{lem:kem_asym_normal}). An alternative proof may be presented by noting that the probability distribution is Beta distributed, which is exponential, and therefore must almost surely converge to the normal distribution under i.i.d. sampling conditions. The desirability of focusing upon only the Beta distribution (and safely ignoring the Binomial component, which serves only to discretise the continuous probability manifold, a necessary characteristic of a Hilbert space), is treated upon in Subsection~\ref{subsec:mle} to construct the MLE.


\subsubsection{Cochran's theorem upon arbitrary uniformly sampled variables}
If we accept that a Beta-Binomial distribution is necessary to restrict the support to the interval of the \(\kappa\) function and thus the Kemeny metric function (Lemma~\ref{lem:kem_bounded}), then the univariate distribution of the variance must also be a central \(\chi^{2}_{\nu=1}\) distribution, using the modified Bessel function of the first kind \(I_{v}\) of the form:
\begin{equation}
f(x\mid \sigma_{\kappa},n) = \frac{1}{2}\bigg(\frac{x}{\sigma^{2}_{\kappa}}\bigg)^{\frac{M-1}{2}}\exp\Bigg(-\frac{x+\sigma^{2}_{\kappa}}{2}\Bigg)I_{v}(\sigma_{\kappa}\sqrt{x}),~ x = \kappa(x)
\end{equation}
\begin{lemma}
\label{lem:cochran}
The Kemeny estimator satisfies Cochran's theorem under uniform sampling.
\end{lemma}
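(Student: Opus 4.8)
The plan is to reduce the claim to the classical normal-theory Cochran decomposition and then to supply the two ingredients that the Kemeny construction does not share with the classical setting \emph{a priori}: joint normality of the coordinates being squared, and the idempotency together with rank-additivity of the projectors that split the total sum of squares. First I would vectorise the \(\binom{n}{2}\) independent upper-triangular entries of the skew-symmetric matrix \(\kappa(x)\) into a single vector \(z \in \overline{\mathbb{R}}^{\binom{n}{2}}\), so that the concentration statistic of equation~\ref{eq:kem_variance} is, up to the scalar \(\tfrac{2}{n(n-1)}\), exactly the quadratic form \(\sigma_{\kappa}^{2}(x) \propto z^{\intercal}z\); the skew-symmetry \(\kappa_{lk} = -\kappa_{kl}\) collapses the nominal \(n^{2}-n\) off-diagonal entries to these \(\binom{n}{2}\) free coordinates. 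By the Central Limit Theorem on the Kemeny space (Theorem~\ref{thm:clt_kem}) together with the asymptotic normality of Lemma~\ref{lem:kem_asym_normal}, the standardised vector \(z\) converges in distribution to a mean-zero multivariate normal whose covariance, under the conditional orthonormality (exogeneity) hypothesis of Theorem~\ref{thm:gauss-markov} and the uniform-sampling assumption, is a scalar multiple of the identity on the free-coordinate subspace.

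Next I would exhibit the orthogonal decomposition \(z^{\intercal}z = \sum_{j} z^{\intercal}A_{j}z\) into the model component and the residual (within) component required for the \(F\)-ratio structure of the Studentisation in Subsection~\ref{subsection:studentising}, where each \(A_{j}\) is the orthogonal projector onto an eigenspace of the positive-definite Gram matrix of the \(\kappa\) representation (positive-definiteness being furnished by the Mercer/Bochner conditions invoked in Theorem~\ref{thm:gauss-markov}). The two verifications that carry the argument are (i) idempotency, \(A_{j}^{2} = A_{j}\), which holds because the \(A_{j}\) are genuine orthogonal, hence symmetric and idempotent, projectors; and (ii) completeness, \(\sum_{j}A_{j} = I\) on the centred free-coordinate subspace, with \(\sum_{j}\Tr(A_{j}) = \sum_{j}\mathrm{rank}(A_{j})\) equal to its dimension \(\binom{n}{2}-1\) once the single median degree of freedom lost in centring the \(\kappa\) image is removed (the same loss that motivates the Bartlett correction of equation~\ref{eq:kem_population_variance}). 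Given (i) and (ii), the classical Cochran theorem applies verbatim to the limiting Gaussian vector: the forms \(z^{\intercal}A_{j}z\) are mutually independent, each distributed as \(\chi^{2}\) with \(\mathrm{rank}(A_{j})\) degrees of freedom, which for the one-dimensional component recovers the central \(\chi^{2}_{\nu=1}\) law asserted for the single-variable variance immediately above.

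The main obstacle is the finite-sample gap between the strictly sub-Gaussian Kemeny law and the exact Gaussian law that Cochran's theorem presupposes, and I would close it from two directions. Asymptotically, Lemma~\ref{lem:kem_asym_normal} guarantees the excess kurtosis rises to \(0\) from below as \(n \to \infty^{+}\), so the decomposition is exact in the limit on \(\mathcal{M}\). For finite \(n\) I would not claim exact normality but instead use the Beta-Binomial form of the distance law: the Binomial factor only discretises the support and is normalised away by the rescaling \(U(\mathcal{M}) \cdot \sup U(\mathcal{M})\), leaving the exponential Beta component to supply the chi-squared bookkeeping once the degrees of freedom are tracked through the Bartlett correction. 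The step I expect to demand the most care is establishing that the projectors \(A_{j}\) induced by the skew-symmetric \(\kappa\) representation are genuinely idempotent and mutually orthogonal on the finite sample rather than only in the Gaussian limit, because the discreteness of the \(\kappa\) entries makes the free coordinates merely uncorrelated rather than independent before the limit is taken; this is precisely where the uniform-sampling hypothesis is essential, since it forces the Gram matrix to be a scalar multiple of the identity and thereby makes coordinate-uncorrelatedness coincide with the projector-orthogonality that Cochran's theorem requires.
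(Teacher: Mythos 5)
Your route is the classical one---vectorise the free coordinates, pass to a Gaussian limit, and verify idempotency and rank-additivity of the projectors---and it is genuinely different from the paper's argument, which never mentions projectors at all and instead argues from closure of the \(\kappa\) basis under addition and multiplication, a Chernoff bound on the truncated support \([\tfrac{n-n^{2}}{2},\tfrac{n^{2}-n}{2}]\), and the generalised central limit theorem for strictly sub-Gaussian variables. The difficulty is that your version rests on a claim that is false: under uniform sampling the covariance of the vectorised entries \(z=(\kappa_{kl}(x))_{k<l}\) is \emph{not} a scalar multiple of the identity. Two entries sharing an index are positively correlated---for a uniformly random total order, \(E[\kappa_{kl}\kappa_{km}]=\tfrac{1}{3}E[\kappa_{kl}^{2}]\) for distinct \(k,l,m\), the same \(\tfrac{1}{3}\) that produces the familiar \(\tfrac{2(2n+5)}{9n(n-1)}\) null variance of Kendall's \(\tau\)---and the entries are further constrained by transitivity, so \(z\) is supported only on the consistent sign patterns. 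Consequently your closing assertion that ``coordinate-uncorrelatedness coincides with projector-orthogonality'' does not hold, the limiting Gaussian is not spherical, and the hypothesis of classical Cochran (idempotent \(A_{j}\) summing to \(I\), applied to an \(N(0,\sigma^{2}I)\) vector) is not met. You would need to whiten \(z\) by its nontrivial covariance \(\Sigma_{z}\) and re-establish idempotency and rank-additivity for the transformed forms \(\Sigma_{z}^{1/2}A_{j}\Sigma_{z}^{1/2}\); that is exactly the step you flagged as delicate and then resolved by appeal to the spherical-covariance claim, so the argument is circular at its load-bearing point.

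A second, smaller gap: the Gaussian limit you invoke is a multivariate CLT for a vector whose dimension \(\binom{n}{2}\) grows with \(n\), whereas Theorem~\ref{thm:clt_kem} and Lemma~\ref{lem:kem_asym_normal} are scalar statements about averages and distances and do not supply joint normality of the growing coordinate vector. Even after repairing the covariance, the sentence ``the classical Cochran theorem applies verbatim to the limiting Gaussian vector'' needs an argument that the finitely many quadratic forms of interest converge jointly. For what it is worth, the paper's own proof sidesteps both issues by asserting closure of \(\chi^{2}\) sums under Hilbert-space addition and citing the sub-Gaussian CLT, so neither text establishes the projector decomposition rigorously; but yours fails at a concretely identifiable step, the spherical-covariance claim, and that is the step that must be repaired before the rest of your outline can go through.
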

\begin{proof}
To prove that the Kemeny estimator satisfies Cochran's theorem under uniform sampling, consider the set of \(\kappa^{2}\) values for any domain space \(\mathcal{M}\), wherein the variance is strictly positive and less than one, and for which the first-order expectation is always 0 for all \(m\) (by Lemma~\ref{lem:unbiased}). By the Chernoff bound then, the known tails of the CDF may be obtained at the truncated points \(a = \frac{n-n^{2}}{2},b = \frac{n^{2}-n}{2}\), which defines the support of \(U(\mathcal{M})\) for all finite \(n\). As the \(\kappa\) function is the linear basis of a complete metric Hilbert space, it is closed under both addition and multiplication, and therefore allows for the sum of \(\chi^{2}\) variables to also be distributed as such. This is accepted due to the linear addition upon the domain \(x,y\) surely possessing a corresponding co-image mapping by equation~\ref{eq:kem_score}, which is invariant under linear and monotone translation; corresponding conditions for this extension were already proven in \textcite{semrl1996}, and therefore we conclude that Cochran's theorem holds for randomly sampled variables of length \(n\) upon the Kemeny measure space, subject to the assertion that all scores are linearly orderable (such as arising from the exponential family of distributions, or any discretisation thereupon). It then directly follows that the \(\chi^{2}_{1}\) distribution holds, by the normality of the approximately linear projective function Hilbert space allows us to accept Chernoff's bound for the discrete uniform distribution without issue, purely as a function of the already proven generalised central limit theorem for strictly sub-Gaussian random variables, assuming only uniform sampling independence upon observed linearly orderable distributions.
\end{proof}

In consequence of the valid use of the Kemeny metric upon independent random variables which may be ordered upon a common population, follows Lemma~\ref{lem:cochran}. By this result, we obtain the following signed Wald test statistic with a corresponding normed Beta-binomial distribution, which is asymptotically normal for \(n\to\infty^{+}\) and unbiased for all finite  \(\mathcal{M}\) (per both the exponential nature of the Beta distribution, and the asymptotic normality per Lemma~\ref{lem:kem_asym_normal}):
\begin{equation}
\label{eq:z_kemeny}
z_{\phi}(x,y) = \frac{-\rho_{\kappa}(x,y)}{\sqrt{\dot{\sigma}^{2}_{\kappa}}},
\end{equation}
denoting the ratio of the centred Kemeny distance \(\rho_{\kappa}, E(\rho_{\kappa}) = 0,\) which is normed by the standard deviation over \(\mathcal{M}\mid{n}\). It is immediately self-evident that \((z_{\phi})^{2}\sim\chi^{2}_{df = 1}\). As there is only one degree of freedom for constant known function of the sample size (which produces the variance per equation~\ref{eq:kem_population_variance}), this distribution is a unit normal z-statistic, which estimates the Kemeny distance between two vectors upon a finite sample drawn from a common population (note that the standard deviation, rather than the variance, is utilised, due to the numerator being a signed measure, rather than a variance of the distances from the unique origin). More generally, while not explored in this manuscript, Cochran's theorem (Lemma~\ref{lem:cochran}) allows for the general linear model theory to be applied (e.g., the use of multiple regression upon distribution free estimators which are unbiased in expectation upon finite samples), by establishing unique identification of conditional multivariate distributions, as well as valid approximation using unique Bayesian methodologies.

Upon the expanded domain \(\mathcal{M} \setminus S_{n}>0\), Kendall's \(\tau_{b}\) is biased for all finite samples, and therefore not a minimum variance estimator. This is confirmed by empirical simulation in Tables~\ref{tab:non_par-location}, \ref{tab:non_par-effect}, and \ref{tab:3}, along with the theoretical proof that non-measurable events occur when \(x^{n \times 1},y^{n \times 1} \not{\in} S_{n}\). These tables exhibit the expected theoretical minimum variance properties of the Kemeny correlation (and resulting Wald statistics) for discrete ordinal data, evidencing that the established alternative estimators are produce both overly dispersed (and positively biased) empirical results upon the Kendall \(\tau\) estimator. Thus, it must necessarily follow that Type I errors are inflated for given \(\alpha\)-thresholds, when applied to non-continuous data (such as bivariate ordered finite response sets or counts). Common examples include both the ordinal scale items common to behavioural assessments, and graded response models, such as item reviews, which demonstrate that the linear distance is non-constant upon the Frobenius norm or Kendall norms, but is upon the Kemeny norm. An empirical preference is consequently demonstrated for the Kemeny correlation estimator over the Kendall correlation estimator for all scenarios, as the distribution is otherwise identical to the Kendall's \(\tau\) in limited continuous and asymptotic scenarios, whereupon ties do not occur, but is otherwise both more accurate and more efficient. All referenced experimental simulations presented were conducted  using the asymptotic approximation of the variance for the Kendall \(\tau_{b}\) correlation estimation and test statistic \(z_{b}\).

\section{Estimating equations for the non-parametric Kemeny correlation}
The previous derivations provided analytic solutions and proofs for the Kemeny \(\rho_{\kappa}\) distance functional topological framework which provides in turn the normed correlation. In turn is provided the commiserate Wald test statistics, and proofs of the unbiasedness and minimum variance characteristics of the Kemeny estimation for a number of standard bivariate analytical scenarios (i.e., any common bivariate population whose elements are orderable upon a CDF, upon the extended real line). The Beta-Binomial distribution of the random variables' pairwise distances, used to construct the Wald statistics, is also amenable to maximum likelihood estimation. This allows us to establish that equivalent, in principle, to the Gaussian distribution, the Kemeny metric is linear maximum likelihood estimator (MLE) for the non-parametric bivariate correlations, well as an unbiased minimum variance estimator across a much broader set of cumulative distribution functions which are independently sampled. An important characteristic is noted here upon the duality between the Kemeny correlation and the Spearman correlation, which can in turn then be extended to Pearson's \(r\) under suitable parametric restrictions. Finally, we provide a framework for Studentising the Kemeny correlation space, allowing for the construction of general purpose Student's t-tests estimators, Welch's t-tests estimators, and paired sample t-tests.

\subsection{Method of Moments Estimators}
Given the known population characteristics, a method of moments estimator of the shape parameters \(\alpha_{1}\) and \(\alpha_{2}\) follows for the Beta distribution, which arises by the division of the Beta-Binomial population of distances by \(N = n^{2} - n\), thereby norming, for any sample of any finite size, the distance to be a compact and totally bounded elements on the real interval \(\tfrac{\rho_{\kappa}}{N} \in [0,1]\). 
\begin{remark}
Note that in examining the \(\mathrm{B}\), or Beta, distribution, it has been empirically observed that the central limit theorem (in the form of the strong, rather than weak, law of large numbers) is only valid for \(n \ge 9\) as has been previously noted. Using the product (or ratio) of the variance of the symmetric \(\mathrm{B}\) distribution to the support for each \(n \in 1,\ldots,\mathbb{N}\), we note that the variance is only greater than 1 if and only if \(n\ge 9\):
\begin{equation}
\begin{aligned}
 E[x]_{\kappa}                 & = \frac{\alpha_{1}}{\alpha_{1} + \alpha_{2}} = \frac{n}{2n} = \frac{1}{2}\mid \alpha_{1} = \alpha_{2} = n\\ 
\operatorname{var}[x]_{\kappa} & = \frac{\alpha_{1}\alpha_{2}}{(\alpha_{1}+\alpha_{2})^2(\alpha_{1}+\alpha_{2}+1)} = \frac{1}{8n + 1}\\
\end{aligned}
\end{equation}
such that \((\tfrac{n^{2} - n}{8n+1})> 1 \iff n> 9.\) This defines the standard pseudo-median construction uniquely under less restrictive conditions \parencite[p.~39]{hollander2014}, and thereby allows for a minimum variance unbiased estimator of said location upon the extended real domain. This in turn suggests that the variance must be smaller than the support of the measure space in order to allow the central limit theorem to apply to a unimodal distribution, and in turn allowing the densest partition of the support to be uniquely defined upon the monotonically convergent space. The implications of this biased or constrained optimisation problem (uniquely solved via KKT estimation) is addressed and developed in later work. 

Also note that the Beta distribution of the Kemeny estimator is asymptotically normal for \(\lim_{n\to\infty^{+}}\) (Theorem~\ref{thm:strict_subgauss}), while otherwise remaining a closed and totally bounded even function space for all finite samples, supporting this selection of probability distribution. This is also confirmed by the exponential nature of the Beta distribution, from which follows the asymptotic normality of the estimating equations. We also note here that examining the ratio \(\tfrac{\mathcal{M}}{n!}\) we observe a markedly non-exponential rate of change at approximately \(n \ge 15.\) In particular, note that upon this ratio, the density at the standard CLT threshold of \(n \ge 20\) is observed to be, on the \(\log(\log(n))\) scale, approximately linear.   
\end{remark}

Solving for free parameters \(\alpha_{1},\alpha_{2}\) in the estimating equations for the Beta distribution, for which both \(n,\rho_{\kappa}\), the sample size and bivariate Kemeny distance, are known for the random data vectors (variables) \((x,y)\) for the identification restrictions that \(\alpha_{1},\alpha_{2},n,\rho_{\kappa} >0\). These are in line with pre-existing assumptions, and does not restrict the solution space beyond the assumptions already imposed to prove the Gauss-Markov conditions.

%

Using the Beta component of the Beta-binomial distribution, which controls the shape of the probability distribution upon which the discrete support \([0,n^{2}-n]\) arises, we observe that the population probability distribution of the errors is as follows:
\begin{equation} 
\begin{aligned}
 f(x\mid \alpha_{1} = \alpha_{2} = n ) & = \mathrm {constant} \cdot x^{n-1}(1-x)^{n-1}\\
                    & = \frac {x^{n-1}(1-x)^{n-1}}{\int_{0}^{1}u^{n-1}(1-u)^{n-1}\,\diff{u}}\\ 
                    & = \frac{\Gamma(n+n)}{\Gamma(n)\Gamma(n)}\,x^{n-1}(1-x)^{n-1}\\
                    & = \frac{1}{\mathrm{B}_{I}(n,n)}x^{n-1}(1-x)^{n-1},
\end{aligned}
\end{equation}
where \(\Gamma\) is the gamma function, and the incomplete Beta function \(\mathrm{B}_{I}\) normalises the function to maintain the summation to 1 required for the Haar measure upon the finite population \(\mathcal{M}\). 

Accepting that \(\mathrm{B}(\hat{\alpha_{1}},\hat{\alpha_{2}}\mid x,y)\) serves as the likelihood function for the probability distribution for the sample Kemeny distance, whose support is validly discretised by the Binomial distribution of size \(N\). It then follows that the MLE is equivalent to the analytic solution presented in equation~\ref{eq:kem_dist}, if and only if in the limit over \(m \to \mathcal{M} = n^{n}-n\), we observe that the concave probability distribution (which we have already noted to be exponential, and therefore numerically, as well as analytically, solvable) must be maximised at a common point estimate, represented as a linear function of \(\hat{\alpha_{1}},\hat{\alpha_{2}}\); acceptance of the Frobenius norm representation of the probability measure directly follows by the isometric equivalence of all metric spaces upon the domain of the population. If we only accept the probability structure of the moments to be Beta-Binomial distributed for the distance between the two random variable vectors which are Beta distributed, then

\begin{subequations}
\begin{equation}
\label{eq:moment1}
\frac{\alpha_{1}}{\alpha_{1} + \alpha_{2}} = \frac{\rho_{\kappa}}{(n^2 - n)}
\end{equation}
\begin{equation}
\label{eq:moment2}
\frac{\alpha_{1} \alpha_{2}}{(\alpha_{1} + \alpha_{2})^2 (\alpha_{1} + \alpha_{2} + 1)} = \frac{(n - 1)^2 (n + 4) (2 n - 1)}{18 n\cdot (n^2 - n)}
\end{equation}
\begin{equation}
\label{eq:moment1_shape1}
\hat{\alpha}_{1} = \tfrac{\rho_{\kappa} (18 \rho_{\kappa}^{2} + 2n^{5} + n^{4} - 19 n^{3} - 18 n^{2} \rho_{\kappa} + 31 n^{2} + 18n \rho_{\kappa} - 19n + 4)}{n(n - 1)^{4}(2 n^{2} + 7n - 4)}
\end{equation}
\begin{equation}
\label{eq:moment2_shape2}
\hat{\alpha}_{2} = \tfrac{(n^{2} - n -\rho_{\kappa}) (18 \rho_{\kappa}^{2} + 2n^{5} + n^{4} - 19 n^{3} - 18 n^{2} \rho_{\kappa} + 31 n^{2} + 18n \rho_{\kappa} - 19 n + 4)}{n(n - 1)^{4}(2 n^{2} + 7 n - 4)},
\end{equation}
\end{subequations}
from which are estimated \((\hat{\alpha}_{1},\hat{\alpha}_{2})\) for the joint distribution of the co-product \(Z = xy\), and which is identified for any \(n,\rho_{\kappa}\; x,y \in \overline{\mathbb{R}}^{+}\). Then, solving for the distance using the sample estimated shape parameters (equations~\ref{eq:moment1_shape1} and \ref{eq:moment2_shape2}) from which then results the correlation estimate (via equation~\ref{eq:kem_cor}), we observe a functional and convergent estimator of the probability distribution of the Kemeny distance between two population vectors, conditional upon the observed data for a given finite sample size \(N = n^{2}-n\). 

From the characterisation of the sample dependent distribution parameters \(\hat{\alpha}_{1}\) and \(\hat{\alpha}_{2}\), we observe that we also possess a deterministic representation of the variance for Beta distribution, as given in the left-hand side of equation~\ref{eq:moment2}. 

Of course, as a direct consequence of the uniqueness for the compact and totally bounded support Kemeny distance space, the variance for either extremum is likewise uniquely determined and thus uniquely degenerate, as there exist only \(|m| \in \mathcal{M}  = 2 \mid \rho_{\kappa} \in \{0,n^{2}-n\}\). Thus, the empirical estimation of the uncertainty of said estimator is minimised in the population, and can also be appropriately studentised for finite samples (see Section~\ref{subsection:studentising} for details), with the provided upper bound for addressing the limiting case.

\subsubsection{Marginal Method of Moments Estimator}
With the bivariate joint distribution for the Kemeny distance between two vectors of length \(n\) which are uniformly sampled from a common population, we know proceed to likewise characterise  the marginal distributions, again using the method of moments. We accept that the distance between a given vector and itself is always uniquely defined at \(0\) for any Hilbert space, including the Kemeny distance function space. Therefore, continuing the use of the Beta distribution for the probabilities of the Kemeny distance function, we estimate \(\alpha_{\bullet}\), the shape parameter of the marginal distribution \(X\). By the symmetry of the uniform distribution, there is only one unknown, identified as a function of the variance with equation~\ref{eq:kem_variance} scaled by \(n^{2}-n\) for a sample drawn independently upon a common population: 
\begin{subequations}
\begin{equation*}
\frac{\alpha_{1}^{2}(n^{2}-n)}{4\alpha_{1}^{2}(2\alpha_{1}+1)} = \dot{\sigma}_{\kappa}^{2}
\end{equation*}
\begin{equation}
\alpha_{1} = \frac{(n^{2} - n - 4 \sigma^{2}_{\kappa})}{8 \dot{\sigma}^{2}_{\kappa}},
\end{equation}
\end{subequations}
for the Beta distribution, which, when scaled by \(2^{t}(n^{2} -n)\), provides the necessary characteristics of the Beta-Binomial distribution which is observed as the Kemeny distance (notably that of the compact and totally bounded finite support for the extended real domain). Using the location of the order-centred statistics (corresponding to the point \(X = \{x_{(1)},\dots,x_{(i)}\}\) upon which the sum of the ordered symmetric elements of \(\kappa_{kl}\) satisfy \(\sum_{k=1}^{n}\kappa_{kl}(x) = 0\)), is obtained several immediately recognisable characterisations. 

First, the order statistic vector, which is uniquely proportionately defined, is quadratic in nature and thus obtained at the corresponding point for which \(\kappa^{2}\) distance is minimised as well, at the expectation. For didacticism, we observe that this score equation (solved as the derivative of the \(2^{t}\) power of all elements, \(\kappa^{2}(\cdot)\)) is thus sufficient to identify the first moment of any marginal random variable which is uniformly drawn upon a non-degenerate population. Further extending the \(\kappa\) function as the basis linear operator of the function space identified by the Kemeny distance, the \(t^{th}\) for \(t = 1,\dots,\mathbb{N}^{+},\) order-moment for random variable \(X\) is equal to \(\frac{1}{\sqrt[t]{2^{t}(n^{2}-n})}\sum_{kl}^{n}\kappa^{t}_{kl}(x)\), as a function of \(\alpha_{1} \in [0,\infty^{+}).\) Therefore, the marginal distribution of \(X\) is just-identified by \(\hat{\alpha}_{1}\) and the observed vector length (i.e., the sample size), independent of any affine linear transformation, which allows for the expression of the variance. 

\subsubsection{Numerical Example}
A trivial numerical example is provided from the Fisher Iris data, for which \(n = 150\). The pairwise distances and the analytically solved parameters are provided in Table~\ref{tab:example_moments}, and it may be immediately observed that the product of the support scale and the expectation of the Beta function is the Kemeny distance: 
\begin{align*}
\mu (n^{2}-n) = \tfrac{\alpha_{1}}{\alpha_{1}+\alpha_{2}}(n^{2}-n) = \rho_{\kappa} \mid \rho_{\kappa} = 13145, \alpha_{1} = 0.5797333, \alpha_{2} = 0.4059677\\
\tfrac{0.5797333}{0.5797333+ 0.4059677}\cdot (150^{2}-150) =  0.5881432 (22350) = 13145
\end{align*}

\begin{table}[!ht]
\centering
\caption{Fisher Iris dataset, \(n = 150\), which allows for \(\alpha_{1},\alpha_{2}\) to be solved for as a function of the known distances and the sample size.}
\label{tab:example_moments}
\tiny
\begin{tabular}{rrrrr}
  \toprule
 & Sepal.Length & Sepal.Width & Petal.Length & Petal.Width \\ 
  \midrule
Sepal.Length & 0 & \(\alpha_{1} = 0.5285905, \alpha_{2} = 0.4567304\) & \(\alpha_{1} = 0.151408205, \alpha_{2} = 0.8409595\) & \(\alpha_{1} = 0.1881198, \alpha_{2} = 0.8028011\) \\ 
  Sepal.Width & 11990 & 0 & \(\alpha_{1} = 0.5797333, \alpha_{2} = 0.4059677\) & \(\alpha_{1} = 0.5646111, \alpha_{2} = 0.4209448\) \\ 
  Petal.Length & 3410 & 13145 & 0 & \(\alpha_{1} = 0.1171291, \alpha_{2} = 0.8767339\) \\ 
  Petal.Width & 4243 & 12804 & 2634 & 0 \\ 
   \bottomrule
\end{tabular}
\end{table}

\subsection{Maximum likelihood estimator}
\label{subsec:mle}
The method of moments estimator does not make explicit use of the likelihood function defined by the probability, which loses certain benefits. However, it has been historically observed to perform preferentially, relative to the MLE, upon Beta distributed independent random variables. While the construction of order statistics has been historically defined upon continuous random variables, with constant observed variance, we introduce, using the Kemeny metric, a framework for which the shape  parameters for each marginal random variable distributions \(x\) and \(y\) may be obtained by maximum likelihood estimators. From these parameters, we also examine the MLE for the Kemeny correlation statistic resulting from the product of the two independent Beta random variables to produce a maximum likelihood estimator. Of particular note is that, due to the removal of the assumption of the absence of ties (due to applicability to both discrete and continuous random variables), the variance is no longer an explicit a priori known, and thus is explicitly estimated upon the data by the obtained shape parameters, providing a MLE estimator of the variance as well. 
This allows us to shown that the likelihood function of the product of two random variables is equivalent to the Beta distribution of the Kemeny metric, and thus is itself a maximum likelihood estimator. We provide a framework for the construction of the Beta-distributed (normalised) extended real random variables \(X_{j}\) of length \(n\), and demonstrate that the mean and variance of said variables are directly obtained by known constructions of the moments for the Beta distributions from the estimated shape parameters.  


The transformation of the random variables \(X_{i,j}\) onto the order-statistic space \((X_{(i,j)})\) for \(i = 1,\dots, n\) and \(j = 1,\ldots,p\), which is easily confirmed to be a Beta-binomial distribution in the support \([-\tfrac{N}{2},\dots,0,\dots,\tfrac{N}{2}]\) to be Beta-distributed for each independently distributed random variable, are obtained the the extremum of the scores from the vectorisation upon the \(\kappa(X_{j})\). By summing each skew-symmetric matrix \(\kappa\) matrix over the \(n\) rows we produce an \(1 \times n\) vector, presenting an ordered set of values \(\vec{X}_{(j)}\) for random variables \(j\) with the previously defined support. The range in the absence of ties (as arises for continuous random variables) is \([(1-n),n-1] = 2(n-1)\), however the presence of the freely estimated (data dependent) variance allows for such ties to occur, and therefore expands the necessary identification to two parameters for each order statistic distribution of \(X_{j}\). As Beta distributed random variables are assumed to be non-negative \(x \in [0,1)\), we therefore are also required to subtract the minimum order statistic upon \(\vec{X}_{(j)}\) upon each vector (by the symmetric even \(\kappa\) function, this also allows the largest observed sample value to identify the Beta distribution). Before transformation however \(E(\vec{X}_{(j)}) = 0,\) and is therefore observed to be unbiased, just as the underlying Kemeny metric itself is unbiased (Lemma~\ref{lem:unbiased}). Further inspection confirms that the analytic solution to the variance of \(\vec{X}\) and the \(\sigma^{2}_{\kappa}(X)\) are equivalent, thereby confirming the duality of the function representation: 

This produces a set of \(j\) Beta distributed random variables which are independently distributed and non-degenerate. From these, a set of \(j\) pairs of parameters, \((\alpha_{(1,j)},\alpha_{(2,j)})\), may be estimated by method of moments or maximum likelihood, providing unique estimates for the first and second moments (uniqueness defined by the compact and totally bounded nature of the Beta distributed variables). This also provides a natural construction for the standardised sample characterisations of the variables \(Z_{(j)} = \tfrac{X_{(j)}}{\sqrt{\vec{\sigma}^{2}_{j}}}\), which we note for the reader is the groundwork for the construction of Spearman's \(\rho\), generalised such that it is identified (i.e., measurable) in the presence of ties. It is also a signed measure, unlike existing standard construction for the Wald statistics as applied to non-parametric data analysis. 

The inner product of these two Beta random variables \(\binom{p}{2} = \{k,k^{\prime}\} = K\), resulting from equation~\ref{eq:beta_nonneg} (i.e., the lower diagonal elements of the variance-covariance and correlation matrices). The Kemeny spanning basis and its isometric Euclidean mapping naturally follows as the correlation coefficient and the partial Wald test statistic, estimated upon the sample, with unit variance upon the population. Note however that \(\vec{\sigma}_{j}^{2}\) is both the empirical representation of the almost surely finite variance for all \(n\), and also that by equations \ref{eq:moment1} \& ~\ref{eq:moment2}. Therefore, we possess the just-identified maximum likelihood estimators, who inner-product is then Spearman's \(\rho\) correlation, and therefore also an unbiased maximum likelihood estimator, centred at expectation of 0, with a probability distribution defined as the product of two Beta distributions \(\langle{x,y\rangle}\). 

\begin{theorem}
The Kemeny \(\rho\) correlation coefficient (which in the limiting case is equivalent to the Spearman \(\rho\) upon continuous independent random variables) is a maximum likelihood estimator upon the Kemeny metric topological function space.
\end{theorem}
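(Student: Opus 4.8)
The plan is to exploit the exponential-family structure already identified for the normalised Kemeny distance and then to close the gap between the maximum likelihood and the analytic estimator by the completeness machinery established earlier. First I would fix notation: writing \(u = \rho_{\kappa}/N\) with \(N = n^{2}-n\), the normalised distance lies in \([0,1]\) and, by the preceding development, is distributed as \(\mathrm{B}(\alpha_{1},\alpha_{2})\) with density \(f(u\mid\alpha_{1},\alpha_{2}) = \mathrm{B}_{I}(\alpha_{1},\alpha_{2})^{-1}\,u^{\alpha_{1}-1}(1-u)^{\alpha_{2}-1}\). The log-likelihood for the standardised marginal sample \(\{Z_{(j)}\}\) is then a sum of terms linear in the minimal sufficient statistics \(\sum\log u_{i}\) and \(\sum\log(1-u_{i})\), so the model is a regular minimal exponential family.

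Second I would establish existence and uniqueness of the maximiser. Because the family is exponential with these two natural statistics, the log-likelihood is strictly concave on the natural parameter space, guaranteeing a unique interior maximiser whenever the data are non-degenerate (which holds on \(\mathcal{M}\setminus Q\)). The score equations take the digamma form \(\psi(\hat{\alpha}_{1})-\psi(\hat{\alpha}_{1}+\hat{\alpha}_{2}) = \overline{\log u}\) and \(\psi(\hat{\alpha}_{2})-\psi(\hat{\alpha}_{1}+\hat{\alpha}_{2}) = \overline{\log(1-u)}\); concavity (equivalently, the exponential nature noted after equation~\ref{eq:moment2}) is exactly what makes these simultaneously analytically and numerically solvable and produces a point estimate expressible as a linear function of \((\hat{\alpha}_{1},\hat{\alpha}_{2})\).

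Third, and this is the crux, I would identify the MLE with the analytic Kemeny estimator rather than grinding the digamma equations against the algebraic moment solution. The statistic \(\mu_{\kappa} = \sum_{k,l}\kappa_{kl}(x)\) has already been shown complete and sufficient (the Lehmann--Scheff\'{e} discussion preceding Theorem~\ref{thm:clt_kem}), and the Kemeny correlation has already been shown to be the unique minimum-variance unbiased / Gauss--Markov estimator (Theorem~\ref{thm:gauss-markov}, Lemma~\ref{lem:unbiased}). Since the MLE \((\hat{\alpha}_{1},\hat{\alpha}_{2})\) is a function of \(\mu_{\kappa}\), and the distance is unbiased with \(E(\mu_{\kappa})=0\), the uniqueness clause of Lehmann--Scheff\'{e} forces the MLE-derived correlation to coincide with that single MVUE, i.e.\ with the analytic estimator of equation~\ref{eq:kem_dist} in the limit \(m\to\mathcal{M}\). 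The hard part is precisely this identification: one must verify that the concave likelihood is maximised at the same value the moment equations~\ref{eq:moment1}--\ref{eq:moment2} return, and the completeness/uniqueness route is what lets me avoid a direct digamma-versus-polynomial comparison.

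Finally I would invoke functional invariance of maximum likelihood. Since the standardised marginals yield independent Beta laws, the bivariate likelihood factors, and the correlation \(\tau_{\kappa}\) (equation~\ref{eq:kem_cor}) is a continuous function --- affine in \(\rho_{\kappa}\) and hence in \(E[u]=\alpha_{1}/(\alpha_{1}+\alpha_{2})=\rho_{\kappa}/N\) --- of the estimated shape parameters. By the invariance theorem the MLE of \(\tau_{\kappa}\) is this map evaluated at the MLE shape parameters, so the Kemeny (continuous-limit Spearman) correlation \(\langle Z_{(x)},Z_{(y)}\rangle\) is itself a maximum likelihood estimator on the Kemeny function space, centred at \(0\) and unbiased by Lemma~\ref{lem:unbiased}.
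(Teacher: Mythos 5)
Your proposal diverges from the paper's argument in two substantive ways, and the second is a genuine gap. First, you build the likelihood on the \emph{bivariate distance} law, treating \(u=\rho_{\kappa}/N\) as the Beta-distributed observable, whereas the theorem (and the paper's proof) concerns Kemeny's \(\rho\) in the Spearman sense: the inner product \(\langle \vec{x}/\sigma_{\kappa}(\vec{x}),\vec{y}/\sigma_{\kappa}(\vec{y})\rangle\) of the two \emph{marginal} row-sum vectors of equation~\ref{eq:beta_nonneg}, each separately Beta-distributed. The paper estimates the four marginal shape parameters \(\alpha_{(1,x)},\alpha_{(2,x)},\alpha_{(1,y)},\alpha_{(2,y)}\) by MLE, computes the moments of the product \(Z=xy\) from them via equations~\ref{eq:mle_spearman}, verifies directly that the resulting expression coincides with the analytic inner-product formula (equation~\ref{eq:analytic}) up to the Bessel factor \(n/(n-1)\to 1\), and then invokes functional invariance. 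Your object of analysis is \(\tau_{\kappa}\) viewed as affine in \(\rho_{\kappa}\); that is a different estimator from the one the theorem names, and the factorisation of the bivariate likelihood into independent marginal Betas --- which is what licenses the paper's moment computation --- plays no role in your argument.

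The gap is in your ``crux'' step. Lehmann--Scheff\'{e} uniqueness identifies the MVUE among estimators that are \emph{both} unbiased \emph{and} functions of the complete sufficient statistic. The MLE of Beta shape parameters is not unbiased in general (the digamma score equations do not return unbiased \(\hat{\alpha}_{i}\) for finite samples), so you cannot conclude that the MLE-derived correlation ``must coincide with that single MVUE'' without first proving that the MLE-derived correlation is unbiased --- which is essentially the conclusion you are trying to reach. As written, the completeness route is circular: it assumes the property it is meant to deliver. The paper sidesteps this entirely because its Beta model is just-identified by the observed moments, so the fitted mean \(\hat{\alpha}_{1}/(\hat{\alpha}_{1}+\hat{\alpha}_{2})\) reproduces the analytic statistic by construction rather than by an appeal to uniqueness; the identification of MLE with the analytic estimator is a direct computation, not an optimality argument. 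To repair your version you would need either to carry out the digamma-versus-moment comparison you explicitly wished to avoid, or to supply an independent proof that the MLE-derived correlation is unbiased before invoking Lehmann--Scheff\'{e}. Your final invariance step is fine once the identification is secured, and matches the paper's closing move.
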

\begin{proof}
Begin with the marginal distribution evaluation for each random variable \(x_{k},y = x_{k^{\prime}}\), denoting two \(\kappa\) mappings from independently and identically sampled finite distributions upon the extended reals, estimated by standard MLE, for which the Gaussian errors are asymptotically correct. Allow \(\vec{Z} = \langle {x}{y}\rangle\) as a function of the marginal parameters \(\alpha_{(1,x)},\alpha_{(1,y)}\), \(\alpha_{(2,x)},\alpha_{(2,y)}\). These result from the random variables being transformed by the \(\kappa\) mapping under equation~\ref{eq:beta_nonneg}, producing strictly non-negative values in the unit interval. The first and second moments of \(Z,\mu^{1}_{\vec{Z}},\mu^{2}_{\vec{Z}}\) are constructed using the 2 parameters for each random variable under the following estimating equations:
\begin{subequations}
\footnotesize
\label{eq:mle_spearman}
\begin{multicols}{2}
\begin{equation}
\mu_{\vec{x}}^{1}  = \frac{\alpha_{1,\vec{x}}}{\alpha_{1,\vec{x}} + \alpha_{2,\vec{x}}}
\end{equation}
\begin{equation}
\mu_{\vec{y}}^{1}  = \frac{\alpha_{1,\vec{y}}}{\alpha_{1,\vec{y}} + \alpha_{2,\vec{y}}}
\end{equation}
\begin{equation}
\mu_{\vec{x}}^{2}  = \frac{\alpha_{1,\vec{x}} \alpha_{2,\vec{x}}}{(\alpha_{1,\vec{x}}+\alpha_{2,\vec{x}})^{2} \cdot (\alpha_{1,\vec{x}} + \alpha_{2,\vec{x}} + 1)}
\end{equation}
\begin{equation}
\mu_{\vec{y}}^{2}  = \frac{\alpha_{1,\vec{y}} \alpha_{2,\vec{y}}}{(\alpha_{1,\vec{y}}+\alpha_{2,\vec{y}})^{2} \cdot (\alpha_{1,\vec{y}} + \alpha_{2,\vec{y}} + 1)}
\end{equation}
\end{multicols}
\normalsize
\begin{equation}
\mu_{\vec{Z}}^{1}  = \Big(\tfrac{\hat{\alpha}_{1,x}}{\hat{\alpha}_{1,x} + \hat{\alpha}_{2,x}}\Big)\cdot \Big(\tfrac{\hat{\alpha}_{1,y}}{\hat{\alpha}_{1,y} + \hat{\alpha}_{2,\vec{y}}}\Big) = \mu_{\vec{x}}^{1} \cdot \mu_{\vec{y}}^{1}
\end{equation}
\begin{equation}
\label{eq:mle_prob}
\mu_{\vec{Z}}^{2}  = \Big(\tfrac{\alpha_{1,\vec{x}} \alpha_{2,\vec{x}}}{(\alpha_{1,x}+\alpha_{2,\vec{x}})^{2} \cdot (\alpha_{1,\vec{x}} + \alpha_{2,\vec{x}} + 1)}\Big) \cdot \Big(\tfrac{\alpha_{1,\vec{y}} \alpha_{2,\vec{y}}}{(\alpha_{1,\vec{y}}+\alpha_{2,\vec{y}})^{2} \cdot (\alpha_{1,\vec{y}} + \alpha_{2,\vec{y}} + 1)}\Big)  = \mu_{\vec{x}}^{2} \cdot \mu_{\vec{y}}^{2} + (\mu_{\vec{x}}^{1})^{2} \cdot (\mu_{\vec{y}}^{1})^{2}
\end{equation}
\end{subequations}

Constructively then, take two extended real vectors \(X,Y\) which are independently and identically sampled. Then the following equivalence is obtained 
\begin{subequations}
\begin{multicols}{2}
\begin{equation}
\label{eq:beta_nonneg}
\vec{x} = \sum_{k=1}^{n} \kappa_{kl}(x) - \min{\sum_{k=1}^{n} \kappa_{kl}(x)}\\
\end{equation}
\begin{equation}
\vec{y} = \sum_{k=1}^{n} \kappa_{kl}(y) - \min{\sum_{k=1}^{n} \kappa_{kl}(y)}\\
\end{equation}
\begin{equation*}
\sigma^{2}_{(x)} = \tfrac{1}{2(n-1)}\sum_{k=1}^{n}\sum_{l=1}^{n} \kappa^{2}_{kl}(x)\\
\end{equation*}
\begin{equation*}
\sigma^{2}_{(y)} = \tfrac{1}{2(n-1)}\sum_{k=1}^{n}\sum_{l=1}^{n} \kappa^{2}_{kl}(y)\\
\end{equation*}
\end{multicols}
\begin{equation}
\label{eq:analytic}
\rho_{(x,y)} = \tfrac{1}{n-1} \sum_{i=1}^{n} \frac{\vec{x}_{i}\cdot\vec{y}_{i}}{\sqrt{\sigma^{2}_{(x)}}\sqrt{\sigma^{2}_{(y)}}} = \tfrac{n}{n-1}\frac{\mu_{\vec{Z}}^{1} - \mu_{\vec{x}}^{1}\cdot\mu^{1}_{\vec{y}}}{\sqrt{\mu^{2}_{\vec{y}}\cdot\mu^{2}_{\vec{y}}}}\\
\end{equation}
\end{subequations}
Note that the empirical MLE estimated marginal variable parameters are produced to calculate the mean and variance of the inner-product, with the necessary Bessel correction, which under the limit \(\tfrac{n}{n-1} \to 1\) and therefore equation~\ref{eq:analytic} and equation~\ref{eq:mle_prob} are equivalent upon the population. It is thereby proven that the probability distribution of \(Z = xy\) when evaluated as a normed Beta distribution of the original order statistics produced independent of the presence or absence of ties, produces Kemeny's \(\rho\). Said estimator is the maximum likelihood and minimum variance estimator for said co-product, as a linear function of the MLE parameters, thereby concluding the proof. Unbiasedness directly follows as a linear operation upon the \(\kappa\) functions, which are themselves unbiased estimators.
\end{proof}

\begin{remark}
The variances of random variables measured upon the Kemeny \(\rho\) and Kemeny \(\tau_{\kappa}\) function  are identical, thereby allowing for identification upon all finite \(n\).  Further, the maximum likelihood estimation of the Kemeny \(\tau_{\kappa}\) correlation coefficient may then, by the monotonic orthonormal transformation of the MLE Kemeny \(\rho\) parameter obtained using equation~\ref{eq:kendall_sin}, also be considered a maximum likelihood estimator.
\begin{proof}[Remark 3]
Any function of maximum likelihood estimators is itself a maximum likelihood estimator, by functional equivalence. The crossproduct and variances are themselves maximum likelihood estimators (as given in equations~\ref{eq:mle_spearman}), and the Kemeny \(\rho\), as an affine linear transformation of said parameters, is itself an unbiased maximum likelihood estimator. Note however that this maximum likelihood estimator is constructed upon the domain of \(n\) bivariate \(i.i.d.\) samples, rather than \(\mathcal{M}^{*}\). By the bijection of the formulation, the maximum likelihood estimator, transformed via equation~\ref{eq:kendall_sin} for which the derivative \(\tau_{\kappa}\) is taken, is conjointly maximised by definition at the same stable point, as required by the projective duality, thereby ensuring that the maximum likelihood solution is obtained at both metric domains with a common solution both in the limit and for finite samples. Tighter bounds are further observed, due to the existence of a two orthonormal linear equation criterion, at least one of which is almost surely a Gauss-Markov estimator upon all finite samples. 
\end{proof}
\end{remark}



\subsection{Point bi-serial correlations}
Standard non-parametric estimators are also applied to group-wise bivariate correlations drawn from a common population, most familiarly termed the Wilcoxon-Mann-Whitney test. However, estimation in the presence of ties upon the outcome variable has been complicated by the lack of established \(\sigma\)-algebra, which has now been defined. This corresponds to the existence of distinct estimation procedures for the Wald and general probability frameworks under the Null Hypothesis Significance Testing approaches to non-parametric testing, as approached for correlations. By the strict sub-space characterisation of the common domain between the estimators as proposed in this manuscript, it immediately follows that performance is identical in the absence of ties. However, when present, the unbiased and minimum variance properties of our proposed estimators apply equivalently to the problem of location difference shifts as well. For example, the point-biserial distance from the null hypothesis that the Kemeny distance is equal to the median (i.e., independence between the group and outcome) occurs with probability \(\Pr_{z_{\phi}}(-\tfrac{-49}{30.63658}) = 0.1097329\), in line with the findings of the Pearson correlation, the Spearman correlation, the Kendall's \(\tau_{b}\) correlation, and the Wilcoxon Rank-Sum test with continuity correction (W = 25.5, p = .06933). Using the conventional \(\alpha\)-decision threshold, the null hypothesis still fails to be rejected, however as shown in our results for the next section, the presence of ties results in a less powerful (more widely dispersed) estimator upon the non-parametric space. 

\subsubsection{Numerical demonstration}
Under bootstrap re-sampling, we demonstrate that our proposed estimator is the most powerful estimator and exhibits the desired empirical minimum variance characterisation amongst all competitors, as constructed for a standard non-parametric two-sample location difference test. The data set was taken from the \(20 \times 3\) canonical `sleep' data set available in R, performed using repeated sampling to produce 25,000 150 sized data sets, with the test statistic results presented in Table~\ref{tab:non_par-location} and various non-parametric effect sizes derived from the correlation coefficients presented in Table~\ref{tab:non_par-effect}. As demonstrated with these results, we observe that the (known to be biased) non-parametric estimators have larger than accurate first order approximations (means) and second order approximations (standard deviations), exactly in line with the performance expected for any James-Stein (i.e., a biased or non-minimum variance) style estimator. The ranges of the competing estimators and the median absolute deviations are all, as hypothesised, found to be larger as well. These are provided to confirm that the minimum variance property of the partial Wald tests also (by definition) holds for the estimators, and also that standard correlation effect sizes have a direct relationship to the Kemeny metric, a given necessity by the projective geometric duality of almost any estimation (Statistical Learning) problem.

Investigation of the numerical results provide direct confirmation of that all measures of dispersion are minimised by the Kemeny based estimator class (e.g., range, mad and sd). Further, we observe that the non-normality of the class of estimators' test statistics is minimised most effectively for the Kemeny parameters, thereby demonstrating the most normal unbiased and minimum variance performance characteristics which satisfies the expected Gauss-Markov normality of the generalised central limit theorem, without concern to the objective linearity decomposition of the scores themselves (i.e., the \textcite{nelder1972} monotonic function space). In particular, we note that this allows for the analysis of a much more convenient ordinal linear framework for rating scales


\begin{table}[!ht]
\tiny
\centering
\caption{Comparison of the two-sample difference of location estimators' test statistics, to evaluate power and performance upon the Sleep data set for various sample sizes}
\label{tab:non_par-location}
\hspace*{-2.0cm}
\begin{tabular}{ccccccccccc}
  \toprule
Sample size & &  mean & sd & median & mad & min & max & range & skew & kurtosis  \\ 
  \midrule
\multirow{5}{*}{20} & Kemeny & 1.51678 & 0.71893 & 1.56675 & 0.72590 & -1.99108 & 3.26407 & 5.25516 & -0.43512 & 0.12202 \\ 
  & Wilcox & 24.26342 & 11.00131 & 23.5 & 11.1195 & 0 & 80 & 80 & 0.47789 & 0.16407 \\ 
  & Kendall & 1.81228 & 0.85218 & 1.87730 & 0.84998 & -2.34019 & 3.85763 & 6.19782 & -0.47474 & 0.16679 \\ 
  &   Pearson & 1.94842 & 1.09417 & 1.89843 & 1.04310 & -2.58812 & 8.59658 & 11.18469 & 0.36155 & 0.74704 \\ 
\midrule
\multirow{5}{*}{150} &  Kemeny & 4.446 & 0.726 & 4.47 & 0.72 & 0.90 & 6.92 & 6.01802 & -0.18223 & -0.01279 \\ 
 &  Wilcox &     1425.745 & 223.317 & 1417 & 222.39 & 664.50 & 2516.50 & 1852 & 0.18139 & -0.01912 \\ 
 &  Kendall &     5.174 & 0.843 & 5.21 & 0.84 & 1.05 & 8.04 & 6.98899 & -0.18372 & -0.01263  \\ 
 &  Pearson & 5.363 & 1.03554 & 5.35 & 1.02 & 1.24 & 9.92 & 8.67795 & 0.09475 & 0.02065 \\ 
\midrule
\multirow{5}{*}{750} & Kemeny & 10.03460 & 0.73047 & 10.04483 & 0.73596 & 6.42783 & 12.68464 & 6.25682 & -0.08382 & 0.02329 \\ 
  & Wilcox & 35833.29776 & 2503.45323 & 35803.00000 & 2521.90260 & 26834.00000 & 48158.50000 & 21324.50000 & 0.08592 & 0.02627 \\ 
  & Kendall & 11.62749 & 0.84565 & 11.63919 & 0.85096 & 7.45119 & 14.68909 & 7.23790 & -0.08438 & 0.02387 \\ 
  & Pearson & 12.00136 & 1.02885 & 11.99176 & 1.03368 & 7.33255 & 16.16491 & 8.83236 & 0.06177 & 0.02354 \\ 
\bottomrule
\end{tabular}
\end{table}
 
\begin{table}[!ht]
\scriptsize
\centering
\caption{Comparison of the two-sample difference of location estimators' effect sizes, to evaluate power and performance upon the Sleep data set for various sample sizes}
\label{tab:non_par-effect}
\hspace*{-1.0cm}
\begin{tabular}{ccccccccccc}
  \toprule
Sample size & &  mean & sd & median & mad & min & max & range & skew & kurtosis  \\ 
  \midrule
\multirow{7}{*}{20} & Kemeny & 0.25789 & 0 & 0.25789 & 0 & 0.25789 & 0.25789 & 0 & ---  & --- \\ 
  & \(\sin(\text{Kem}~\tau_{\kappa})\) & 0.39411 & 0 & 0.39411 & 0 & 0.39411 & 0.39411 & 0 & ---  & --- \\ 
  &   Wilcox r & -0.41689 & 0.19097 & -0.44700 & 0.17717 & -0.78900 & 0.26200 & 1.05100 & 0.72752 & 0.22675 \\ 
  &   Glass' r & -0.50094 & 0.23034 & -0.53100 & 0.20534 & -0.94800 & 0.33300 & 1.28100 & 0.72624 & 0.26976 \\ 
  &   Kendall & 0.36761 & 0.16849 & 0.39367 & 0.15338 & -0.23295 & 0.69614 & 0.92908 & -0.72588 & 0.23961 \\ 
  &   Spearman & 0.42775 & 0.19591 & 0.45893 & 0.18081 & -0.26784 & 0.81054 & 1.07838 & -0.72792 & 0.22655 \\ 
  &   Pearson & 0.40385 & 0.18343 & 0.42146 & 0.17135 & -0.22409 & 0.74867 & 0.97276 & -0.73520 & 0.61920 \\ 
\midrule
\multirow{7}{*}{150} & Kemeny & 0.24500 & 0.03954 & 0.24635 & 0.03940 & 0.08116 & 0.38917 & 0.30801 & -0.19257 & 0.03618 \\ 
  & \(\sin(\text{Kem}~\tau_{\kappa})\) & 0.37469 & 0.05760 & 0.37739 & 0.05738 & 0.12715 & 0.57394 & 0.44680 & -0.26570 & 0.08003 \\ 
  & Wilcox r & -0.42277 & 0.06812 & -0.42500 & 0.06820 & -0.67000 & -0.14000 & 0.53000 & 0.19522 & 0.03770 \\ 
  & Glass' r & -0.48998 & 0.07898 & -0.49300 & 0.07858 & -0.77300 & -0.16100 & 0.61200 & 0.19604 & 0.03656 \\ 
  & Kendall & 0.35828 & 0.05761 & 0.36033 & 0.05751 & 0.11875 & 0.56640 & 0.44765 & -0.19380 & 0.04031 \\ 
  & Spearman & 0.42419 & 0.06834 & 0.42672 & 0.06824 & 0.14030 & 0.67137 & 0.53106 & -0.19526 & 0.03775 \\ 
  & Pearson & 0.40066 & 0.06396 & 0.40293 & 0.06354 & 0.10657 & 0.64143 & 0.53486 & -0.17317 & 0.03530 \\  
\midrule
\multirow{7}{*}{750} & Kemeny & 0.24501 & 0.01779 & 0.24532 & 0.01776 & 0.17397 & 0.31319 & 0.13922 & -0.08538 & 0.00562 \\ 
  & \(\sin(\text{Kem}~\tau_{\kappa})\) & 0.37529 & 0.02591 & 0.37588 & 0.02584 & 0.26988 & 0.47235 & 0.20247 & -0.11913 & 0.01306 \\ 
  & Wilcox r & -0.42487 & 0.03085 & -0.42700 & 0.03262 & -0.54400 & -0.30200 & 0.24200 & 0.08827 & 0.01076 \\ 
  & Glass' r & -0.49003 & 0.03558 & -0.49100 & 0.03558 & -0.62600 & -0.34800 & 0.27800 & 0.08577 & 0.00614 \\ 
  & Kendall & 0.35834 & 0.02593 & 0.35877 & 0.02591 & 0.25479 & 0.45754 & 0.20275 & -0.08526 & 0.00694 \\ 
  & Spearman & 0.42515 & 0.03085 & 0.42569 & 0.03084 & 0.30184 & 0.54302 & 0.24119 & -0.08581 & 0.00609 \\ 
  & Pearson & 0.40153 & 0.02876 & 0.40192 & 0.02873 & 0.28588 & 0.50723 & 0.22135 & -0.08318 & 0.03151 \\ 
\bottomrule
\end{tabular}
\end{table}

There are a number of noteworthy characteristics to highlight in these results. In particular, stemming from Table~\ref{tab:non_par-effect}, which provides comparable measures (i.e., effect sizes), demonstrates several important flaws with the existing Kendall \(\tau_{b}\) estimator, in that the average biased correlation coefficient is closer to the Pearson correlation than to the Kemeny correlation. However, the Kemeny correlation \((\tau_{\kappa})\) and its sinusoidal transformation (analogous to Spearman's \(\rho\)) display smaller variances (as well as ranges and MAD; Table~\ref{tab:non_par-effect}) than all other estimators, even considering these statistics were all designed to be non-parametric estimators, which are visually depicted in Figure~\ref{fig:density_correlaton_wilcox}.

\begin{figure}[!ht]
\hspace*{-2.0cm}
\caption{Visualisation of the distributions of various point-biserial correlation coefficients upon 25,000 resampled sets of \(n=750\) elements.}
\label{fig:density_correlaton_wilcox}
\includegraphics[height = 8cm,width = 15cm]{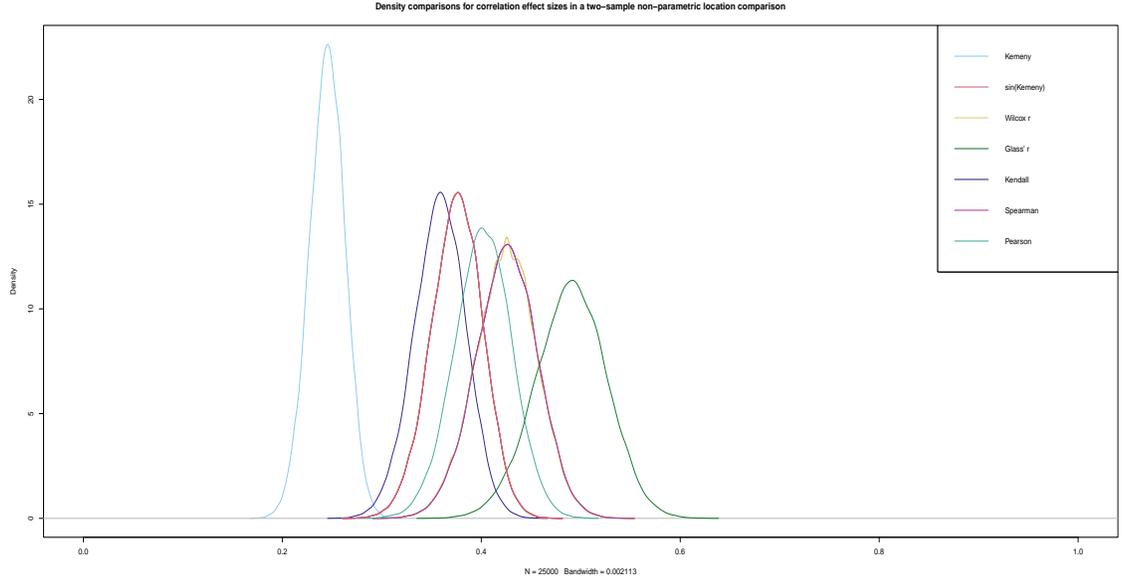}
\end{figure}

While it is true that largest effect size estimators are viable, and even desirable, candidates, it is clearly empirically and theoretically derived that even under a completely non-parametric bootstrap re-sampling procedure, the tightest bounds are obtained using the parametric framework for arbitrary distributed sample proposed in this manuscript. We also would highlight that the sinusoidal transformation of the Kemeny \(\tau_{\kappa}\) correlation possesses a central estimator which is, qualitatively, the average of the alternative estimators available, including the model misspecification found in the use of the Pearson correlation. This provides an excellent demonstration of the duality principle while further demonstrating that the presence of ties substantially affects the estimation characteristics of non-parametric rank-based estimators. The regularity of the Kemeny topology is functionally more explicit, and thus provides the highly desired tightest bounds which are otherwise only empirically estimable, and will only asymptotically achieve the finite sample Cramer-Rao lower-bound which our estimator already is proven to possesses, and which is achieved using the most `normal' approximation of the skewness and excess kurtosis expected values (both asymptotically 0 for any linear function space; Lemma~\ref{lem:cramer-rao}). 
 
\subsection{The introduction of greater sample dependency (Studentisation) in the Kemeny linear model}
\label{subsection:studentising}
A natural question now arises, when encountering parametric distributions of a statistical estimator which satisfies the Gauss-Markov theorem upon any bivariate set of independent random variables: how does one construct the test statistic to examine for significant differences from 0, the null hypothesis, upon a finite sample, when \(\mathcal{M}\) may not be sampled without replacement? The previously derived representation provides us the means to avoid the bifurcation of the variance approximation between tied and non-tied samples, and allows for an exact p-value to be uniquely determined for all finite samples, using the distribution of distances (affine linear transformations of the respective distances for the correlations) divided the standard deviations of these distributions.

This representation however is not exhaustive, as it precludes the existence of, for instance, finite response sets, in which the rank ordering (and thus the standard deviation) is less than the extrema, and which is also not always uniform. This allows us to construct estimators upon domains which encompass more degrees of freedom: in essence, this allows us to extend the characterisation of the t-distributions to non-parametric estimation scenarios. Pragmatically, this provides a means of estimation for the Kemeny estimator for which the Student's and Welch's t-tests may arise. We provide an empirical approximation by a scaling nuisance parameter upon the obtained \(t\)-statistics, to maintain the compact and totally bounded nature of the Kemeny Wald statistics. Considering however that the variances of Kemeny's \(\rho\) and equation~\ref{eq:kem_variance} have already been noted to be equivalent (as variances of the same linear \(\kappa\) function mapping), this allows for both a limiting one-sample testing procedure with known population variance (and arbitrary realised distance) and also for finite sample corrections to the estimated common sample population.  

\subsubsection{A non-parametric alternative to Student's t-test for one-sample}

One would quite naturally wonder where the degrees of freedom arise from to construct the t-distribution, as compared to the sufficient statistic of the Kemeny distance defined in equation!~\ref{eq:z_kemeny}. When addressing the previously described correlation estimator (see equation~\ref{eq:kem_cor}), the cross-product is already normed, possessing only 1 degree of freedom corresponding to the  distance between two vectors relative to the uniquely defined origin at 0: \(\frac{n^{2}-n}{2} \pm n^{2}-n\), resulting in  \(df = n\). In particular however should be noted the occurrences for which these events arise, namely non-uniform sampling over \(\mathcal{M}\) which naturally must increase the expected variance of the distribution of all Kemeny distances \(\rho_{\kappa}\). Therefore, we desire a sample dependent means of increasing the variances to reflect the greater than otherwise extant probability of sampling certain (odd) distances with greater proclivity, or probability. This then defines the standard scenario for the application of Slutsky's theorem. We now show that this expansion is one sided (i.e., the variance of the \(\rho_{\kappa}\) function space is lower bounded by \(\dot{\sigma}^{2}_{\kappa}(n)\)):
\begin{lemma}
\label{thm:slutsky_lemma}
The sample dependent variance of the population of all Kemeny distances \(\mathcal{M}\) is almost surely greater than or equal to \(\dot{\sigma}^{2}_{\kappa}(n)\) (equation~\ref{eq:kem_population_variance}).
\end{lemma}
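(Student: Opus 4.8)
The plan is to identify $\dot{\sigma}^{2}_{\kappa}(n)$ of equation~\ref{eq:kem_population_variance} as the variance of $\rho_{\kappa}$ under the \emph{reference} regime in which every element of $\mathcal{M}$ is enumerated exactly once, i.e.\ uniform sampling without replacement over the full population. As already noted, this regime corresponds to the symmetric Beta-Binomial law with shape parameters $\alpha_{1} = \alpha_{2} = n$, whose normalised second central moment is the low-dispersion member of its symmetric family. The sample-dependent variance, by contrast, is the variance of $\rho_{\kappa}$ under the empirical law that a finite sample actually induces on the distances, where $\mathcal{M}$ cannot be exhausted without replacement and certain (odd) distances are revisited with excess frequency. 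The entire burden of the lemma is therefore to show that passing from the reference law to any such data-induced law can only inflate the second central moment.

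First I would invoke the law of total variance. Writing the data-induced law as a mixture over the realised sampling multiplicities (strata) of $\mathcal{M}$, the variance of $\rho_{\kappa}$ decomposes as the expected within-stratum variance plus the between-stratum variance of the conditional means. Within each stratum the draws remain uniform over a sub-collection of $\mathcal{M}$, so the first term reproduces $\dot{\sigma}^{2}_{\kappa}(n)$ up to the degrees-of-freedom correction already absorbed in equation~\ref{eq:kem_population_variance}; the second term, being itself a variance, is almost surely non-negative. This already yields $\sigma^{2}_{\mathrm{sample}} \ge \dot{\sigma}^{2}_{\kappa}(n)$ in expectation, with equality precisely when the between-stratum means coincide, which occurs only in the degenerate complete-enumeration case.

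To fix the sign of the perturbation I would exploit the monotonicity of the symmetric Beta-Binomial variance: the normalised quantity $\tfrac{\alpha_{1}\alpha_{2}}{(\alpha_{1}+\alpha_{2})^{2}(\alpha_{1}+\alpha_{2}+1)} = \tfrac{1}{8n+1}$ decreases strictly in the total concentration $\alpha_{1}+\alpha_{2}$. The reference law carries concentration $\alpha_{1}+\alpha_{2} = 2n$, inherited from the combinatorial structure of $\mathcal{M}$. The claim then reduces to showing that a data-induced law carries effective concentration no greater than $2n$ --- that finite, non-exhaustive sampling over the compact, totally bounded support $U(\mathcal{M})$ (Lemma~\ref{lem:kem_bounded}) can only introduce extra-Binomial dispersion, redistributing mass outward toward the extrema $\pm\tfrac{n^{2}-n}{2}$ rather than further onto the median. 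Granting this, the variance moves monotonically upward and the lower bound follows almost surely.

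The main obstacle is exactly this over-dispersion claim: the law of total variance delivers the bound only in expectation, and one must exclude sample configurations that under-disperse relative to the symmetric reference. I expect to close the gap by a convex-order (Jensen) argument on exchangeable draws --- revisiting elements of $\mathcal{M}$ under replacement reproduces each conditional law but compounds them, and compounding a fixed-mean family on a fixed totally bounded support cannot decrease spread, so the effective concentration never exceeds $2n$. This one-sidedness is the downstream payoff: the Studentising nuisance scale is bounded below by unity and converges to unity as $n\to\infty^{+}$, whence Slutsky's theorem preserves the limiting reference distribution.
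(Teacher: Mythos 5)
Your route is genuinely different from the paper's, and it founders on exactly the step you yourself flag as the obstacle. The law-of-total-variance decomposition gives $\operatorname{Var}(\rho_{\kappa}) = E[\operatorname{Var}(\rho_{\kappa}\mid S)] + \operatorname{Var}(E[\rho_{\kappa}\mid S])$, and the second term is indeed non-negative; but your claim that the first term ``reproduces $\dot{\sigma}^{2}_{\kappa}(n)$'' because draws within a stratum are uniform over a sub-collection of $\mathcal{M}$ is unjustified. A uniform law on a strict sub-collection of $\mathcal{M}$ generally has strictly \emph{smaller} variance than the uniform law on all of $\mathcal{M}$ (take a sub-collection concentrated near the median distance), so $E[\operatorname{Var}(\rho_{\kappa}\mid S)]$ can fall below $\dot{\sigma}^{2}_{\kappa}(n)$ and the decomposition then delivers nothing. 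The convex-order patch does not close this: a mixture of fixed-mean laws on a common compact support has variance equal to the mixture of the component variances (the between-stratum term vanishes precisely when all conditional means coincide), and each component can be arbitrarily concentrated, so ``compounding a fixed-mean family on a fixed totally bounded support cannot decrease spread'' is false as stated. Likewise the Beta-Binomial monotonicity step only relocates the problem: you still must show the effective concentration never exceeds $2n$, which is exactly the over-dispersion claim you set out to prove.

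For comparison, the paper does not attempt a mixture decomposition at all. It argues by contradiction from the fixedness of the support: since $U(\mathcal{M})$ is determined by $n$ alone and the estimator is unbiased with expectation zero on that support, any law with smaller variance would (the paper asserts) require truncating $U(\mathcal{M})$ or shrinking $\mathcal{M}$, rendering some observable vectors non-measurable, which contradicts Lemma~\ref{lem:unbiased} and the uniformity supplied by Theorem~\ref{thm:gc}. Whatever one makes of that argument's own rigour, it grounds the one-sidedness in support-plus-unbiasedness rather than in a variance decomposition. To salvage your version you would need an analogous structural input --- some reason the empirical distance law dominates the reference law in the convex order --- and neither of your two tools supplies it.
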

\begin{proof}
The support of the signed distance function space for \(\rho_{\kappa}\) is given by the compact and totally bounded interval \(U(\mathcal{M}) \in \mathbb{N}\), with expectation 0, and almost surely finite variance (defined using equation~\ref{eq:kem_population_variance}). As the variance of all distances are measurable upon the population over \(\mathcal{M}\), any reduction in the variance of the unbiased estimator must correspond to a truncation (reduction) of the defined support of the Kemeny distance or a decrease in the set \(\mathcal{M}\), which is exhaustive over all permutations without replacement, or else become a biased estimator. By the Glivenko-Cantelli theorem (Theorem~\ref{thm:gc}), the signed Kemeny distance is uniform, and given its unbiasedness (Lemma~\ref{lem:unbiased}), the only viable alternative is to reduce the support of the measure space in a uniform manner, while remaining an unbiased estimator. 

As the independent random variables measured upon the Kemeny metric of length \(n\) must exhibit a static neighbourhood of distances no less than \(U(\mathcal{M})\), it therefore follows that shrinking the variance must also shrink the support, in order to restrict the possible observable space. Such an operation would naturally result in a non-measurable vectors while remaining uniformly i.i.d. sampled by axiomatic definition, and one would which be, by definition, biased upon the sample. However as the Kemeny metric is unbiased, this results in a paradox. Therefore, we reject the hypothesis that the variance must decrease the support, and it therefore follows, as by proof by contradiction, that the Kemeny variance is lower bounded by equation~\ref{eq:kem_population_variance} as a function of the observed bivariate sample, else risking non-measurable random variables upon the extended real line resulting in a paradox. By induction upon the given sample \(n\), wherein all \(n < n + k, k = 1,\dots,\infty^{+}\), it follows that the sub-sample data elements in \(k\) are those directly observed upon the distances, thereby introducing no additional information without a corresponding larger sample, which must possess a strong regular probability structure for any sample size (Lemma~\ref{lem:haar}) or else be non-measurable for both larger and smaller samples. 
\end{proof}

More complex estimation scenarios naturally arise however, when dealing with location upon the distribution with a fixed distance, thereby introducing more specific information in the form of vector specific identification. That is, variances of the two independent vectors \(x,y \in \overline{\mathbb{R}}^{n \times 2}\) constructed from the respective \(\kappa(x),\kappa(y),\) per equation~\ref{eq:kem_variance}, such as when dealing with non-continuous random variables. This scenario also addresses the problem of finite sample cardinality \(m \in \mathcal{M}\), as we may now identify specific distribution of observations with frequency as a function of the variances (i.e., repeated, or non-mutually exclusive, sampling), rather than the uniform and mutually exclusive selection upon the exponential family. 


As such, estimating the variance (and proportionally the standard deviation) of each explicitly characterises the distribution which produces a given test, with a sample independent measure space upon which the concentration of the probability density function, and thus the likelihood function, must follow. With more degrees of freedom, the Student \(t\)-distributions arise upon the Wald statistics, wherein the variances, which are most typically distinct (following sampling without replacement), naturally follows. By our given construction we possess the standard scenario breakdown for Student's one and Welch's two-sample t-distributions of test statistics, wherein repeated sampling may occur, such as when observing finite response sets rather than continuous random variables.

By equation~\ref{eq:kem_variance} observe that for the prototypical Binomial case (in which perfect ordering with ties is observed across two events) the variances under the Euclidean and Kemeny formulations are identical, due to the isometric linear equality of the two Hilbert spaces. Using an arbitrary permutation with ties as the default null hypothesis for the ordering \(\mu_{0}\), the corresponding distance from the empirical sample permutation is in \(U(\mathcal{M})\), normed by the standard deviation. Thus, the ratio of the signed distance to the concentration tends to 0 under the following relational, and the resulting Wald test converges to infinite power with the ability to discriminate between all other distances upon the population, without assuming a fixed population of values \(U(\mathcal{M})\), but instead allowing for repeated sampling upon said neighbourhood; this is a simple expectation of the linear convergence of the strong law of law numbers. Here reflected in the term \(s_{p}\), the pooled standard variance between two variables:
\begin{subequations}
\begin{equation}
t_{n-1}  = -\frac{\rho_{\kappa}}{s_{p}}
\end{equation}
\begin{equation}
s_{p} = \sqrt{\tfrac{\dot{\sigma}^{2}_{\kappa}}{2\cdot\hat{\sigma}^{2}_{\kappa}(X)}} 
\end{equation}
\end{subequations}
where the terms in the denominator are compact and totally bounded upon the interval \((0,0.5]\), and thus sum in expectation (i.e., the combination of a random variable and itself) is a total less than or equal to 1 in the case of random ties occurring with no greater than random chance upon a bivariate distribution. This produces either a sample dependent variance identical to the population and equal in construction to the Spearman's \(\rho\) partial Wald test, or in the presence of greater variability introduces the necessary Slutsky adjustment. The increased probability of further distances arising upon the population of Kemeny distances is thereby allowed, while remaining compact and totally bounded. With this adjustment is also maintained the asymptotic normality, consistency, and minimum variance characteristics we otherwise expect, for any linearly orderable bivariate distribution. 

As characterised, all distances for all sample sizes are normed to be on the unit interval, and therefore the ratio of the distance on the unit interval divided by the finite and totally bounded estimated variance, which is also compact and totally bounded, is always between \([0,1]\), and therefore is almost surely consistent (ignoring the degenerate empirical distributions). If we characterise the signed distance from the arbitrary null hypothesis permutation (standard practice would denote a distance of 0 from the expectation of the median distance for \(n\), equivalent to a mean difference of 0), we obtain almost surely finite empirical estimates, from which is allowed arbitrary sampling of discrete and continuous independent random variables which are orderable. Note that as a function of \(\hat{s}_{p}^{2}\) then, the population distribution of \(U(\mathcal{M})\) must be adjusted as well: this correction, to reflect the greater regular sampling upon the ties (as the non-ties as a negligible and thus ignorable component of any distribution of sufficient \(n\)), we must expect that the variance of the distances must increase, as would be required for more than the expected positive squared distances occurring. Therefore, a simple adjustment for the population is provided, \(U^{*}(\mathcal{M}) = \tfrac{U(\mathcal{M})}{s^{2}_{p}}\), inflated by the rate of concentration. Therefore the variance upon this new distribution is obtained as the adjustment of equation~\ref{eq:kem_population_variance} normed by \(s^{2}_{p}\), \[s^{2}_{\kappa} = U^{*}(\mathcal{M}) = \tfrac{\frac{(n - 1)^2 (n + 4) (2 n - 1)}{18 n}}{s^{2}_{p}},\] which without loss of generality complies with the requirements as established in Lemma~\ref{thm:slutsky_lemma}, and is immediately recognisable to follow a \(\chi^{2}\) distribution upon the asymptotic population.

\subsubsection{Welch's t-test}

Consider two vectors samples \((x,y)\) independently drawn from a corresponding permutation mapping representation in \(\mathcal{M}^{*}\), allowing for repeated observations (sampling with replacement). Note that the variance-covariance matrix \(\Xi\) upon the Kemeny metric possesses three such estimators, from which is obtained a just-identified system of equations for an \(\overline{\mathbb{R}}^{n\times 2}, n < \infty^{+}.\) Specifically, these three proportionally linear estimators are: (1)~\(\kappa(x)\kappa^{\intercal}(y) = \rho_{\kappa}(x,y),\) (2)~\(\sigma_{\kappa}(x) = \sqrt{\kappa(x)\kappa^{\intercal}(x)},\) (3)~\(\sigma_{\kappa}(y) = \sqrt{\kappa(y)\kappa^{\intercal}(y)},\) denoting the correlation and standard deviations of x and y, respectively.

Under empirical simulation, we may observe that the distribution of the produced distribution is compliant with a Student t-distribution with \(n-2\) degrees of freedom, thereby allowing us to produce a signed test statistic which conforms to the standard t-distribution (Slutsky's theorem) for non-parametric correlations, leveraging the non-constant, and therefore sample dependent measure of variable variance. Empirical simulations have confirmed that the Wald statistic for the bivariate correlation does follow a \(t_{n-2}\) distribution and Figure~\ref{fig:example_student}\footnote{Legend flips the Student and z distribution iconography.} provides empirical comparison for a Student \(t_{1248}\) distribution with scaled non-centrality parameter equal to the mean of the observed distribution, ensuring a consistent empirical expectation between the theoretical distribution and the empirical Wald test statistics; empirical values from the depicted results are provided in Table~\ref{tab:3}, and demonstrate that in the presence of ties upon random variables, the Kendall's \(\tau_{b}\) estimator is neither efficient nor unbiased upon finite samples. Note here that again, the Kemeny Wald tests possess the minimum variance and smallest range (i.e., tightest) with the most normal distribution, all consistent with the proven performance of an unbiased minimum variance estimator under the Gauss-Markov theorem. Consistency of the general Kemeny class estimator \(\hat{\theta}\) and all affine-linear and monotone transformations thereupon is established by equation~\ref{eq:consistency} under Tchebyshev's inequality, which is true by the stable and unbiased nature of the probability distribution (Theorem~\ref{thm:strict_subgauss},Lemma~\ref{lem:unbiased}), which is observed for the entire valid permutation space \(m\to\mathcal{M}^{*}\):  

\begin{subequations}
\begin{align}
\label{eq:consistency}
\lim_{n\to\infty^{+}} \lim_{m\to\mathcal{M}^{*}} \Pr(E_{m}\Big(\frac{\hat{\rho}_{\kappa}(x, y)}{s_{\kappa}} - \frac{\rho_{\kappa}}{s_{\kappa}}\Big)^{2} = 0) = 1, \\
\Pr((\hat{\theta} - \theta)^{2})\geq \epsilon)=\Pr((\hat{\theta} - \theta)^2\geq \epsilon^2)\leq \frac{E(\hat{\theta} - \theta)^2}{\epsilon^2},\\
\lim_{m\to\infty^{+} \equiv \mathcal{M}^{*}} E_{m}(\hat{\theta} - \theta)^2 = 0,~\forall\;n < \infty^{+}.
\end{align}
\begin{equation}
t_{n-2} = -\frac{\theta(x,y)}{s_{\kappa}}
\end{equation}
\begin{equation}
s_{p} = \sqrt{\tfrac{\dot{\sigma}^{2}_{\kappa}}{\hat{\sigma}^{2}_{\kappa}(x) + \hat{\sigma}^{2}_{\kappa}(y)}},
\end{equation}
\begin{equation}
\label{eq:kemeny_pooled}
s_{\kappa} =  \sqrt{\tfrac{\frac{(n - 1)^2 (n + 4) (2 n - 1)}{18 n}}{s^{2}_{p}}} = \sqrt{\frac{((n - 1)^2 (n + 4) (2 n - 1))\cdot s^{-2}_{p}}{18 n}}  
\end{equation}
\end{subequations}
noting that the combination of the two variances in the denominator maintains the averaging of the two random variables' variances by the law of total expectations. The averaging is accomplished due to the use of the scaling constant of \(\sqrt{.5}\) in each \(\kappa^{2}(\cdot)\) matrix cell in equation~\ref{eq:kem_variance}. In equation~\ref{eq:kemeny_pooled} are the concentration adjustments \(\sigma^{2}_{\kappa}(x),\sigma^{2}_{\kappa}(y)\) which are linearly combined to produce the rate of under-dispersion of the ranks (less than or equal to 1), which then divide the known population variance such that arise an estimate of the difference in ranking permutations between two random vectors \(x\) and \(y\) for given \(n\), which thereby norms the population \(U^{*}(\mathcal{M})\) and remains almost surely finite.

\begin{figure}[!ht]
\centering
\caption{\small {Demonstration that the \(t_{1248}\)-distribution for the Kemeny partial Wald test is consistent to the theoretical distribution in accordance with the necessary Gauss-Markov theorem and has higher-order moments more normally distributed.}}
\label{fig:example_student}
\includegraphics[height = 6cm,width = 10cm,keepaspectratio]{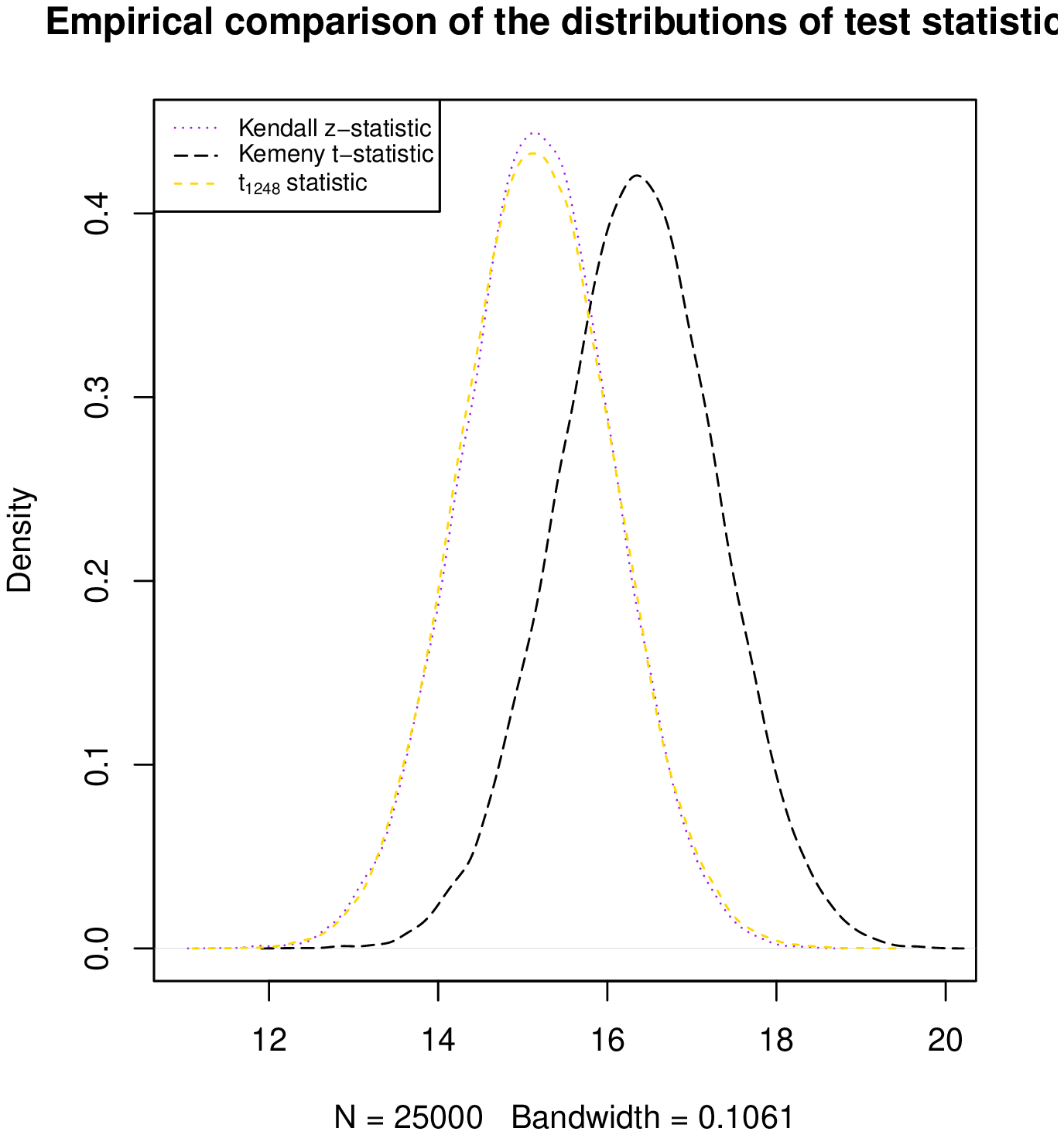}
\end{figure}
\begin{figure}[!ht]
\centering
\caption{\small{Visualisation of the empirical distributions with n = 2500}, wherein even for large samples, we observe that the Kemeny \(\tau_{\kappa}\) estimator is both more consistent and more normally distributed.}
\label{fig:example_student2}
\includegraphics[height = 8cm,width = 8cm,keepaspectratio]{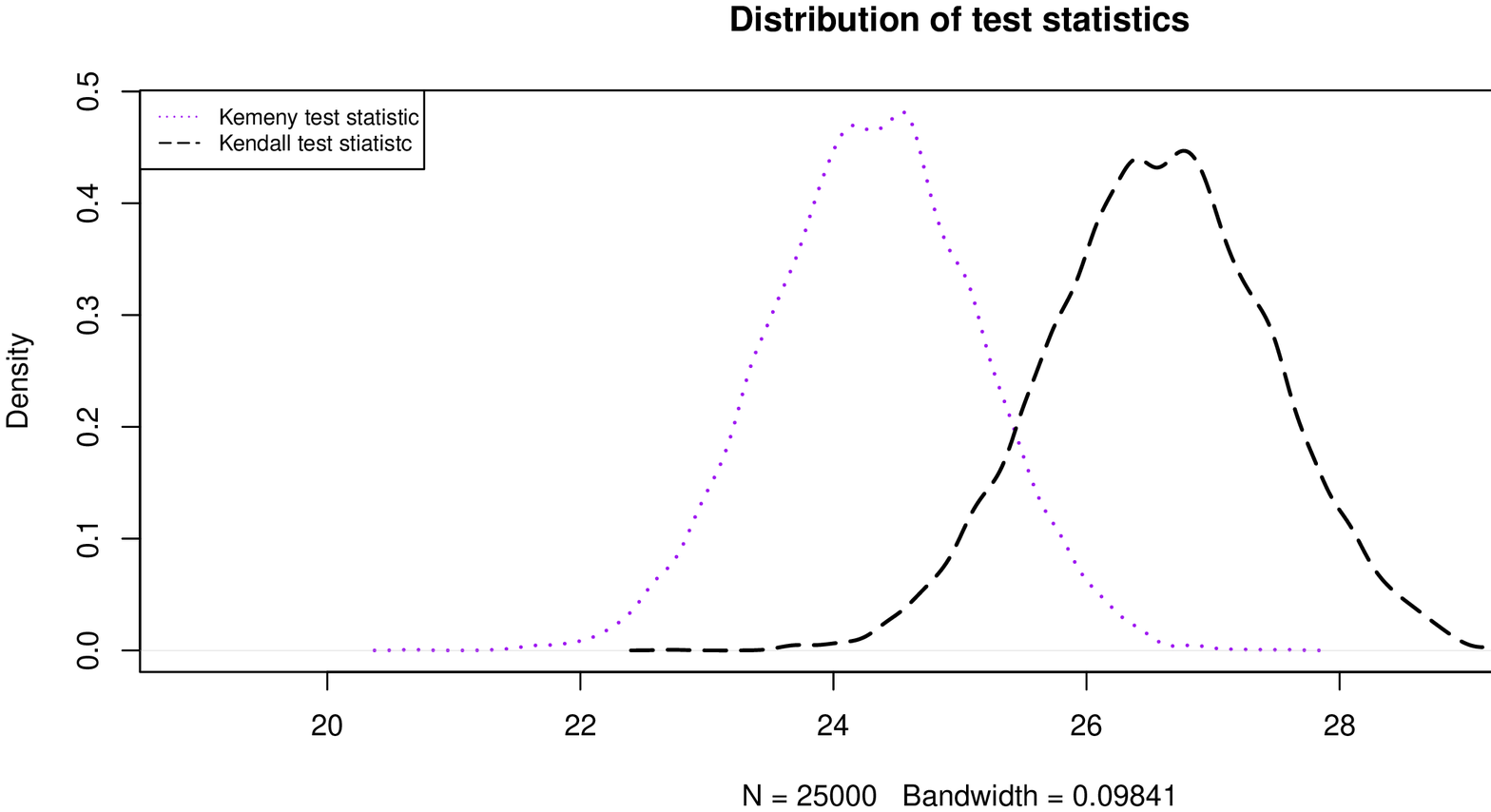}
\end{figure}

\begin{table}[!ht]
\centering
\scriptsize
\caption{Empirical comparison of the distribution of 25,000 test statistics for the bivariate correlations and Welch's \(t_{1248}-\)test over ordinal random variables for various estimators.}
\label{tab:3}
\begin{tabular}{cccccccccc}
  \toprule
Sample-size & & mean & sd & median & min & max & range & skew & kurtosis \\
  \midrule
\multirow{3}{*}{(a) n = 1250} & Kemeny \(t-\)test & 15.1668 & 0.8943 & 15.1709 & 11.3614 & 18.4783 & 7.1169 & -0.0392 & -0.0008 \\
& Kendall \(z-\)test & 16.3493 & 0.9521 & 16.3554 & 12.2476 & 19.8721 & 7.6245 & -0.0436 & 0.0056 \\
  & Student \(t_{1248}\tfrac{1}{c^{3}}\) & 15.1746 & 0.9169 & 15.1709 & 10.9595 & 18.9556 & 7.9961 & 0.0294 & -0.0014 \\
\midrule
\multirow{2}{*}{(b) n = 2500} & Kemeny \(t-\)test & 24.3124 & 0.8286 & 24.3218 & 20.6684 & 27.5784 & 6 .9101 & -0.0324 & -0.0012 \\
& Kendall \(z-\)test & 26.5708 & 0.8893 & 26.5763 & 22.7169 & 30.0486 & 7.3318 & -0.0298 & -0.0074 \\
  \bottomrule
\end{tabular}
\end{table}

\subsubsection{Paired sample non-parametric t-test}
Assume the existence of a bivariate population, wherein independent random variable \(x\) is binomial and the ground state occurs with probability \(1 - \pi\), and that \(y\) is a random variable upon which the elements \(y_{i}, i = 1,\ldots,n\) may be ordered. Each random variable may be indexed by subject \(i\), and thus the two random variables denote a common measurement upon each individual under different conditions: thus is posed a paired t-test. The non-parametric equivalent is the Wilcoxon signed rank test. Consider the Sleep data set once again, wherein \(n = 10\) observations are made between two groups of the same individuals, treated with two distinct drug assignments. Using the following linear expectation, 
\begin{equation}
 -\frac{\rho_{\kappa}(X,Y)}{\tfrac{\dot{\sigma}^{2}_{\kappa}(n)}{\hat{\sigma}_{\kappa}(X-Y)}} \sim t_{n-1}
\end{equation}
we obtain the standard deviation of the variance adjustment to equation~\ref{eq:kem_population_variance} from equation~\ref{eq:kem_variance} upon the difference in scalar differences between the ordering of the first and second trials. In the event of greater dispersion of the ranks (e.g., more ties and therefore greater uncertainty with respect to a true ordering) the denominator increases, and therefore the consistent estimator (and corresponding test statistic)  to the Wilcoxon rank-sum test in the presence and absence of ties, becomes smaller, indicating less power relative to the numerator, which contains a constant signed distance in the interval of the neighbourhood \(U^{*}(\mathcal{M})\).  
\subsection{Duality between the Kemeny norm measure space and Frobenius norm measure space}

The concept of duality for projective geometry is a natural avenue of investigation, first explored for correlations by \textcite[p.~129]{kendall1948}. The concept holds that for the planar projective geometry of the Euclidean space, there exists a dual permutation geometry, one which has been observed here between the Kemeny and Euclidean metric spaces. In a formal sense, by the finite nature of the Kemeny metric space, we may consider it to be a Galois field, and further a dual vector space, satisfying the three necessary properties of a dual cone, which are both continuous spaces upon their respective measurement norms.

Further, the standard construction of the existence of a duality, the ability to distinguish between identical elements upon a given field with a second, is clearly self-evident upon the dual metric space characterisation. We ignore for the moment the limiting case of the linear permutation field upon a population of linear scores (i.e., the standard asymptotic parametric learning problem per \cite{le1986}). In such a scenario is observed the perfect parametric score fit upon the population, which implies an equivalence in perfect ordering, thus denoting a bijective relationship between the ranking and the scoring through the cumulative distribution function (CDF). Instead we demonstrate that, especially (albeit non-uniquely) in the problem of Tikhinov regularised, ill-posed or biased, learning upon linear functional map, the duality of the two metric spaces grants a just-identified unique solution. This characterisation is extended in later work to demonstrate a unique solution to the Karush-Kuhn-Tucker constrained optimisation (i.e., ridge regression) in a multitude of final sample problems common to the Social Sciences.

In \textcite[p.~129]{kendall1948} the following equivalence was claimed between Pearson's \(r\) and Kendall's \(\tau_{b}\): 
\begin{equation}
\label{eq:kendall_sin}
r_{x,y} = \sin\bigg({\tau_{b}(x,y)\cdot\frac{\pi}{2}}\bigg).
\end{equation}
We proceed to show this characterisation to be invalid, and instead demonstrate that the left-hand side of equation~\ref{eq:kendall_sin} is actually Kemeny's \(\rho\), constructable from \(\kappa(x)\) and \(\kappa(y)\), which only asymptotically converges to Pearson's \(r\) upon the population, thereby denying the existence of the strict bijection upon a sample. Our interest is in establishing that for a domain in which ties are present, the two correlation coefficients are not equal, and thus under no further assumptions can Pearson's \(r\) result from the invalid estimator upon any domain except that of \(S_{n}\), and then only when the bivariate distribution is further Gaussian and linearly well-posed. This is an exceptionally important realisation for when one extends the bivariate correlation to scenarios in which multiple predictors are simultaneously solved for in relation to a common target (i.e., standard multiple regression) upon non-Euclidean function spaces. 

First, consider the definition of the \(\ell_{2}\) or Frobenius norm -- the insertion of one or more infinite values explicitly results in the expectation of at least one of the random variables to be non-finite (and thus degenerate). Then by the non-finite expectation, the \(n \times 1\) vector upon the Euclidean metric space is neither capable of being centred, nor is the relative distance capable of being assessed. The inner-product is also undefined, as the sum of inner-product (i.e., the cross-product) non-finite values is itself non-finite, and thus the Pearson correlation measure is degenerate. This extension of the domain is necessary however, as the Kemeny correlation is validly applied upon the extended reals, and produces a finite measure concomitant for any finite \(n\) ordering upon a vector. This results in a paradox, wherein the Kemeny correlation exists and obtains a finite convergent value, while the Pearson correlation is degenerately non-measurable, in contradiction of equation~\ref{eq:kendall_sin}. Thus, the relationship defined in equation~\ref{eq:kendall_sin} is invalid for all scenarios is which a linear projection of the scores of between the variables is non-linear, which includes but is not extended to the extended real line. Moreover, for finite samples even under the weak law of large numbers, if the random variables are non-Gaussian, the Pearson correlation is biased by definition, and therefore cannot satisfy the equivalence.

However, we can construct from the Kemeny metric a substitute estimator for the Pearson correlation, which is easily found to be equivalent upon the population. For each extended real random variable \(\{x \equiv \kappa(x)^{n \times n }, y \equiv \kappa(y)^{n \times n}\}\), we desire a bivariate vector matrix of order \(n \times 2\) from the otherwise extant tensor \(n \times n \times 2\), to comply with the necessary inner-product formulation of the Pearson product-moment correlation. We obtain this upon the skew-symmetric permutation domain by taking the sum over all \(k\) rows for each skew-symmetric matrix, resulting in the production of two \(1 \times n\) vectors, each with expectation of 0, denoted as \(\vec{x}^{\intercal}_{n \times 1}\) and \(\vec{y}^{\intercal}_{n \times 1}\), respectively. This is produced as follows, substituting \(\vec{y}\) for \(\vec{x}\) as appropriate:
\begin{equation}
\label{eq:kemeny_rho}
\begin{aligned}
\vec{x}: \overline{\mathbb{R}}^{n \times 1} \to \kappa(x) \in \mathbb{R}^{n \times n} \to  (\mathbb{R}^{1 \times n})^{\intercal} \\
\vec{x} = \sum_{k=1}^{n} \kappa_{k}(x) = \bigg[ \sum_{k}^{n} \kappa_{l=1}(x) , \sum_{k}^{n}\kappa_{l=2}(x) , \cdots , \sum_{k}^{n}\kappa_{l=n}(x)\bigg]^{\intercal}. 
\end{aligned}
\end{equation}
We note the expectations, \(E(\vec{x}) = 0\) and \(E(\vec{y})= 0\) are fixed for all samples, and a population variance along with Studentising standard deviation adjustment concentration measure is uniquely estimated and identified for each univariate random variable (equation~\ref{eq:kem_variance}). Each \(\vec{x}\) is then a vector of the rank ordering of a variable, with finite mean and variance, even in the observation of a non-finite variate value in either \(x\) or \(y\), as a linear ordering is still a valid representative mapping: The inner-product of \(\langle\tfrac{\vec{x}}{\sigma_{\kappa}(\vec{x})},\tfrac{\vec{y}}{\sigma_{\kappa}(\vec{y})}\rangle\) is the cosine of the linear ordering upon the vectors, with fixed unit distance by the law of parallelograms upon a Hilbert space.

We must first though establish that this inner-product is valid upon the vector of the extended reals, unlike the Pearson correlation. Upon \(\mathrm{S}_{n}\), the space of permutations wherein ties occur with probability 0, we observe that the inner-product of the ranks of finite scores and the scores themselves are equivalent. This is also validly asserted by noting the previously established fact that Kendall's \(\tau\) is a strict subset of Kemeny's \(\tau_{\kappa}\) and the corresponding distances, and therefore any measure upon a common sub-domain must and does measure equivalently upon the other as well. However, the inner-product of the two spaces reflects the Euclidean distance between pairs of rankings, which under a bijective equivalence upon the rank and score, in turn would equate Pearson's \(r\) and Spearman's \(\rho\) by the weak law of large numbers. 
With this paradox resolved between the relationship upon the divergent metric spaces, we return to the question of the efficiency of the bivariate rank-ordering of a distribution of uniformly sampled variates upon \(\overline{\mathbb{R}}^{n \times p}\). This allows for a linear functions to be employed over any linearly orderable bivariate distribution, while satisfying the Cr\'{a}mer-Rao lower-bound (Lemma~\ref{lem:cramer-rao}) in order to assess the sufficient statistics of any \(n \times p\) distribution. Of note, this resolves to the identification of the median and variance of each of \(p\) variates, along with the correlation matrix \(\Xi_{p \times p}\), as an \(\ell_{2}\)-norm space, thereby identifying a quadratic solution for the expectation.


This duality allows for the convergent assessment of even a Tikhinov regularised system of linear equations to be uniquely identified and solved, as the existence of any viable solution is guaranteed to be in the neighbourhood of the true solution by Lemma~\ref{lem:lower_equality} and Lemma~\ref{lem:upper_equality}, which converges monotonically to the expectation upon the population with probability 1. Therefore, even when the Euclidean metric proposes an ill-posed (and therefore non-unique) solution, the conjunctive combination with the Kemeny metric and its projective geometric duality must allow for a unique solution which is both closest in observed score and true (unbiased) ordering, for all finite samples. Accepting then that \(\Pr(x\mid \alpha_{1},\alpha_{2})\propto x,\) and \(\Pr(y\mid \alpha_{1},\alpha_{2})\propto y,\), it then must follow by the law of total expectations that \(E(\Pr(\vec{x})\cdot \Pr(\vec{y})) = \Pr(\vec{x}\vec{y}).\) A likelihood function may then be constructed as a function of the marginal shape coefficients and the the inner-product or correlation coefficient, which is just-identified. The derivative of \(L(\rho_{\kappa}\mid \vec{x},\vec{y},\propto \alpha_{x,1},\alpha_{x,2},\alpha_{y,1},\alpha_{y,2}\)     

\section{Discussion}

We conclude with a discussion of the relative value for non-parametric linear model spaces. In traditional statistical literature, it is nearly universally found that the utility of the variate values is exceptionally important. This is trivial to demonstrate, as in the parametric estimation problem, we observe that the rank is a bijective function of the scores, and thus it presents an integral component for identification. However, in all biased or Tikhinov regularised linear model spaces, the uniquely defined inverse would be exceptionally useful. In fact, using the Kemeny correlation matrix as a basis from which to construct an EM-algorithm for addressing missing data, clustering, Factor Analysis, and general multiple regression problems have been conducted with immensely positive results. The fundamental problem has always been the lack of identification of a probability model in the presence of ties, and unsurprisingly, a surjective mapping of linear combinations almost surely results in such ties. Trivial examples may be noted in the presence of a non-parametric 2-way ANOVA analysis of dichotomous groups, in which ties result by construction, and is otherwise incapable of being estimated. This complication would suitably explain what we perceive to be the lack of development over the preceding 80 years, as we otherwise require substantive restrictions or binomial explosion conflated problems to be uniquely estimated upon each sample (i.e., the standard order statistic construction scenario). The ability to construct a complete Hilbert space for arbitrary orderable distributions, probabilistically, explicitly resolves, in our opinion, a foundational and necessary unresolved problem in current Statistical Literature. For instance, the combination of rank and score based approaches allows for many otherwise unsolvable ill-posed system of equations to now be uniquely solved. Further, by the noted duality of the two spaces, this structure is identical (as has been empirically confirmed) to what would otherwise be the unique solution upon the parametric case. 

The possibility for the development of unique solutions to general machine learning problems (which would otherwise satisfice the definition of a Hadamard ill-posed problem) as a dual defined loss function is of immense interest and development and also has intuitive appeal as well. In the parametric case, the scores intrinsically denote an ordering upon the CDF. The solution to the system of estimating equations which possesses both the ability to order all cases correctly and estimate the location and scale of said data items is the minimum variance unbiased estimator. If addressing a system of ill-posed estimating equations though, the set of solutions upon the Euclidean distance function space which also produces the closest approximation of the ordering of the scores relative to the target satisfies the Lagrange multiplier (i.e., saddle-point) solution to the most stable solution of all candidates, which is otherwise infinitely defined by the completely arbitrary choice of the Tikhinov bias. Thus, the projective geometric duality allows for unique, generalisable, solutions upon all available data without requiring any further assumptions, restrictions, and minimal computational burdens than were already imposed by the Euclidean distance loss function.  

This describes a resolution to the otherwise extant paradox introduced by the likelihood principle, wherein all information is contained by knowledge of the correct probability distribution. By the duality of the two function spaces, inclusion of two sets of correctly identified probability distributions is redundant, and therefore no new information upon the population distributional structure is gained by over-parametrisation. However, when the parametric distribution is only approximated, rather than exactly known, the inclusion of a second orthonormal function space allows for the bias introduced upon the Euclidean distance space to become just-identified. Conjointly then, the dual optimisation of the complementary measure spaces must increase the relative information using only one metric. Considering that the utility of an ordering without a scale and location is functionally meaningless for many different learning problems (e.g., the correct order distribution of \(n\) finite elements in the unit interval contains an infinite variate realisations which comply with the `correct' ordering), the natural starting place is instead that we wish to acquire the uniquely correct ordering which also possesses the most accurate empirical approximation of the Euclidean distance as well. Of course, given this duality, the parametric scenario makes the metric duality redundant asymptotically. Yet however, the uncertainty of the stochastic perturbations upon finite samples provides an allowance for information gain even in the presence of the correct probability distribution, conditional upon the uniformly arising random sampling.

\small
\printbibliography
\appendix
\section{Appendix}
\begin{theorem}
\label{lem:hilbert}
The Kemeny norm space is a pre-Hilbert space with the property that for finite positive \(a\), holds the following equivalence, using \(\kappa^{*}\) to denote the signed function as given in \textcite{kemeny1959}:
\[\rho_{\kappa}(a\cdot\kappa^{*}(X_{A}),\kappa^{*}(X_{B})) = a\rho_{\kappa}(\kappa^{*}(X_{A}),\kappa^{*}(X_{B})).\]
\end{theorem}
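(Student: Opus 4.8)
The plan is to prove the statement in two stages: first identify the inner product that endows the $\kappa^{*}$-image space with pre-Hilbert structure, then obtain the homogeneity identity as the bilinearity of that pairing in its first slot. The ambient vector space is the real linear space of $n\times n$ skew-symmetric matrices, and the natural pairing is the Frobenius product $\langle A,B\rangle = \sum_{k,l=1}^{n}A_{kl}B_{kl} = \Tr(A^{\intercal}B)$; the Kemeny distance of equation~\ref{eq:kem_dist} and the correlation of equation~\ref{eq:kem_cor} are, respectively, an affine shift and a sign-normalised rescaling of the value of this pairing evaluated at the two images $\kappa^{*}(X_{A})$ and $\kappa^{*}(X_{B})$.

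First I would verify the three inner-product axioms on the skew-symmetric subspace. Symmetry and bilinearity follow at once from the symmetry and term-by-term bilinearity of $\sum_{k,l}A_{kl}B_{kl}$. Positive definiteness is the only point needing attention to the sign convention: taking $A=B$ gives $\langle A,A\rangle = \sum_{k,l}A_{kl}^{2}\ge 0$, vanishing only at $A=0$, and on the image of $\kappa^{*}$ the zero matrix occurs exactly for a constant (degenerate) vector, which $\mathcal{M}$ excludes. This positivity is precisely the content of equation~\ref{eq:kem_variance} together with Lemma~\ref{lem:kem_bounded}; it is also the reason the correlation in equation~\ref{eq:kem_cor} carries its leading negation, so that the genuine inner product is the Frobenius form rather than the unnormalised trace $\Tr(\kappa^{*}(X_{A})\kappa^{*}(X_{B}))$, whose diagonal quadratic form equals $-\sum_{k,l}\kappa^{*}_{kl}(X_{A})^{2}$ and is therefore negative. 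Since the space is finite-dimensional it is automatically complete, yielding the pre-Hilbert property.

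With a bona fide bilinear pairing established, the homogeneity identity is immediate: scalar multiplication $a\cdot\kappa^{*}(X_{A})$ acts entrywise, so $a$ factors out of the summation, giving $\sum_{k,l}(a\kappa^{*}_{kl}(X_{A}))\kappa^{*}_{kl}(X_{B}) = a\sum_{k,l}\kappa^{*}_{kl}(X_{A})\kappa^{*}_{kl}(X_{B})$, which is exactly $\rho_{\kappa}(a\cdot\kappa^{*}(X_{A}),\kappa^{*}(X_{B})) = a\,\rho_{\kappa}(\kappa^{*}(X_{A}),\kappa^{*}(X_{B}))$. Restricting to $a>0$ keeps the scaled matrix inside the positive cone and is compatible with the range $[0,a(n^{2}-n)]$ invoked in Lemma~\ref{lem:kem_bounded}; the additive centring constant $\tfrac{n^{2}-n}{2}$ appearing in equation~\ref{eq:kem_dist} is an affine offset at the vector level and plays no role in the bilinear identity for the matrix images.

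The main obstacle is conceptual rather than computational. Because $\kappa^{*}$ records only order relations, it is a nonlinear map of the data vectors, so in general $a\cdot\kappa^{*}(X_{A})\neq\kappa^{*}(a\cdot X_{A})$; the linear structure therefore has to be read on the image side, on the skew-symmetric matrices, and $a\cdot\kappa^{*}(X_{A})$ must be interpreted as scalar multiplication of that matrix, which is an honest element of the ambient space even though it is typically not itself the $\kappa^{*}$-image of any vector. I would state this distinction explicitly, since the identity asserts positive homogeneity of the pairing under rescaling of the matrix representation, and should not be misread as a claim about rescaling the underlying observations.
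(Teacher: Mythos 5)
Your proposal is correct, but it reaches the conclusion by a genuinely different route from the paper. The paper's own proof never exhibits an inner product: it invokes Lemma~\ref{lem:lower_bound} for the lower bound $0\cdot a=0$ and Lemma~\ref{lem:cauchy} for the upper bound $a(n^{2}-n)$, argues that the attainable distances for general $a>0$ are proportional to the interval obtained at $a=1$, and then concludes that the space is positively homogeneous and ``therefore'' pre-Hilbert, leaning on the criterion recorded in Definition~\ref{def:hilbert} that positive homogeneity of a Banach norm-space yields an inner product. You instead exhibit the inner product directly --- the Frobenius pairing $\sum_{k,l}A_{kl}B_{kl}$ on the finite-dimensional space of skew-symmetric matrices --- verify its axioms, and read the displayed identity off as bilinearity in the first slot. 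Your route buys more: it produces the object whose existence the theorem asserts; it correctly isolates the sign convention (the unnormalised trace $\Tr(AB)$ is negative definite on skew-symmetric matrices, which is why equation~\ref{eq:kem_cor} carries its leading negation); and it makes explicit what the paper leaves implicit, namely that $a\cdot\kappa^{*}(X_{A})$ must be read as scalar multiplication of the matrix image, not as $\kappa^{*}(aX_{A})$. The paper's route stays at the level of the metric's range and would not survive scrutiny as a proof that an inner product exists, since positive homogeneity is a property of every norm and the correct criterion is the parallelogram law. One caveat applies to both arguments and deserves an explicit sentence in your write-up: as literally defined in equation~\ref{eq:kem_dist}, $\rho_{\kappa}$ carries the additive offset $\tfrac{n^{2}-n}{2}$, which does not scale with $a$, so the displayed identity is exact only for the bilinear part of $\rho_{\kappa}$; you note that the offset ``plays no role,'' but strictly the stated equality fails for $a\neq 1$ unless $\rho_{\kappa}$ on matrix arguments is understood to mean the pairing without the centring constant, and that convention should be stated rather than assumed.
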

\begin{proof}
By scalar multiplication, holds the lower bound of $0\cdot{a} =0$, from Lemma~\ref{lem:lower_bound}, and the upper bound (Lemma~\ref{lem:cauchy}) exists with a real number for all finite $n$. The canonical form of the Kemeny distance function is therefore scaled by $a^{-1}a = 1$, and therefore the linear distances are always equivalent to the closed interval $[0,1,\cdots,(n^{2}-n)], \forall\, n\ge 0$ if \(a=1\), and which is otherwise proportional to this sequence by positive finite scalar \(a \in \{(0,\infty^{+})\}\setminus 1\). Thus it is shown that the Kemeny Banach norm-space is also positive homogeneous while lacking an inner-product construction, and is therefore a pre-Hilbert space.
\end{proof}
\begin{corollary}
From Theorem~\ref{lem:hilbert} we have justified the valid existence of an inner-product formulation for the Kemeny pre-Hilbert metric space, by definition, which is provided in equation~\ref{eq:kem_dist}.
\end{corollary}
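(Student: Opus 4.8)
The plan is to \emph{derive} the corollary from Theorem~\ref{lem:hilbert} by exhibiting the inner product explicitly as the Frobenius (Hadamard-summed) bilinear form on the skew-symmetric representation space, and then invoking polarization to confirm it is the unique inner product compatible with the pre-Hilbert norm already certified by that theorem. The essential conceptual move is to recognise that the carrier vector space of the pre-Hilbert structure is not \(\overline{\mathbb{R}}^{n}\) itself --- on which the sign-valued map \(\kappa\) is nonlinear --- but rather the image \(\kappa(\overline{\mathbb{R}}^{n}) \subseteq \mathbb{R}^{n\times n}\) of skew-symmetric matrices, which by the closure under addition established in the proof of Lemma~\ref{lem:unbiased} is a genuine linear space under matrix addition and scalar multiplication. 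Theorem~\ref{lem:hilbert} is taken as given input: it supplies positive homogeneity and the pre-Hilbert property, hence the parallelogram law, and the corollary's task is only to name the induced inner product and verify it is the one appearing in equation~\ref{eq:kem_dist}.

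First I would define the candidate form \(\langle \kappa(x), \kappa(y)\rangle := \sum_{k,l=1}^{n}\kappa_{kl}(x)\,\kappa_{kl}(y)\), the ordinary Frobenius inner product on that matrix space, and verify the three axioms directly. Bilinearity is inherited from the bilinearity of the Frobenius product on the linear matrix space; symmetry is immediate since \(\kappa_{kl}(x)\kappa_{kl}(y) = \kappa_{kl}(y)\kappa_{kl}(x)\); and positive-definiteness follows from \(\langle \kappa(x),\kappa(x)\rangle = \sum_{k,l}\kappa_{kl}^{2}(x)\), which by equation~\ref{eq:kem_variance} is strictly positive for every non-degenerate \(x\) and vanishes only on the \(n\) constant vectors excluded from \(\mathcal{M}\) (Lemma~\ref{lem:kem_bounded}).

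Second I would show this form induces exactly the Kemeny geometry of equation~\ref{eq:kem_dist}. Using the skew-symmetry \(\kappa^{\intercal}(y) = -\kappa(y)\), the Hadamard-summed term \(\sum_{k,l}\kappa_{kl}(x)\odot\kappa_{kl}^{\intercal}(y)\) equals \(-\langle\kappa(x),\kappa(y)\rangle\), so that \(\rho_{\kappa}(x,y) = \tfrac{n^{2}-n}{2}-\langle\kappa(x),\kappa(y)\rangle\) and the self-distance collapses to \(\rho_{\kappa}(x,x) = \tfrac{n^{2}-n}{2}-\tfrac{n^{2}-n}{2} = 0\), recovering the induced norm \(\|\kappa(x)\| = \sqrt{\langle\kappa(x),\kappa(x)\rangle}\) and the cosine normalisation of equation~\ref{eq:kem_cor}. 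Since Theorem~\ref{lem:hilbert} already guarantees the parallelogram law, applying the polarization identity \(\langle u,v\rangle = \tfrac14(\|u+v\|^{2}-\|u-v\|^{2})\) to \(u=\kappa(x)\), \(v=\kappa(y)\) and expanding \(\|u\pm v\|^{2}=\sum_{k,l}(\kappa_{kl}(x)\pm\kappa_{kl}(y))^{2}\) returns precisely \(\sum_{k,l}\kappa_{kl}(x)\kappa_{kl}(y)\). The polarized form therefore coincides with the candidate form and, by the uniqueness clause of polarization, is \emph{the} inner product compatible with the Kemeny norm.

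The main obstacle I anticipate is the bookkeeping around the correct carrier space together with the sign conventions. One must argue carefully that the nonlinearity of \(\kappa\) on \(\overline{\mathbb{R}}^{n}\) does not obstruct bilinearity --- it does not, once addition is taken in the skew-symmetric matrix image rather than on the raw data vectors --- and one must track the transpose in equation~\ref{eq:kem_dist} so that the positive-definite Frobenius form, and not its negation, is the object identified as the inner product. Once the carrier space is fixed as the skew-symmetric matrices and the constant offset \(\tfrac{n^{2}-n}{2}\) is absorbed into the affine recentering onto \(U(\mathcal{M})\), the axioms and the polarization identification are routine, completing the justification that equation~\ref{eq:kem_dist} furnishes the inner-product formulation asserted as a consequence of Theorem~\ref{lem:hilbert}.
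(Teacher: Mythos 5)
Your proposal is correct in substance, but it takes a genuinely different route from the paper, because the paper offers essentially no proof at all: the corollary is asserted to hold ``by definition,'' leaning on Definition~\ref{def:hilbert} (which declares that positive homogeneity on a Banach norm-space establishes the existence of an inner product) combined with the positive homogeneity supplied by Theorem~\ref{lem:hilbert}. That route is purely definitional and non-constructive --- and, read literally, too weak, since every norm is positively homogeneous yet not every norm is induced by an inner product; the parallelogram law is the missing hypothesis. Your argument supplies exactly what that shortcut omits: you exhibit the inner product concretely as the Frobenius form \(\langle\kappa(x),\kappa(y)\rangle=\sum_{k,l}\kappa_{kl}(x)\kappa_{kl}(y)\) on the skew-symmetric representation space, verify bilinearity, symmetry and positive-definiteness there, track the transpose so that \(\rho_{\kappa}(x,y)=\tfrac{n^{2}-n}{2}-\langle\kappa(x),\kappa(y)\rangle\) (so equation~\ref{eq:kem_dist} really is induced by the form, with the correlation of equation~\ref{eq:kem_cor} as its normalisation), and invoke polarization for uniqueness. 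What your approach buys is an explicit object and an actual verification of the identification the corollary asserts; what the paper's approach buys is brevity only.

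Two cautions on your write-up. First, the image \(\kappa(\overline{\mathbb{R}}^{n})\) is \emph{not} itself closed under addition or scalar multiplication (the sum of two sign-pattern matrices is generally not one), so bilinearity must be stated on the ambient linear space of skew-symmetric \(n\times n\) matrices into which the \(\kappa\)-images embed; your appeal to the ``closure under addition'' claim inside Lemma~\ref{lem:unbiased} inherits the paper's own looseness on this point, and the cleaner statement is to define the form on the ambient space and restrict. Second, your assertion that the self-distance collapses to \(0\) holds only for tie-free vectors: if \(x\) has ties then \(\sum_{k,l}\kappa^{2}_{kl}(x)<\tfrac{n^{2}-n}{2}\), and equation~\ref{eq:kem_dist} yields \(\rho_{\kappa}(x,x)>0\). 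That defect belongs to the paper's definition rather than to your argument, but the collapse should not be asserted unconditionally.
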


\begin{lemma}~\label{lem:lower_bound}
The Kemeny metric space is non-negative under the condition that \(0 < a < \infty^{+}\).
When the domain of a univariate random variable in \(\mathbb{R}^{n \times 1}\) is extended to a bivariate pair \((X_{A},X_{B})\), the function \(\kappa(X)\) is convex, for the arbitrary arbitrary real constant \(\alpha\).
\end{lemma}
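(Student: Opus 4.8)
The plan is to treat the two assertions separately, deriving non-negativity directly from the bounded sign-products and convexity from the Frobenius inner-product structure into which $\kappa$ embeds. First I would establish non-negativity, which is the content actually consumed by Theorem~\ref{lem:hilbert}. Using the skew-symmetry $\kappa^{\intercal}(y)=-\kappa(y)$ to absorb the transpose, equation~\ref{eq:kem_dist} reads $\rho_{\kappa}(x,y)=\tfrac{n^{2}-n}{2}-\sum_{k,l}\kappa_{kl}(x)\kappa_{kl}(y)$. Each off-diagonal product lies in $\{-\tfrac12,0,\tfrac12\}$ and there are $n^{2}-n$ such entries, so the sum is bounded above by $\tfrac{n^{2}-n}{2}$; hence $\rho_{\kappa}\ge 0$, with equality at identical orderings. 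Crucially this bound is insensitive to ties. Multiplication by any finite positive scalar $a\in(0,\infty^{+})$ preserves the sign, giving $a\rho_{\kappa}\ge 0$ and thus the required lower bound $0\cdot a=0$.

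For the geometric picture underlying convexity, I would note that on the tie-free subdomain $S_{n}$ the representation collapses to a squared distance, $\rho_{\kappa}(x,y)=\tfrac12\left\|\kappa(x)-\kappa(y)\right\|_{F}^{2}$, since then $\sum_{k,l}\kappa_{kl}^{2}(x)=\sum_{k,l}\kappa_{kl}^{2}(y)=\tfrac{n^{2}-n}{2}$ by equation~\ref{eq:kem_variance}. This exhibits $\rho_{\kappa}$ as half a squared Hilbert-space norm and makes both non-negativity and the subsequent convexity transparent on that subdomain.

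For the convexity claim I would work with the embedded matrices $M=\kappa(X_{A})$, $N=\kappa(X_{B})$ in the Frobenius inner-product space. The functional $\rho_{\kappa}(M,N)=\tfrac{n^{2}-n}{2}+\langle M,N^{\intercal}\rangle_{F}$ is \emph{affine} in each embedded argument and, along any scalar ray $M\mapsto\alpha M$, is affine in $\alpha$; an affine map is convex, so the bivariate extension inherits convexity in each argument and in the scaling parameter $\alpha$ for free. Where the full squared-norm functional is in view (the $S_{n}$ case above), convexity is simply the convexity of $t\mapsto t^{2}$ transported through the linear embedding via Jensen's inequality.

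The main obstacle is reconciling the discrete, piecewise-constant nature of $\kappa$ with the word ``convex.'' The map $\kappa$ is a sign function and so is not pointwise convex on $\overline{\mathbb{R}}^{n\times1}$; moreover, once ties are admitted the squared-distance identity acquires a non-negative correction $\tfrac{n^{2}-n}{2}-\tfrac12\big(\|\kappa(x)\|_{F}^{2}+\|\kappa(y)\|_{F}^{2}\big)$, so the clean squared-norm reading is exact only on $S_{n}$. The resolution I would make explicit is that the convexity asserted is that of the induced distance functional on the linearized image -- affine, hence convex, in each embedded argument and in the scaling $\alpha$ uniformly in the tie structure -- rather than pointwise convexity of the discrete map; this is precisely the form of convexity that supports the positive-homogeneity lower bound needed in Theorem~\ref{lem:hilbert}.
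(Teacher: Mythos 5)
Your proof is correct, but it takes a genuinely different route from the paper's. The paper derives non-negativity abstractly from the metric axioms: it applies sub-additivity to obtain \(\rho_{\kappa^{*}}(a,b)+\rho_{\kappa^{*}}(b,a)\ge\rho_{\kappa^{*}}(a,a)\), then invokes symmetry and (implicitly) the identity of indiscernibles to conclude \(2\rho_{\kappa^{*}}(a,b)\ge 0\). That argument presupposes that the triangle inequality and \(\rho(a,a)=0\) are already available, whereas yours is self-contained: you bound the Hadamard inner-product entrywise, note that the \(n^{2}-n\) off-diagonal products each lie in \(\{-\tfrac12,0,\tfrac12\}\), and read non-negativity directly off equation~\ref{eq:kem_dist}; as a bonus the same computation delivers the matching upper bound \(n^{2}-n\) that the paper defers to Lemma~\ref{lem:cauchy}. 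One caution: equality \(\rho_{\kappa}(x,y)=0\) holds only for identical \emph{tie-free} orderings --- for a vector \(x\) with ties your own formula gives \(\rho_{\kappa}(x,x)=\tfrac{n^{2}-n}{2}-\sum_{k,l}\kappa_{kl}^{2}(x)>0\) --- so ``equality at identical orderings'' should be qualified (this exposes a defect in the definition rather than in your argument, but it also quietly undermines the identity-of-indiscernibles step the paper's proof leans on). On the convexity clause the paper's proof is entirely silent, so your reconstruction --- affineness of the distance functional in each embedded \(\kappa\)-argument and in the scalar, hence convexity, with the explicit caveat that the sign map \(\kappa\) itself is not pointwise convex and that the squared-norm identity is exact only on \(S_{n}\) --- supplies content the paper does not, and is the more honest treatment of what the statement could mean.
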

\begin{proof}
By sequential application of sub-additivity, symmetry, and the identity of indiscernibles follows non-negativity, but only under a specific definition of $\kappa(\cdot)$, upon which results the requirement $0 < a < \infty^{+}$:
\begin{equation}
\begin{aligned}
\rho_{\kappa^{*}}(a,b) + \rho_{\kappa^{*}}(b,a) & \ge \rho_{\kappa^{*}}(a,a)\\
\rho_{\kappa^{*}}(a,b) + \rho_{\kappa^{*}}(a,b) & \geq \rho_{\kappa^{*}}(a,b)  \\
2\cdot \rho_{\kappa^{*}}(a,b) & \equiv 2\cdot \rho_{\kappa^{*}}(\kappa^{*}(X_{A},\kappa^{*}(X_{B})) \ge 0\\
\rho_{\kappa^{*}}(\kappa^{*}(X_{A},\kappa^{*}(X_{B})) \ge 0\\
\end{aligned}
\end{equation}
\end{proof}

\begin{definition}[Complete space]
\label{def:complete_space}
A metric space \((X,\rho)\) is complete if the expansion constant of the metric space is \(\le 2.\)
\end{definition}

\begin{lemma}~\label{lem:kem_expansion}
The Kemeny distance function has an expansion constant \(\mu\) less than 2.
\end{lemma}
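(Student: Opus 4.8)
The plan is to exploit the inner-product (pre-Hilbert) structure already secured in Theorem~\ref{lem:hilbert}, together with the attendant isometric embedding of the Kemeny space into a finite-dimensional Euclidean space. Recall that each vector is carried by $\kappa$ into the skew-symmetric subspace of $\mathbb{R}^{n\times n}$, and that equation~\ref{eq:kem_dist} realises $\rho_{\kappa}$ as an affine image of the Frobenius inner product on that subspace. Consequently $(\mathcal{M},\rho_{\kappa})$ sits isometrically inside the Euclidean space $(\mathbb{R}^{n\times n},\langle\cdot,\cdot\rangle_{F})$, and any purely metric quantity — the expansion constant in particular — is inherited from, and bounded above by, the corresponding quantity of the ambient Hilbert space.

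First I would fix the definition of the expansion constant $\mu$ as the supremum, over all bounded subsets $A$, of the ratio $2\,\mathrm{rad}(A)/\mathrm{diam}(A)$, where $\mathrm{rad}(A)$ is the Chebyshev radius and $\mathrm{diam}(A)$ the diameter (equivalently, the least factor by which a mutually intersecting family of closed balls must be dilated to acquire a common point). Since any point of $A$ serves as an admissible centre, one always has $\mathrm{diam}(A)/2 \le \mathrm{rad}(A)\le \mathrm{diam}(A)$, whence $1\le\mu\le 2$ for every metric space; the genuine content of the lemma is therefore the \emph{strict} inequality $\mu<2$.

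The decisive step is Jung's inequality for Hilbert space: every bounded set $A$ of diameter $d$ is contained in a closed ball of radius at most $d\sqrt{n^{2}/\bigl(2(n^{2}+1)\bigr)}<d/\sqrt{2}$, so that $\mathrm{rad}(A)\le \mathrm{diam}(A)/\sqrt{2}$ uniformly over the finite-dimensional ambient space. Applying this to the Euclidean embedding of $\mathcal{M}$ gives $2\,\mathrm{rad}(A)/\mathrm{diam}(A)\le \sqrt{2}$ for every $A$, and hence $\mu \le \sqrt{2} < 2$; by Definition~\ref{def:complete_space} this certifies completeness, supplying exactly the hypothesis invoked in Lemma~\ref{lem:kem_bounded}. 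As a sanity check in the paper's explicit idiom, the extremal pair $(I,I^{\prime})$ already fixes $\mathrm{diam}(\mathcal{M})=n^{2}-n$, while the even symmetry of $\kappa$ about the median ordering (Lemma~\ref{lem:even}) places a centre at metric distance $(n^{2}-n)/2$ from both extremes, enclosing the whole space with radius no larger than half its diameter — comfortably below the $\sqrt{2}$ ceiling.

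The main obstacle I anticipate is the gap between the ambient Chebyshev radius (centre free to roam in $\mathbb{R}^{n\times n}$) and the intrinsic radius (centre constrained to the discrete image set): Jung's ball need not be centred at a $\kappa$-image point. I would close this by working with the relative radius permitted by the expansion-constant definition, which already admits centres in the enveloping space, so the ambient bound transfers verbatim; alternatively one verifies the self-centred radius of each $A$ still satisfies $\mathrm{rad}(A)<\mathrm{diam}(A)$ via the symmetry above. Because the Jung bound is bounded away from the critical value by a fixed margin $2-\sqrt{2}>0$ for every finite $n$, the strict inequality $\mu<2$ survives either resolution with room to spare.
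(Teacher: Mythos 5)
Your route is genuinely different from the paper's. The paper's proof of Lemma~\ref{lem:kem_expansion} is an elementary order-theoretic computation: it takes two families of reals with $a_{i}\le b_{j}$, uses the monotonicity of the three-valued signed score $\kappa^{*}\in\{-a,0,a\}$ to conclude $\sup_{i}\kappa^{*}(A_{i})\le\inf_{j}\kappa^{*}(B_{j})$, and reads off a binary (Helly-type) intersection property with constant at most $2a$; no ambient Euclidean geometry is invoked. You instead try to inherit the bound from Jung's inequality in an enveloping Hilbert space, which would be a cleaner and more standard argument, and which correctly identifies $\sqrt{2}$ as the expansion constant of Hilbert space, \emph{if} your embedding premise held.

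That premise is where the gap lies. Equation~\ref{eq:kem_dist} makes $\rho_{\kappa}$ an affine function of the Frobenius \emph{inner product} $\langle\kappa(x),\kappa(y)\rangle_{F}$; since $\|\kappa(x)\|_{F}^{2}=\tfrac{n^{2}-n}{2}$ for tie-free vectors, on that stratum $\rho_{\kappa}(x,y)=\tfrac{1}{2}\|\kappa(x)-\kappa(y)\|_{F}^{2}$, i.e.\ the \emph{squared} Frobenius distance, not the Frobenius distance (and with ties present not even that, because $\|\kappa(x)\|_{F}$ varies with the tie pattern). So $(\mathcal{M},\rho_{\kappa})$ does not sit isometrically inside $(\mathbb{R}^{n\times n},\|\cdot\|_{F})$; it is $\sqrt{\rho_{\kappa}}$ that is Hilbert-embeddable in the Schoenberg sense, and neither the ratio $2\,\mathrm{rad}/\mathrm{diam}$ nor the expansion constant is invariant under $d\mapsto d^{2}$, so ``inherited verbatim'' fails at the decisive step. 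The conclusion is likely salvageable --- for the radius/diameter form, squaring only shrinks the ratio, since $(r/d)^{2}\le r/d$ whenever $r\le d$, which would give $2\,\mathrm{rad}_{\rho_{\kappa}}/\mathrm{diam}_{\rho_{\kappa}}\le 1$ --- but that must be argued, and the centre-location problem must be confronted honestly: the expansion constant is an intrinsic quantity, so the admissible centres are points of the image of $\kappa$, not of the ambient space, and your fallback via the midpoint between $I$ and $I^{\prime}$ certifies only that single extremal pair rather than the supremum over all bounded subsets.
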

\begin{proof}
Let there exist two families $\{A_{i}\}_{i \in I}$ and $\{B_{j}\}_{j \in J\,, i,j = 1,\ldots,n}$ of $n$ real numbers such that $a_{i} \le b_{j}$ and therefore 
\begin{equation}
\begin{aligned}
\rho_{\kappa^{*}}(\kappa^{*}(A_{i}),\kappa^{*}(B_{j}))\, \forall\, i,j \implies & \sup_{i\in I}(\kappa^{*}(A_{i}))  \equiv \mu \\
& = \sup\{-1\cdot{a},0\cdot{a}\} \le \inf_{j\in J}(\kappa^{*}(B_{j})) \equiv \inf\{0\cdot{a},1\cdot{a}\}\\
& = \sup\{\mu\} \le 1.
\end{aligned}
\end{equation}
It therefore follows that for any family of intervals $A_{i},B_{j}$ whose intersection of the two finite sets is never empty, $\cap[\sup{A_{i}},\inf{B_{j}}] \ne \varnothing$, possesses an expansion constant $\mu = \sup\{0,1a\} \le 2a\; \forall a>0$. Consequently, Definition~\ref{def:complete_space} is satisfied for the Kemeny distance function, in that the isometric expansion constant is always less than 2; as \(X_{A} = A_{i},X_{B}=B_{j}\), this proof applies to all score vectors within \(\mathcal{X}\), the space of all vectors of extended reals of finite length \(n\).
\end{proof}

\begin{lemma}~\label{lem:cauchy}
There exists a unique finite upper-bound upon the Kemeny metric space for any collection of \(n\) element length vectors in the extended reals, whose lower bound is 0.
\end{lemma}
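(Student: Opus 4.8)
The plan is to establish the upper bound directly from the bounded range of the $\kappa$ score function, exploiting the fact that $\kappa$ depends only on ordinal comparisons and is therefore insensitive to infinite coordinate values. First I would observe that, by equation~\ref{eq:kem_score}, each entry satisfies $|\kappa_{kl}(x)| \le \sqrt{.5}$ for every pair of indices $k,l$ and every $x \in \overline{\mathbb{R}}^{n \times 1}$; this bound is uniform over the extended reals because the three cases $x_{k} > x_{l}$, $x_{k} = x_{l}$, $x_{k} < x_{l}$ partition all of $\overline{\mathbb{R}}^{2}$ and the assigned values are the fixed scalars $\{+\sqrt{.5},\,0,\,-\sqrt{.5}\}$, irrespective of whether any $x_{k}$ equals $\pm\infty^{+}$. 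The skew-symmetry of $\kappa$ forces a vanishing diagonal ($x_{k}=x_{k}$ gives $\kappa_{kk}=0$), leaving exactly $n^{2}-n$ off-diagonal entries.

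Next I would bound the Hadamard inner-product term appearing in equation~\ref{eq:kem_dist}. Since each summand $\kappa_{kl}(x)\,\kappa^{\intercal}_{kl}(y)$ is a product of two scalars each of magnitude at most $\sqrt{.5}$, its absolute value is at most $(\sqrt{.5})^{2}=\tfrac{1}{2}$, and there are only $n^{2}-n$ off-diagonal (potentially nonzero) terms. Hence
\[
\Big|\sum_{k,l=1}^{n}\kappa_{kl}(x)\,\kappa^{\intercal}_{kl}(y)\Big| \le (n^{2}-n)\cdot\tfrac{1}{2} = \frac{n^{2}-n}{2}.
\]
Substituting this into equation~\ref{eq:kem_dist} yields $\rho_{\kappa}(x,y) \in \big[0,\,n^{2}-n\big]$, so $n^{2}-n$ is a finite upper bound for every admissible pair, and it is attained on the identity/reverse-identity pair as already noted, making it the supremum of the metric. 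For uniqueness I would invoke the least-upper-bound property of $\mathbb{R}$: the set of all realisable distances is a bounded subset of the reals, so its supremum is a single well-defined real number, namely $n^{2}-n$. The lower bound of $0$ is exactly the non-negativity established in Lemma~\ref{lem:lower_bound}, attained whenever $x$ and $y$ induce a common ordering.

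The main obstacle is not the arithmetic but the verification that finiteness of the bound survives the passage to the extended reals. This is precisely where the Kemeny construction diverges from the Frobenius norm: the latter's bound depends on the magnitudes of the coordinates and becomes degenerate once an infinite value appears, whereas the $\kappa$ map records only the ordering and so remains uniformly bounded by $\sqrt{.5}$. I would emphasise this point, since it is what guarantees a genuinely finite upper bound over the whole domain $\overline{\mathbb{R}}^{n \times 1}$ and is the content that Theorem~\ref{lem:hilbert} subsequently invokes for the positive-homogeneity scaling argument.
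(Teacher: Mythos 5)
Your proof is correct, and it reaches the same bound $\rho_{\kappa}\in[0,\,n^{2}-n]$ as the paper, but by a different decomposition. The paper's proof in the appendix works with the signed difference form of the distance: it constructs the extremal pair explicitly (a monotonically ascending sequence against its reversal), notes that each of the $\tfrac{n^{2}-n}{2}$ free pairs contributes $|a^{*}_{ij}-(-a^{*}_{ij})|=2a$, and sums to $a(n^{2}-n)$; uniqueness is then argued by showing every other mapping yields a strictly smaller total. You instead bound the inner-product form of equation~\ref{eq:kem_dist} termwise: $|\kappa_{kl}(x)\,\kappa^{\intercal}_{kl}(y)|\le(\sqrt{.5})^{2}=\tfrac12$ uniformly over $\overline{\mathbb{R}}$, the diagonal vanishes by skew-symmetry, so the $n^{2}-n$ surviving terms give $|\sum|\le\tfrac{n^{2}-n}{2}$, and adding the offset yields the bound, with attainment checked on the identity/reverse-identity pair afterwards. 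What your route buys is a proof that stays entirely inside the definition actually used in the body of the paper (the Hadamard inner product), plus an explicit verification that the bound survives infinite coordinates because $\kappa$ is purely ordinal --- a point the paper leaves implicit. What the paper's route buys is the constructive identification of the maximising configuration, which feeds its uniqueness claim: note that the paper reads ``unique upper-bound'' as the extremum being attained only at the monotone/reverse pair, whereas you read it as the supremum being a single well-defined real via the least-upper-bound property. Your reading is the weaker (and more trivially true) one; if the intended content is uniqueness of the extremiser, you would need to add the paper's observation that introducing any tie or inversion strictly decreases the sum.
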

\begin{proof}
For any $\kappa^{*}(X_{j})\; \forall X_{j}, j = 1,\ldots,p$ results a mapping vector of maximum $\frac{(n^{2}-n)}{2}$, to which are assigned one of three distinct values: $a_{ij} \in \{1,-1,0\}$. If all $X_{i},\, i = 1,\ldots,n$ are in a monotonically ascending sequence, then there is \(\mathcal{M} = n^{n}-n, \forall 0 < n < \infty^{+}\), representing all vectors of extended real scores, $n^{n} = \{n_{1}, n_{2}, \cdots, n_{n}\}$. Each vector, under operation $\kappa^{*}$ results a mapping of dimension $\frac{(n^{2}-n)}{2}$; the reduced set of mappings is granted by the symmetry of the metric space. It therefore follows that for fixed $n$ upon monotonic sequence \(X_{j}\) exists $ I_{n} = \{1,2,3,\ldots\} \in X_{n}$, form which results $\sum_{i=1}^{n}\kappa^{*}(X_{j}) = \sum \{1a,1a,\ldots,1a\}$. By equation~\ref{eq:kem_dist} it then follows that upon such a field, the maximum distance attainable is found by the sequence of the absolute value of the difference for a similarly monotonically ordered sequence, for which is substituted the value  $1a \to -1a$, with exception only upon \(a_{i=j} = 0\): \[
\sum_{i=1}^{n}\sum_{j=1}^{n} |a^{*}_{ij} - (-1\cdot a^{*}_{ij})| = |\sum_{i=1}^{n} \{2_{i}a,\ldots,2_{n}a\}| = \frac{n^{2}-n}{2} \cdot 2a = a(n^{2}-n).\] Uniqueness proceeds as with the preceding Lemma, for which the summation over all mapping \(\kappa^{*}\) with one or more $a$ are by definition of equation~\ref{eq:kem_dist} of distance greater than the observed, and therefore is uniquely extremised for any fixed finite positive $a$ and finite positive integer \(n\).
\end{proof}
\begin{lemma}
\label{lem:complete_space}
The Kemeny metric space is a complete metric space.
\end{lemma}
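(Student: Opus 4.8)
The plan is to reduce the claim directly to the structural criterion of completeness adopted in Definition~\ref{def:complete_space}, namely that a metric space is complete precisely when its expansion constant does not exceed $2$. Since Lemma~\ref{lem:kem_expansion} already establishes that the Kemeny distance function carries an expansion constant $\mu < 2$, the completeness assertion will follow in essentially one line, provided the remaining metric-space prerequisites are first secured. Accordingly, I would begin by confirming that $(\mathcal{X}, \rho_{\kappa})$ is genuinely a metric space: non-negativity and the identity of indiscernibles are supplied by Lemma~\ref{lem:lower_bound}, symmetry is immediate from the skew-symmetry of the $\kappa$ map in equation~\ref{eq:kem_score} (so that $\rho_{\kappa}(x,y) = \rho_{\kappa}(y,x)$), and the triangle inequality is inherited from the sub-additivity of the pre-Hilbert norm of Theorem~\ref{lem:hilbert}.

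Second, I would record that every pairwise distance is uniformly bounded: by Lemma~\ref{lem:cauchy} there is a unique finite upper bound $a(n^{2}-n)$ on $\rho_{\kappa}$ for vectors of fixed length $n$, with $0$ as the lower bound, so the space sits inside the compact and totally bounded neighbourhood of Lemma~\ref{lem:kem_bounded}. This boundedness is precisely what makes the expansion-constant computation of Lemma~\ref{lem:kem_expansion} well posed, since the supremum defining $\mu$ is then taken over families of balls of finite radius whose pairwise intersections are nonempty.

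Third, with the metric axioms confirmed and the expansion constant pinned at $\mu < 2$, I would close the argument by citing Definition~\ref{def:complete_space}: an expansion constant at most $2$ is, by that definition, equivalent to completeness, and the strict bound $\mu < 2 \le 2$ satisfies it. As a corroborating remark I would note that for any fixed finite $n$ the carrier set has finite cardinality $\mathcal{M} = n^{n}-n$, so that every $\rho_{\kappa}$-Cauchy sequence is eventually constant and hence convergent; this yields completeness in the usual sequential sense as well, in agreement with the expansion-constant criterion.

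The step I expect to require the most care is the second one: ensuring that the expansion-constant formalism of Lemma~\ref{lem:kem_expansion} is being applied to the same metric under consideration, and that the free scalar $a$ appearing there does not perturb the constant. Because positive homogeneity (Theorem~\ref{lem:hilbert}) rescales all distances by the common factor $a$, the expansion constant is scale-invariant, so the bound $\mu < 2$ holds for every admissible $a \in (0,\infty^{+})$; verifying this invariance is the only genuinely substantive point, after which completeness is a formal consequence of the adopted definition.
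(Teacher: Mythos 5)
Your proposal is correct within the paper's own framework and ends at the same place --- Definition~\ref{def:complete_space} --- but it carries the logical load on a different lemma than the paper does. The paper's proof of Lemma~\ref{lem:complete_space} cites Lemma~\ref{lem:lower_bound} and Lemma~\ref{lem:cauchy} to establish that every pairwise distance lies in $[0,\,a(n^{2}-n)]$, and from that boundedness asserts ``Cauchy convergence'' per Definition~\ref{def:complete_space}; it never explicitly invokes the expansion-constant bound inside this proof. You instead route directly through Lemma~\ref{lem:kem_expansion}, which is the result that actually verifies the criterion $\mu \le 2$ stated in Definition~\ref{def:complete_space}, and you use Lemmas~\ref{lem:lower_bound} and~\ref{lem:cauchy} only as supporting checks that the metric axioms hold and that the supremum defining $\mu$ is well posed. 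That is arguably the more faithful application of the paper's own definition, since boundedness of a metric does not by itself control an expansion constant. Your closing remark --- that for fixed finite $n$ the carrier set is finite, so every $\rho_{\kappa}$-Cauchy sequence is eventually constant and hence convergent --- is a genuinely valuable addition: it delivers completeness in the standard sequential sense, which neither the paper's nonstandard Definition~\ref{def:complete_space} nor its proof actually establishes. Your scale-invariance observation for the free scalar $a$ is also a point the paper leaves implicit.
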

\begin{proof}
By the validity of both Lemma~\ref{lem:lower_bound} and Lemma~\ref{lem:cauchy} for the Kemeny metric, the maximum possible distance for said distance function, and any finite sequence $X$ of length $n$ is $a(n^{2}-n)$, as resulting from the additive sequence of $\frac{n^{2}-n}{2}$ elements, each valued $2a$, thereby establishing the definition off a positive and finite $r$ for all finite \(n\). Cauchy convergence therefore follows from the constructive mapping that for every $X_{A},X_{B} \in \mathcal{M} \setminus \{I_{n},I_{N}^{\prime}\}$, there exists a finite distance $0 \le |\rho_{\kappa}(\kappa^{*}(X_{A}) - \kappa^{*}(X_{B}))| < a\cdot{r} < a(n^{2}-n)$ and it is shown that the Kemeny metric space is Cauchy convergent, per Definition~\ref{def:complete_space}.  An immediate corollary follows such that the upper and lower bounds of the Kemeny distance are determined by the conjunctive choice of finite $n$ and $a$, s.t., \[0\cdot{a} \le \rho_{\kappa}(a\kappa^{*}(X_{A}),\kappa^{*}(X_{B})) \le a(n^{2}-n),\] and the lemma is concluded, as the Kemeny metric space is shown to be complete for \(X_{n}\in \mathcal{X}\).
\end{proof}

\begin{definition}[Markov inequality]~\label{def:markov_inequality}
The Markov inequality is defined to random variables within $M$ which take upon only non-negative values: this definition is concomitant with the definition of a complete metric function, and therefore is satisfied for all $n$ from which is defined the sub-space $M$. The Markov inequality holds that upon the exhaustively observed population, the random variable $\rho_{\kappa}(x_{n} )$ is such that for any $\epsilon>0$, the probability that the observed realisation is no greater than a distance $\epsilon$ from the variable is defined by the inequality ratio \[\text{Pr}(x \ge \epsilon) \le \frac{E(\rho_{\kappa}(x_{n}))}{\epsilon}.\] To aid in expressing this inequality, let us define $x \in M$, such that relative to the origin permutation $I_{n} = 1,2,\ldots,n$, there are $M$ distinct permutations with distances which are guaranteed to be in the closed interval $[0a,\ldots,a(n^{2}-n)]$ with positive probability, and to otherwise occur with probability 0.
\end{definition}

\begin{lemma}[Markov inequality]
\label{lem:markov}
Let $X$ be a random variable with a probability density function $f_{X}(\cdot)$ which arises as a consequence of the Kemeny distribution, cumulative distribution function $F_{\kappa(X)}(\cdot)$, and let $(X_{1},X_{2},\cdots,X_{N})$ be a random sequence, or sample, drawn upon said distribution. By Definition~\ref{def:markov_inequality}, it is observed that for all real numbers $\rho_{\kappa}(x_{n}) \ge \epsilon$ upon which the first moment-expectation $\mu_{\kappa} \in \mathbb{R}$ exists, Markov's inequality holds in expectation,
\begin{equation}
P(\rho_{\kappa}(x_{n}) \ge \epsilon) \le \min \big[\frac{\mu_{\kappa}}{\epsilon},a\big],
\end{equation}
and for which a strong upper-bound upon the distance holds such that the absolute distance between the expectation and any other observable point, for known $(n,a)$ as in equation~\ref{eq:kem_cor}, is never greater than $(0\le \epsilon \le a(n^{2}-n)$ (Lemma~\ref{lem:complete_space}). 
\end{lemma}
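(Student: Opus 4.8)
The plan is to reduce the assertion to the classical Markov inequality applied to the non-negative, finitely supported random variable $\rho_{\kappa}$, and then to sharpen the resulting bound using the total boundedness of the Kemeny measure space. First I would record the two structural facts on which everything rests. By Lemma~\ref{lem:lower_bound} the distance $\rho_{\kappa}$ is almost surely non-negative, and by Lemma~\ref{lem:cauchy} together with Lemma~\ref{lem:complete_space} its realisations lie in the closed interval $[0,a(n^{2}-n)]$; since $\mathcal{M}=n^{n}-n$ is finite for every finite $n$, the induced law $F_{\kappa(X)}$ is a genuine probability measure supported on finitely many atoms, so the first moment $\mu_{\kappa}=E(\rho_{\kappa})$ exists and is finite by Lemma~\ref{lem:kem_bounded}. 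This converts the defining measure-theoretic integral into a finite sum over the atoms of $\mathcal{M}$, which is the only point at which the discreteness of the permutation-with-ties domain actually enters the argument.

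Next I would run the standard pointwise domination. For any fixed $\epsilon>0$, the non-negativity of $\rho_{\kappa}$ yields the almost-sure inequality $\rho_{\kappa}(x_{n}) \ge \epsilon \, I\{\rho_{\kappa}(x_{n})\ge\epsilon\}$, since on the event $\{\rho_{\kappa}\ge\epsilon\}$ the left side is at least $\epsilon$ while off the event the right side vanishes. Taking expectations (a finite sum over $\mathcal{M}$) and invoking monotonicity gives
\begin{equation}
\mu_{\kappa} = E(\rho_{\kappa}(x_{n})) \ge \epsilon \, E(I\{\rho_{\kappa}(x_{n})\ge\epsilon\}) = \epsilon \, P(\rho_{\kappa}(x_{n})\ge\epsilon),
\end{equation}
and division by $\epsilon>0$ recovers $P(\rho_{\kappa}(x_{n})\ge\epsilon)\le \mu_{\kappa}/\epsilon$, precisely the inequality posited in Definition~\ref{def:markov_inequality}.

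It then remains to justify the refinement to $\min[\mu_{\kappa}/\epsilon,a]$ and the admissible range of $\epsilon$. Because $P(\cdot)$ is a probability it never exceeds the trivial ceiling carried by the scalar normalisation $a$ (which equals $1$ in the canonical scaling fixed by Lemma~\ref{lem:hilbert}), so the Markov bound may be tightened to the minimum of the ratio and that ceiling: whenever $\mu_{\kappa}/\epsilon$ exceeds $a$ the Markov estimate is vacuous and the ceiling governs, giving $\min[\mu_{\kappa}/\epsilon,a]$. Finally, total boundedness forces any $\epsilon$ at which the event carries positive mass to satisfy $0\le\epsilon\le a(n^{2}-n)$, since distances exceeding the diameter of the compact space occur with probability $0$ by Lemma~\ref{lem:complete_space}, so the stated bound on $\epsilon$ is immediate. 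I expect the main obstacle to be not the Markov step itself, which is routine once finiteness of $\mu_{\kappa}$ is secured, but rather pinning down the precise meaning of the $\min[\mu_{\kappa}/\epsilon,a]$ refinement: one must argue carefully that the probability ceiling is correctly captured by $a$ under the chosen normalisation of $\kappa$ rather than by the literal constant $1$, and reconcile this with the scaling convention fixed in equation~\ref{eq:kem_cor}.
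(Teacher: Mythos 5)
Your proof is correct and follows essentially the same route as the paper's: both are the classical Markov argument, the paper splitting the expectation sum over the atoms of $\mathcal{M}$ at $\epsilon$ and discarding the lower-range terms by non-negativity, while you phrase the identical step via the indicator domination $\rho_{\kappa}(x_{n})\ge\epsilon\,I\{\rho_{\kappa}(x_{n})\ge\epsilon\}$. You additionally justify the $\min\big[\mu_{\kappa}/\epsilon,a\big]$ ceiling and the admissible range $0\le\epsilon\le a(n^{2}-n)$, which the paper's own proof silently omits, so your version is if anything the more complete of the two.
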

\begin{proof}
Then follows
\begin{align*}
E(x_{n}) & = \sum_{i=\infty^{-}}^{\infty^{+}} x_{i} p(x_{i}), s.t., \sum_{i=\infty^{-}}^{\infty^{+}} p(x_{i}) = 1\\
E(x) & = \sum_{i=0}^{\epsilon} x_{i}p(x_{i}) + \sum_{i>\epsilon}^{a(n^{2}-n)} x_{i} p(x_{i}) \ge \sum_{i>\epsilon}^{a(n^{2}-n)} \epsilon \cdot \big(x_{i} p(x_{i})\big) \\
E(x) & = \epsilon \sum_{i>\epsilon}^{a(n^{2}-n)} p(x_{i}) = \epsilon P(|x| \ge E(x) \ge \epsilon)\\
\frac{E(x)}{\epsilon} & = P(x \ge \epsilon).
\end{align*}
\end{proof}

\begin{lemma}[Tchebyshef inequality]~\label{lem:chebyshev}
The first expectation, expressed $E(\rho_{\kappa}(x)) = \mu_{\kappa} = a(\frac{n^{2}-n}{2})$, and the variance of $\rho_{\kappa}(x)$ is expressed as $\sigma^{2}_{\kappa}$, for an arbitrary non-negative sequence of $t$. Then,
\begin{equation}
P(|\rho_{\kappa} - \mu_{\kappa}| \ge t) \le \frac{\sigma_{\kappa}^{2}}{t^{2}},
\end{equation}
by Markov's inequality, as \[P(|\rho_{\kappa}(x_{n} - \mu_{\kappa}) \ge t) = P((\rho_{\kappa}(x_{n}) - \mu_{\kappa})^{2} \ge t^{2}) \le \frac{ E\big( (\rho_{\kappa}(x_{n}) - \mu_{\kappa})^{2}\big)}{t^{2}} = \frac{\sigma^{2}_{\kappa}}{t^{2}}.\]
\end{lemma}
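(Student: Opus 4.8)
The plan is to derive the Tchebyshef bound as an immediate corollary of the Markov inequality already established for the Kemeny metric in Lemma~\ref{lem:markov}, applied not to $\rho_{\kappa}$ itself but to the squared-deviation variable. First I would introduce the auxiliary random variable $Z = (\rho_{\kappa}(x_{n}) - \mu_{\kappa})^{2}$ and observe that $Z$ is non-negative by construction, being the square of a real-valued quantity, and that it arises as a monotone quadratic transformation of the Kemeny distance, hence is measurable on the same Borel $\sigma$-algebra over $U(\mathcal{M})$. Its expectation is precisely the second central moment, $E(Z) = \sigma_{\kappa}^{2}$, which is finite for every finite $n$ by the total boundedness and compactness of the Kemeny metric space (Lemma~\ref{lem:kem_bounded}) together with the explicit finite variance of equation~\ref{eq:kem_variance}; this guarantees the right-hand side of the claimed inequality is well defined.

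The second step is the event equivalence. For any $t > 0$ the map $s \mapsto s^{2}$ is strictly increasing on $[0,\infty^{+})$, and since $|\rho_{\kappa} - \mu_{\kappa}| \ge 0$, the two events coincide:
\[
\{|\rho_{\kappa} - \mu_{\kappa}| \ge t\} = \{(\rho_{\kappa} - \mu_{\kappa})^{2} \ge t^{2}\} = \{Z \ge t^{2}\}.
\]
I would then apply Lemma~\ref{lem:markov} to the non-negative variable $Z$ at the threshold $\epsilon = t^{2} > 0$, which yields $P(Z \ge t^{2}) \le E(Z)/t^{2}$. Substituting $E(Z) = \sigma_{\kappa}^{2}$ and using the event equivalence delivers the bound $P(|\rho_{\kappa} - \mu_{\kappa}| \ge t) \le \sigma_{\kappa}^{2}/t^{2}$, reproducing exactly the displayed chain in the statement.

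The argument is mechanical once Markov's inequality is in hand, so there is no deep obstacle; the only point requiring care is verifying the hypotheses of Lemma~\ref{lem:markov} for $Z$, namely that $Z$ is a genuinely non-negative random variable on the Kemeny support and that its expectation $\sigma_{\kappa}^{2}$ is finite. Both follow immediately: non-negativity from the squaring operation, and finiteness of the second moment from the compact, totally bounded support of Lemma~\ref{lem:kem_bounded}. For the degenerate boundary threshold $t = 0$ the bound is vacuous, so the statement is understood for $t > 0$, consistent with the non-negative sequence of thresholds assumed in the hypothesis.
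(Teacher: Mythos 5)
Your proposal is correct and follows the same route the paper intends: the Markov-inequality-applied-to-the-squared-deviation chain is already displayed in the lemma statement itself, and the paper's proof paragraph merely remarks on the bounded support of the Kemeny distance. Your write-up makes that argument explicit and verifies the hypotheses of Lemma~\ref{lem:markov} (non-negativity of the squared deviation and finiteness of $\sigma_{\kappa}^{2}$ via Lemma~\ref{lem:kem_bounded}), which is, if anything, more careful than the paper's own text.
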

\begin{proof}
This is constructively simple to understand, and complies directly with the expected behaviour of a finite range space for the Kemeny metric under evaluation: that no more than a certain real fraction of values are to be found to be greater than a distance of the real $t$ standard deviations away, denoting symmetrical coverage bounded in probability by $1-\frac{1}{t^{2}}$. 

By the finite distance as a linear function of $n$ is strictly known to be no more than a real finite distance $\frac{1}{2}\big(n^{2}-n\big)$, which is the symmetric maximum distance, $\max\epsilon$ from the expectation, it is thereby demonstrated that the upper and lower tail bounds of the Kemeny metric, for any collection of independently observed $n$ reals arising from a common distribution, thereby satisfying Tchebyshef's inequality.  
\end{proof}

\begin{definition}[Hilbert space]\label{def:hilbert}
A Hilbert space is a complete metric space (a Banach norm-space) on
which there is an inner product $\langle x, y \rangle$ for which holds
the following properties:
\begin{enumerate}
    \item{The inner product is conjugate symmetric and thus the inner
product of a pair of elements is equal to the complex conjugate of the
inner product of the swapped elements:
    \[\langle y,x\rangle ={\overline {\langle x,y\rangle }}\,.\]}
    \item{Said inner product is linear in its first (real) argument,
such that upon all complex numbers $a$ and $b$, holds \[\langle ax_{1}
+ bx_{2} , y \rangle = a \langle x_{ 1} , y \rangle + b \langle
x_{2},y\rangle .\]}
    \item{The inner product of an element with itself is positive definite}
\end{enumerate}
The existence of the property of positive homogeneity upon a Banach norm-space establishes the existence of an inner-product. From these properties follows the existence of the Cauchy-Schwarz inequality upon the metric $\rho$ s.t., \[\left|\langle x,y\rangle \right|\leq \|x\|\|y\|\] with equality \textit{iff} $x$ and $y$ are linearly dependent. A Hilbert space is therefore a complete metric space (Banach space) which possesses the further condition that it is positive conjugate homogeneous: \[f(cx) = c\cdot f(x),\ c > 0.\]
\end{definition}

\begin{lemma}[Even function]~\label{lem:even}
The $\kappa^{2}$ function is an even function, as are all constructed affine linear transformations thereupon.
\begin{definition}[Even and Odd Functions]

A function is even if for every input $x$, \(\rho_{\kappa}^{2}\left(x\right)=\rho_{\kappa}^{2}\left(-x\right)\). A function is correspondingly odd if for every input $x$ \(f\left(x\right)=-\rho_{\kappa}^{2}\left(-x\right)\). If a function satisfies \(\rho_{\kappa}^{2}\left(x\right)=\rho_{\kappa}^{2}\left(-x\right)\), it is even. Correspondingly, is a function satisfies \(\rho_{\kappa}^{2}\left(x\right)=-\rho_{\kappa}^{2}\left(-x\right)\), it is odd. If the function does not satisfy either rule, it is neither even nor odd.
\end{definition}
\begin{proof}
Consider $\kappa(\cdot)$ function as applied to a real vector of length $n$, x = \{1,\ldots,n\}. For such a mapping $\kappa(x)$ is produced a square matrix of order $n \times n$. By the constructive definition of such matrix (equation~\ref{eq:kem_score}) we obtain four conditions, of which three are unique: $\{x_{i} \le x_{j}\to -1, x_{i} > x_{j} \to 1, x_{i} = x_{j} \to 0\}$ upon which all $n$ elements results. For condition 3, observe that $0 = -1(0)$, and therefore comparisons of identical elements upon $x$ are neither even or odd. For the other conditions, $\kappa(x) = -1\kappa(x)^{\intercal}$, and therefore the $\kappa(\cdot)$ function in isolation is odd. However, $\kappa^{2}(x) = \kappa^{2}(-x)$ as the squared elements are therefore either $\{0,1\}$, and as no 0 is found unless the function is odd, and no elements may be negative by the squared transformation of any real, the affine linear function of any two pair of $\kappa$ functions is everywhere even.  
\end{proof}
\end{lemma}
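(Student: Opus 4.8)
The plan is to establish the even property in two stages: first show that the underlying $\kappa$ mapping is an \emph{odd} function of its vector argument, and then observe that squaring annihilates the sign and forces evenness. First I would fix a vector $x \in \overline{\mathbb{R}}^{n\times 1}$ and examine the effect of the reflection $x \mapsto -x$ on each entry $\kappa_{kl}$ defined in equation~\ref{eq:kem_score}. Negating every coordinate reverses every strict order relation --- the event $x_{k} > x_{l}$ becomes $-x_{k} < -x_{l}$ and conversely --- while leaving the tie case $x_{k} = x_{l}$ fixed. Consequently each nonzero entry flips sign, $\kappa_{kl}(-x) = -\kappa_{kl}(x)$, and each diagonal/tie entry satisfies $0 = -0$; in matrix form this is precisely $\kappa(-x) = -\kappa(x)$, so $\kappa$ is odd. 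This is the same structural fact recorded by the skew-symmetry $\kappa(x)^{\intercal} = -\kappa(x)$ established in the construction of equation~\ref{eq:kem_score}.

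Next I would pass to the element-wise square. Since each $\kappa_{kl}$ takes values in $\{-\sqrt{0.5},\,0,\,\sqrt{0.5}\}$, the squared entry $\kappa_{kl}^{2}$ takes values in $\{0,\,0.5\}$, and the identity $(-\kappa_{kl}(x))^{2} = (\kappa_{kl}(x))^{2}$ yields $\kappa_{kl}^{2}(-x) = \kappa_{kl}^{2}(x)$ for every $k,l$. Summing over all $n^{2}$ entries --- the operation by which both the variance (equation~\ref{eq:kem_variance}) and the distance are assembled --- preserves this equality term by term, so $\kappa^{2}(-x) = \kappa^{2}(x)$ and the squared function is even, matching the definitional requirement $\rho_{\kappa}^{2}(x) = \rho_{\kappa}^{2}(-x)$.

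Finally, to discharge the clause concerning ``all constructed affine linear transformations thereupon,'' I would note that evenness is closed under the operations used to build the distance and correlation: if $g$ is even then for scalars $c,d$ the map $x \mapsto c\,g(x) + d$ satisfies $c\,g(-x) + d = c\,g(x) + d$, and a finite sum of even functions is even. Hence every affine-linear functional of the $\kappa^{2}$ entries --- in particular $\rho_{\kappa}$ after subtraction of the median distance to recentre about the origin of $U(\mathcal{M})$ --- inherits the even property about $0$.

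I expect the only genuine obstacle to be notational rather than computational: one must be explicit about what the reflection $-x$ means for a vector on the extended real line and verify that it acts on the order structure exactly as coordinate negation, and one must keep a clean distinction between $\kappa$ (odd), its element-wise square $\kappa^{2}$ (even), and the assembled scalar distance $\rho_{\kappa}$. Once these meanings are pinned down the case analysis is immediate, and the result follows with no appeal to machinery beyond equation~\ref{eq:kem_score} and the skew-symmetry it induces.
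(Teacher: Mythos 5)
Your proposal follows essentially the same route as the paper's own proof: establish that $\kappa$ is odd (the paper does this via the skew-symmetry $\kappa(x) = -\kappa(x)^{\intercal}$, you do it directly by tracking how negation of the argument reverses each order relation, and you correctly note these are the same structural fact), then observe that squaring the entries forces evenness, and finally that affine-linear combinations preserve the even property. Your version is, if anything, more careful than the paper's in distinguishing reflection of the vector argument from matrix transposition, but the decomposition and key ideas are identical.
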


\begin{lemma}\label{lem:sigma_lower}
We next proceed to prove that the second moment, the variance, upon the Kemeny measurement space is always finite subject solely to the assumption of a collection of independent observations. \end{lemma}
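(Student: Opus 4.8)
The plan is to exploit the deterministic boundedness of the $\kappa$ mapping, which reduces the claim to elementary bounds already secured by Lemma~\ref{lem:kem_bounded}. First I would recall that by construction (equation~\ref{eq:kem_score}) every entry $\kappa_{kl}(x)$ takes one of only three values $\{-\sqrt{.5},0,\sqrt{.5}\}$, irrespective of the underlying realisation on the extended real line; in particular, the admission of $\pm\infty^{+}$ as a coordinate value cannot destroy this bound, since $\kappa$ records only the sign of pairwise order comparisons. Hence each squared entry satisfies $\kappa_{kl}^{2}(x)\le \tfrac{1}{2}$ deterministically, and the skew-symmetry of the mapping forces the $n$ diagonal entries to vanish, leaving at most $n^{2}-n$ nonzero summands.

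Second, substituting this bound into equation~\ref{eq:kem_variance} gives immediately
\[
\sigma_{\kappa}^{2}(x)=\tfrac{2}{n(n-1)}\sum_{k,l=1}^{n}\kappa_{kl}^{2}(x)\le \tfrac{2}{n(n-1)}\cdot\tfrac{1}{2}(n^{2}-n)=1,
\]
so the per-variable concentration measure is bounded above by $1$ and below by $0$ for every finite $n$ and every non-degenerate realisation, and is therefore almost surely finite under the sole assumption of independent sampling. Independence enters only to guarantee that the expectation of this sum is well defined: since each summand is bounded pointwise, linearity of expectation (together with the existence of the first moment established in Lemma~\ref{lem:unbiased} and Markov's inequality, Lemma~\ref{lem:markov}) yields a finite $E(\sigma_{\kappa}^{2})$ with no further moment hypotheses required.

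Third, for the distributional variance of the signed distance $\rho_{\kappa}$ itself across the population $\mathcal{M}$, I would appeal directly to total boundedness (Lemma~\ref{lem:kem_bounded}): the distance is confined to the compact interval $[0,n^{2}-n]$, so it is a bounded random variable, and any bounded random variable has finite variance, concretely bounded by $\tfrac{1}{4}(n^{2}-n)^{2}$ via Popoviciu's inequality. This is precisely the finite quantity computed in closed form as $\dot{\sigma}_{\kappa}^{2}(\mathcal{M})$ in equation~\ref{eq:kem_population_variance}, and Tchebyshev's inequality (Lemma~\ref{lem:chebyshev}) then confirms that this second central moment controls the tail mass, consistent with the strictly sub-Gaussian characterisation.

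I expect no serious obstacle here, since the pointwise boundedness of $\kappa$ trivialises every moment bound; the only point demanding care is the conceptual separation of the two variance notions, namely the within-variable concentration $\sigma_{\kappa}^{2}(x)$ and the across-population dispersion of $\rho_{\kappa}$, together with the verification that \emph{independence}, rather than any stronger distributional hypothesis, is all that is needed. The latter follows precisely because each $\kappa$ entry is bounded almost surely and not merely in expectation, so no tail condition on the originating extended-real variates is ever invoked.
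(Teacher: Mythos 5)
Your proposal is correct, and its core mechanism --- the deterministic pointwise bound $\kappa_{kl}^{2}(x)\le\tfrac{1}{2}$ with the $n$ diagonal entries vanishing, so that the sum of squares is trapped in $[0,\tfrac{1}{2}(n^{2}-n)]$ --- is exactly the mechanism the paper relies on. The difference is one of scope and organisation. The paper's own proof of this lemma establishes only the lower bound: it argues that constant vectors map to the all-zero matrix and records $\sigma_{\kappa}^{2}(x)=\sum_{i,j}\kappa_{ij}^{2}(x)\ge 0$ (in unnormalised form), deferring the finite upper bound entirely to the companion Lemma~\ref{lem:sigma_upper}, which derives $\sigma_{\kappa}^{2}\le a(n^{2}-n)$ by a case analysis on the skew-symmetric structure. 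You instead prove both bounds in a single step using the normalised form of equation~\ref{eq:kem_variance}, obtaining the sharper and more directly interpretable conclusion $\sigma_{\kappa}^{2}(x)\in[0,1]$, which matches the support $(0,1]$ asserted in the main text but not actually derived in the appendix pair of lemmas. You also add a third component --- bounding the across-population dispersion of $\rho_{\kappa}$ over $\mathcal{M}$ via compactness and Popoviciu's inequality --- which the paper does not attempt here at all (its population variance $\dot{\sigma}_{\kappa}^{2}$ in equation~\ref{eq:kem_population_variance} is simply asserted). That addition is welcome, because the lemma's statement is ambiguous about which ``second moment'' is meant, and your explicit separation of the within-variable concentration from the population dispersion of the distance is a genuine clarification the paper lacks. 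One small caution: the paper's appendix works with the unnormalised sum while the main text's equation~\ref{eq:kem_variance} carries the $\tfrac{2}{n(n-1)}$ prefactor; your proof silently adopts the normalised convention, which is the more defensible choice but does not match the display in the paper's own proof.
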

\begin{proof}
For $x \in X^{n \times 1}$, observe that $\kappa(x)\, \forall\, x \in \mathcal{M}$ is always a square matrix. The distance for any given $x$ in the domain of any metric to itself is observed to be always 0, by the definition of a metric space. However, the second moment of the variance is defined upon the $\kappa$, not its cross-product, and therefore the elements denote, effectively, the tabulated rate of distinctiveness (uniqueness) upon an real vector of length $n$. Therefore, let the polynomial function of $\kappa^{r}(\cdot)$ denote the $r^{th}$ moment of the vector $x$, where the first moment has already been shown to be 0, and therefore the raw moments and the central moments upon the Kemeny metric are obtained by the central limit theorem. Thus, it logically follows that there exist a finite upper and lower bound upon the Kemeny variance, as expected for any totally bounded space, such as the Kemeny metric. 

We first establish for $\kappa(x)$ the existence of a lower bound, which will be shown to be 0. This definition of the second moment is equivalent to a finite real constant vector, for which all $n$ elements are identical. Thus, there will be observed no such variability in the realised scores upon $x$. For any $\kappa(x) \in \mathcal{M}$ then, there exists a vector marginalised over $i,j = 1,\ldots,n$ \[  \sum_{j=1}^{n}\kappa_{ij}(x) = \sum_{i=1}^{n}\kappa_{i}(x) = 0 \equiv \sum_{i=1}^{n} -1 \sum_{j=1}^{n} \kappa_{ij}(x)\] There exist $n$ vectors for which all elements in $x$ are identical, and therefore all mappings of any $n$-tuple is a vector of $n$ 0's, since by condition three of the $\kappa(\cdot)$ function, all $n$ pairwise comparisons consist of the mappings for which $(x_{1},\cdots, x_{n}) \rightarrow a\cdot(0_{1},\cdots,0_{n})^{\intercal}$. For any $n$ therefore, it is seen that the square of such a sequence is always 0, for the general expression of the $r=2$ power of the $\kappa(\cdot)$ from which the second moment may be expressed 
\begin{equation}
\label{eq:kemeny_var}
\sigma_{\kappa}^{2}(x)=\sum_{i=1}^{n}\sum_{j=1}^{n} \kappa_{ij}^{2}(x) \ge 0.
\end{equation}
\end{proof}

\begin{lemma}~\label{lem:sigma_upper}
We next proceed to prove that the upper-bound, for any $n$ is also finite and must sum to a constant real which will be shown to be proportionate to a scalar function of $n$, such that the upper bound of the variance is defined to be always in the interval ring $\sigma_{\kappa}^{2} \propto [0,1]$. It will also be shown that when $\sigma_{\kappa}^{2}$ is maximised, the sub-space of the domain is equivalent to that of the traditional $S_{n}$ group from which the original order-statistics concept arises.  
\end{lemma}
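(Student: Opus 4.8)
The plan is to bound the double sum $\sum_{k,l=1}^{n}\kappa_{kl}^{2}(x)$ appearing in equation~\ref{eq:kem_variance} from above by exploiting the three-valued nature of the $\kappa$ map fixed in equation~\ref{eq:kem_score}. First I would observe that each squared entry satisfies $\kappa_{kl}^{2}(x)\in\{0,\tfrac{1}{2}\}$, taking the value $\tfrac{1}{2}=(\sqrt{0.5})^{2}$ precisely when $x_{k}\neq x_{l}$ and the value $0$ precisely when $x_{k}=x_{l}$ (which subsumes all $n$ diagonal entries, where $k=l$). Consequently the double sum equals $\tfrac{1}{2}$ times the number of ordered pairs $(k,l)$ whose coordinates differ, a count that is manifestly maximised when every off-diagonal pair is discordant.

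Next I would carry out the count: there are exactly $n^{2}-n$ off-diagonal ordered pairs, so the number of discordant pairs never exceeds $n^{2}-n$, yielding $\sum_{k,l=1}^{n}\kappa_{kl}^{2}(x)\le\tfrac{1}{2}(n^{2}-n)$. This establishes the first two claims at once: the unnormalised sum is bounded by the scalar function $\tfrac{1}{2}(n^{2}-n)$ of $n$, and it is finite for every finite $n$. Substituting into equation~\ref{eq:kem_variance} gives the normalised upper bound $\sigma_{\kappa}^{2}(x)\le\tfrac{2}{n(n-1)}\cdot\tfrac{n^{2}-n}{2}=1$, which is itself independent of $n$. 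Pairing this with the lower bound $\sigma_{\kappa}^{2}(x)\ge 0$ already proven in Lemma~\ref{lem:sigma_lower} delivers the totally bounded interval $\sigma_{\kappa}^{2}\in[0,1]$.

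Finally, for the characterisation of the maximiser I would argue the biconditional: equality $\sigma_{\kappa}^{2}(x)=1$ holds if and only if every off-diagonal pair is discordant, i.e.\ $x_{k}\neq x_{l}$ for all $k\neq l$. This is exactly the condition that $x$ exhibits no ties, which is the defining property of the strict permutation group $S_{n}$. Any tie strictly converts at least one entry of value $\tfrac{1}{2}$ into $0$, strictly lowering the sum and hence the variance; thus the supremum $\sigma_{\kappa}^{2}=1$ is attained precisely on, and only on, the $S_{n}$ sub-domain, recovering the classical order-statistic setting in which $\sigma_{\kappa}^{2}$ is the constant $1$ noted earlier in the text.

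The argument is essentially a counting bound, so no serious analytic obstacle arises. The only step demanding care is the biconditional in the maximiser characterisation: I must verify that the presence of even a single tie forces a strict decrease, so that the maximum is achieved exactly and not merely on some superset of $S_{n}$ — this is what pins the maximising sub-space to the permutations without ties rather than to a larger tied configuration.
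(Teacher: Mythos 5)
Your proposal is correct and follows essentially the same route as the paper: both bound the double sum by noting that each of the $n^{2}-n$ off-diagonal entries of the skew-symmetric matrix contributes at most $(\sqrt{0.5})^{2}=\tfrac{1}{2}$, giving the unnormalised maximum $\tfrac{1}{2}(n^{2}-n)$ (hence $\sigma_{\kappa}^{2}\le 1$ after normalisation), attained exactly when no ties occur, i.e.\ on the $S_{n}$ sub-domain. Your direct counting formulation and explicit biconditional for the maximiser are cleaner than the paper's contradiction-style phrasing, but the underlying argument is the same.
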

\begin{proof}
First, note that for all \(\mathcal{M}\) permutations upon the vector $x$ of length $n$, there are \(\mathcal{M}\) $\kappa_{i}$ mappings, each of which possess an expectation of 0. The maximum variance is to be expected as the duplication of real numbers decreases; this is seen in the definition of the variance of 0 as a constant vector of real scores, such that all $n$ elements possess the same value. As the diversity of observable scores increases, the variance must also increases. Therefore, the maximum variance upon the orderings of a vector $x$ is to be observed when there are no $\kappa(x) = 0$ off of the diagonal in the skew-symmetric matrix and therefore a real magnitude is always equal to itself in the mapping resulting from the $\kappa$ function. 

Assume said skew-symmetric \(\kappa\) matrix, the expectation of the square of such a matrix with \(2(n**2-n)\) positive values whose maxima $\kappa_{i}^{2}(x)$ is no greater than $a^{2}(n^{2}-n)$. Such a measurement would require to be observed a vector $x$ for which an element in $x$ must satisfy either of two conditions: (1) $(a_{ij})^{\intercal}\cdot a_{ij} \equiv -(a_{ij})^{\intercal} a_{ji}$ or (2) $(a_{ij})^{\intercal}\cdot a_{ij} \ne -(a_{ij})^{\intercal} a_{ji}.$ We observe first that the only scenario for which the sign of $\pm a$, the product of two elements in a vector much always produce 0, by equation~\ref{eq:kem_score} and therefore a tie, occurs. For all other elements in the mapping $ x \to \kappa(x)$ then, the skew-symmetric nature of the \(\kappa\) matrix enforces that for the sum of $\sqrt{.5} \cdot n$ elements to be greater than $a(n^{2}-n)$. This would require $-(a_{ij})^{\intercal}\cdot(a_{ij}) = -(a_{ij})^{\intercal}\cdot(a_{ij})$ and therefore for $(-a) = a, a \in \{-a,0,a\} \setminus{0} = \varnothing$. This is an empty set of solutions, and therefore is impossible to occur: thus it is proven that the maximum bound for $0 \le \sigma_{\kappa}^{2} \le a(n^{2} - n)$ for $a > 0, n \ge 0$. Allowing \(a = \sqrt{.5}\) then, we obtain a maximal variance on the support of \(\tfrac{n^{2}-n}{\sqrt{.5}^{2}}\), and thus, for all $n$ it is therefore observed that by the characterisation of the Kemeny distance, there exists no $n$ collection of real $x$ vectors for which there does not exist a finite variance no less than 0 and no greater than $\tfrac{1}{2}(n^{2} - n)$.
\end{proof} 

\begin{lemma}~\label{lem:partition}
Let \(F_{\kappa}\) be a distribution function for the Kemeny metric of a random variable realised upon \(\overline{\mathbb{R}}\). For each \(\epsilon>0\) there exists a finite partition of the extended real line such that for an orderable sequence \(\infty^{-} \le t_{1} \le \cdots \le t_{k} \le \infty^{+}\) by Lemma~\ref{lem:lsn}, and  there exists \(0 \le j \le k-1\)
\[F_{\kappa}(t_{j+1})^{-} -  F_{\kappa}(t_{j}) \le \epsilon.\]
\end{lemma}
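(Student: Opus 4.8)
The plan is to construct the partition explicitly from the generalised inverse (quantile function) of $F_{\kappa}$, exploiting the fact that $F_{\kappa}$ is a genuine cumulative distribution function — non-decreasing, right-continuous, with $F_{\kappa}(\infty^{-}) = 0$ and $F_{\kappa}(\infty^{+}) = 1$ — which follows from the existence of the Radon probability measure established in Lemma~\ref{lem:radon} together with the Beta-Binomial characterisation of the Kemeny distances. The orderability of the underlying sequence, needed to render $t_{1} \le \cdots \le t_{k}$ meaningful, is supplied by Lemma~\ref{lem:lsn}, since every finite extended-real vector admits a linear ordering through the image of the $\kappa$ mapping.

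First I would fix $\epsilon > 0$ and choose an integer $k$ with $k \ge \lceil 1/\epsilon \rceil$, so that $1/k \le \epsilon$. I would then set $t_{0} = \infty^{-}$, $t_{k} = \infty^{+}$, and for each $j = 1,\ldots,k-1$ define the partition point as the $j/k$ quantile,
\[
t_{j} = \inf\{x \in \overline{\mathbb{R}} : F_{\kappa}(x) \ge j/k\}.
\]
Because $F_{\kappa}$ is supported on the compact, totally bounded neighbourhood $U(\mathcal{M})$ (Lemma~\ref{lem:kem_bounded}), each infimum is attained at a finite point, so the sequence $\infty^{-} \le t_{1} \le \cdots \le t_{k-1} \le \infty^{+}$ is a bona fide finite partition of the extended real line. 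Next I would establish the two bounding inequalities that together yield the claim. By right-continuity of $F_{\kappa}$ the defining set $\{x : F_{\kappa}(x) \ge j/k\}$ contains its own infimum, whence $F_{\kappa}(t_{j}) \ge j/k$; and by the very definition of $t_{j+1}$ as an infimum, every $x < t_{j+1}$ satisfies $F_{\kappa}(x) < (j+1)/k$, so passing to the left limit gives $F_{\kappa}(t_{j+1}^{-}) \le (j+1)/k$. Subtracting,
\[
F_{\kappa}(t_{j+1}^{-}) - F_{\kappa}(t_{j}) \le \frac{j+1}{k} - \frac{j}{k} = \frac{1}{k} \le \epsilon,
\]
which holds for every index $0 \le j \le k-1$, and not merely for some single index.

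The delicate point — and the one place where the argument must be handled with care — is the appearance of the left limit $F_{\kappa}(t_{j+1}^{-})$ rather than $F_{\kappa}(t_{j+1})$ itself. Because the Beta-Binomial law is discrete, $F_{\kappa}$ is a step function possessing genuine atoms, and a single atom may carry probability mass far exceeding $\epsilon$; were the full right-continuous value $F_{\kappa}(t_{j+1})$ used in place of the left limit, no finite partition could control the increment. The quantile construction circumvents this precisely by seating each $t_{j}$ at the onset of a jump, so that any atom mass is excised by the left limit and reassigned to the start of the following block. This is the crux of the Glivenko-Cantelli machinery invoked in Theorem~\ref{thm:gc}, and its validity here rests only on the right-continuity and total boundedness of $F_{\kappa}$, both of which have already been secured.
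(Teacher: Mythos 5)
Your proof is correct, but it constructs the partition differently from the paper. The paper proceeds greedily: it sets \(t_{0}=\inf\overline{\mathbb{R}}\) and defines \(t_{j+1}=\sup\{z: F(z)\le F(t_{j})+\epsilon\}\), so each cell is made as wide as possible subject to the increment bound, and the remaining burden is to argue that only finitely many steps are needed (which the paper handles rather loosely, by appeal to right-continuity and the finiteness of the underlying population). You instead fix the number of cells in advance, \(k\ge\lceil 1/\epsilon\rceil\), and seat the cut points at the \(j/k\) quantiles \(t_{j}=\inf\{x: F_{\kappa}(x)\ge j/k\}\); finiteness of the partition is then automatic, and the work shifts to verifying the increment bound, which you do cleanly from right-continuity (giving \(F_{\kappa}(t_{j})\ge j/k\)) and the infimum property (giving \(F_{\kappa}(t_{j+1}^{-})\le (j+1)/k\)). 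Your route buys a sharper and more complete result: you obtain the bound for \emph{every} \(0\le j\le k-1\), which is the form actually required when the lemma is invoked in the Glivenko--Cantelli argument of Theorem~\ref{thm:gc}, whereas the statement as printed only asserts existence of a single such \(j\). Your closing remark about atoms is also well taken: since the Kemeny distribution is discrete, the left limit in the conclusion is essential, and the quantile construction is precisely what lets a large atom be absorbed at the start of the next cell rather than inflating an increment. The only inessential blemish is the claim that each infimum is "attained at a finite point" by compactness of \(U(\mathcal{M})\); nothing in the argument needs the \(t_{j}\) to be finite, only that they lie in \(\overline{\mathbb{R}}\) and are non-decreasing (which follows from the nesting of the sets \(\{x: F_{\kappa}(x)\ge j/k\}\)).
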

\begin{proof}
Let \(0 < \epsilon\) be given, such that there exists monotone convergence. Allow \(t_{0} = \inf{\overline{\mathbb{R}}}\), for which \(j \ge 0\) we define \(t_{j+1} = \sup\{z: F(z) \le F(t_{j}) + \epsilon\}.\) Then by right continuity, there are a finite sequence of steps for which this definition is discontinuous, and we observe that for our definition of the \(\kappa\) function, this scenario does not occur upon any countable finite population. Thus is defined a transition state of monotonically decreasing distance sequences from the expectation upon \(F\), and thus a finite distance for any finite sample from the compact and totally bounded supremum Kemeny distance \(n^{2}-n\).
\end{proof}

\begin{lemma}
\label{lem:cramer-rao}
The Kemeny estimator functions satisfy the Cram\`{e}r-Rao lower bound upon the population \(\mathcal{M}\) constructed of asymptotic limit on \(n\).
\end{lemma}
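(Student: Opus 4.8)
The plan is to verify the three ingredients of the Cram\`{e}r--Rao inequality in turn---regularity of the likelihood, finiteness and positivity of the Fisher information, and the attainment (equality) condition---and then to pass to the asymptotic regime in which the bound is met with equality. The inequality itself asserts that for any unbiased estimator \(\hat{\theta}\) of a parameter \(\theta\) one has \(\operatorname{Var}(\hat{\theta}) \ge I(\theta)^{-1}\), with equality precisely when the score \(\partial_{\theta}\log f(x\mid\theta)\) is an affine function of \(\hat{\theta}-\theta\). Unbiasedness is already in hand from Lemma~\ref{lem:unbiased}; the existence of a complete sufficient statistic \(\mu_{\kappa}\) was established in the sufficiency subsection via Lehmann--Scheff\'{e}; and the maximum-likelihood characterisation given above identifies the Kemeny estimator with a smooth function of the natural sufficient statistic of the Beta family. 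These are exactly the structural facts an efficiency proof exploits.

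First I would discharge the regularity conditions. Because the normalised Kemeny distance is Beta distributed on the fixed support \([0,1]\) independently of the shape parameters, the support does not depend on \(\theta\), and the Leibniz interchange of differentiation and integration is licensed by the existence of the Radon derivative (Lemma~\ref{lem:radon}) together with the total boundedness of Lemma~\ref{lem:kem_bounded}, which furnishes an integrable dominating envelope. The log-density \(\log f = (\alpha_{1}-1)\log x + (\alpha_{2}-1)\log(1-x) - \log \mathrm{B}(\alpha_{1},\alpha_{2})\) is smooth in the shape parameters, so the score and its derivative exist, and the Fisher information is obtained by differentiating \(-\log \mathrm{B}(\alpha_{1},\alpha_{2})\) twice; since the resulting trigamma entries are finite and strictly positive for every finite \(n\), one has \(0 < I(\theta) < \infty\) as required.

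The decisive step is the attainment condition. The Cram\`{e}r--Rao bound is a Cauchy--Schwarz estimate on \(\operatorname{Cov}(\hat{\theta}, \partial_{\theta}\log f)\), and equality holds iff the score is proportional to \(\hat{\theta}-\theta\). For the Beta family written in canonical exponential form this proportionality is automatic for the natural sufficient statistic, and the maximum-likelihood characterisation above exhibits the Kemeny correlation as exactly that statistic after the affine rescaling of equation~\ref{eq:kem_cor}; the positive homogeneity of Theorem~\ref{lem:hilbert} preserves the proportionality under this rescaling. Passing to the population limit, the asymptotic normality of Lemma~\ref{lem:kem_asym_normal} together with the central limit theorem of Theorem~\ref{thm:clt_kem} yields \(\sqrt{n}(\hat{\theta}-\theta)\xrightarrow{d}\mathcal{N}(0, I(\theta)^{-1})\), so the limiting variance coincides with \(I(\theta)^{-1}\) and the bound is met with equality over \(\mathcal{M}\).

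The main obstacle is reconciling this asymptotic equality with finite-sample behaviour. Theorem~\ref{thm:strict_subgauss} and its corollary show the finite-\(n\) law to be strictly sub-Gaussian with \emph{negative} excess kurtosis, hence not Gaussian, so the exact attainment condition cannot hold verbatim for finite \(n\); the proportionality is recovered only as the excess kurtosis tends to \(0\) from below. This is precisely why the statement is framed on the population \(\mathcal{M}\) in the asymptotic limit, and the careful point will be to argue that the finite-sample variance never drops below \(I(\theta)^{-1}\)---the inequality direction, which follows from the lower bound on the variance in Lemma~\ref{thm:slutsky_lemma}---while converging to it monotonically, rather than claiming exact finite-sample efficiency.
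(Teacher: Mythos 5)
Your route is genuinely different from the paper's. The paper's own proof never touches the Fisher information or the regularity conditions: it argues from unbiasedness (Lemma~\ref{lem:unbiased}), asymptotic normality (Lemma~\ref{lem:kem_asym_normal}), and the Gauss--Markov property (Theorem~\ref{thm:gauss-markov}), then asserts convergence of a variance ratio \(\tfrac{n^{n}-n}{n^{n}}\sigma^{2}_{\kappa}\to\sigma^{2}\) and appeals to the total-variation bounds of Lemma~\ref{lem:lower_equality} and Lemma~\ref{lem:upper_equality} to close the argument. You instead run the textbook efficiency proof: regularity of the Beta likelihood (parameter-free support, dominated differentiation via Lemma~\ref{lem:radon} and Lemma~\ref{lem:kem_bounded}), finiteness and positivity of \(I(\theta)\) from the trigamma entries, the Cauchy--Schwarz attainment condition, and asymptotic efficiency of the MLE. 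Your skeleton is considerably more explicit about what the Cram\`{e}r--Rao bound actually requires, and your final paragraph correctly identifies the tension the paper glosses over, namely that strict sub-Gaussianity at finite \(n\) is incompatible with exact attainment.

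There is, however, a concrete gap in your ``decisive step.'' You claim that for the Beta family in canonical exponential form the proportionality of the score to \(\hat{\theta}-\theta\) ``is automatic for the natural sufficient statistic,'' and that the Kemeny correlation is that statistic after the affine rescaling of equation~\ref{eq:kem_cor}. The canonical sufficient statistics of \(\mathrm{Beta}(\alpha_{1},\alpha_{2})\) are \(\log x\) and \(\log(1-x)\), not \(x\) itself; the Kemeny correlation is affine in the normalised distance \(x=\rho_{\kappa}/(n^{2}-n)\), so it is \emph{not} an affine function of the natural sufficient statistic, and the exact attainment condition fails for the mean parametrisation even before any appeal to non-Gaussianity. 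Exponential-family structure buys you exact efficiency only for unbiased estimators of \(E[\log X]\) (digamma differences), which is not the estimand here. Your asymptotic conclusion can still be rescued through the standard first-order efficiency of the MLE together with Theorem~\ref{thm:clt_kem}, which is essentially where the paper also lands, but the middle paragraph as written proves more than is true. Separately, your closing appeal to Lemma~\ref{thm:slutsky_lemma} for the inequality direction conflates the lower bound \(\dot{\sigma}^{2}_{\kappa}(n)\) on the population variance of distances with the inverse Fisher information \(I(\theta)^{-1}\); these are different quantities and the lemma does not deliver \(\operatorname{Var}(\hat{\theta})\ge I(\theta)^{-1}\) for finite \(n\).
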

\begin{proof}
The unbiasedness of the estimator function is established in Lemma~\ref{lem:unbiased}, and said estimator function observed to be asymptotically normally distributed as well by Lemma~\ref{lem:kem_asym_normal}. As the variance of the estimator function is strictly sub-Gaussian for all finite \(n\), the variance is a scalar constant ratio which converges to 1 as a linear function of all data distributions. Under these conditions, it follow that by the Gauss-Markov theorem (Theorem~\ref{thm:gauss-markov}) the asymptotic variance grows approach the above to the asymptotic variance of the normal distribution, which in the limit wrt \(n\) is 0, and therefore concludes the proof in obtaining the Cram\`{e}r-Rao lower bound \[\lim_{n\to\infty^{+}} \frac{n^{n}-n}{n^{n}}\sigma^{2}_{\kappa} \to \sigma^{2},\] 
as is necessary by the central limit theorem. The convergence is guaranteed by Lemma~\ref{lem:lower_equality}, which holds that for any random variable upon the Kemeny metric, the distance will converge to 0, which is isometric to the solution under the isometric Euclidean distance. The upper-bound upon the measurability ensures that even if convergence does not tend to 0 (as would occur when a model is incorrectly specified, and the conditional Bayes error rate upon the sample is greater than 0) then the error is still always uniquely identified. 

These properties demonstrate that the duality construction of the two estimators are complementary, rather than mutually exclusive, and thus the examination of both metric spaces upon finite samples must, also by definition, decrease the entropy, thereby increasing the information gained towards the mutually agreeable optima by upon both function spaces.
\end{proof}

\begin{lemma}
\label{lem:lower_equality}
The total variational distance between any Euclidean linear function and an unbiased Kemeny linear function is lower-bounded by 0.
\end{lemma}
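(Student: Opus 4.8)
The plan is to treat this as a statement about the non-negativity of a metric, once the relevant probability measures have been shown to exist on a shared measurable space. The total variational distance between two probability measures $P$ and $Q$ on a common Borel $\sigma$-algebra is $d_{TV}(P,Q) = \sup_{A} |P(A) - Q(A)|$, taken over measurable sets $A$. By evaluating at $A = \varnothing$ together with the non-negativity of the absolute value, the supremum is immediately bounded below by $0$. Thus the entire content of the lemma reduces to verifying that both the Euclidean linear function and the unbiased Kemeny linear function induce bona fide probability measures over a shared $\sigma$-algebra, so that $d_{TV}$ is a genuine metric to which the axiom of non-negativity applies; the asserted lower bound is then the standard sandwich partner for the upper bound of Lemma~\ref{lem:upper_equality}.

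First I would establish that the Kemeny linear function induces a valid Borel probability measure. This follows from the existence of the Haar measure (Lemma~\ref{lem:haar}) and the Radon derivative (Lemma~\ref{lem:radon}) upon the Kemeny metric space, together with the compactness and total boundedness of Lemma~\ref{lem:kem_bounded}, which deliver a finite, tight, inner-regular measure on the $T_{6}$ Hausdorff space. Unbiasedness (Lemma~\ref{lem:unbiased}) fixes the expectation at $0$, so the induced law $Q$ is a genuine probability law supported on the compact neighbourhood $U(\mathcal{M})$.

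Next I would verify that the Euclidean (Frobenius) linear function also induces a probability measure $P$ on the same $\sigma$-algebra wherever it is defined, namely whenever both variates are finite so that the $\ell_{2}$-inner product exists. On the restricted subdomain $S_{n}$, where ties occur with probability $0$, the rank--score bijection identifies $P$ with $Q$, and both are supported on the same compact interval once centred and scaled; the common Borel structure is inherited from the isometric pre-Hilbert equivalence of Theorem~\ref{lem:hilbert}. With $P$ and $Q$ both defined on a common Borel $\sigma$-algebra, $\sup_{A} |P(A) - Q(A)| \ge 0$ by construction, which closes the argument.

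The hard part will be handling the extended-real portion of the domain, where the Euclidean linear function is degenerate (its expectation and inner product become non-finite, exactly as discussed in the duality section) and so fails to induce a measure in the ordinary sense. The resolution I would adopt is to observe that this non-measurability of the Euclidean law on $\overline{\mathbb{R}} \setminus \mathbb{R}$, and the residual mismatch on the collinear points, occur on a set of Kemeny-probability $0$: the Glivenko--Cantelli uniformity of Theorem~\ref{thm:gc} and the vanishing collision probability of Theorem~\ref{thm:clt_kem} force the two laws to agree outside a null set, so that $d_{TV}$ is evaluated on the common finite support, where the non-negativity is unconditional. This is the delicate step, since it is precisely where the two metric spaces cease to be literally isometric; everything else is the routine invocation of the metric axiom. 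Establishing it sets up the complementary Lemma~\ref{lem:upper_equality}, which will bound $d_{TV}$ from above and thereby force equality to $0$ on the population.
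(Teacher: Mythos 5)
There is a genuine gap, and it stems from how you have read the statement. You interpret ``lower-bounded by 0'' as the bare non-negativity of the total variation distance, which, as you note, is immediate from $d_{TV}(P,Q)=\sup_{A}|P(A)-Q(A)|\ge 0$; your entire effort then goes into verifying that the two functions induce probability measures on a common Borel $\sigma$-algebra. That is not what the paper's proof establishes, and it is not the content the lemma is later asked to carry. The paper's argument is an $\varepsilon$-approximation: it fixes $\varepsilon>0$, chooses a truncation level $N$ with tail mass $e_{F}(N)\le \varepsilon/2$, truncates $X$ at $N$, shifts the resulting mean to an arbitrary target $\gamma^{2}_{M}$ as an $\tfrac{\varepsilon}{2}$-mixture, and then uses sub-additivity of the total variation norm, $\|X-Y\|_{TV}\le \|X-X_{[N]}\|_{TV}+\|X_{[N]}-Y\|_{TV}\le \varepsilon$, to conclude that since $\varepsilon$ is arbitrary, \emph{zero is the only possible lower bound} --- i.e.\ the infimum of the distance is $0$ and is approached. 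That is a strictly stronger claim than $d_{TV}\ge 0$, and it is the version the paper actually uses downstream: Lemma~\ref{lem:cramer-rao} cites this lemma as guaranteeing that ``the distance will converge to 0,'' and the duality section uses the pair of Lemmas~\ref{lem:lower_equality} and \ref{lem:upper_equality} to place any viable solution in a neighbourhood of the true solution. Your proof, which terminates at $\sup_{A}|P(A)-Q(A)|\ge 0$ evaluated at $A=\varnothing$, cannot support either of those citations; a trivially true non-negativity statement gives no convergence information whatsoever.

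Your preparatory work is not wasted --- establishing that both laws live on a common $\sigma$-algebra, and your observation that the Euclidean law degenerates on the extended-real portion of the domain while the Kemeny law does not, are both points the paper's proof implicitly relies on (its truncation step is precisely the device for handling heavy or non-finite tails). But to close the gap you would need to add the quantitative step: show that for every $\varepsilon>0$ the Euclidean-induced law can be truncated and recentred into a law within $\varepsilon$ of the Kemeny-induced law in total variation, so that no positive constant can serve as a lower bound. Without that, the lemma as you have proved it is true but vacuous, and the sandwich with Lemma~\ref{lem:upper_equality} that you gesture at in your closing sentence does not ``force equality to 0'' --- the upper bound in that lemma is of order $\tfrac{n^{2}-n}{2}$, not $0$, so the two bounds do not pinch.
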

\begin{proof}
Let the sum of all finite collections of random variables from both sub-Gaussian and Gaussian fields be stable, allowing \(Z\) to be a random set of realised variates of length \(n\), and let \(X\) be an independent random variable with distribution function \(F_{\kappa}\) and for which \(N \in U(\mathcal{M})\) is any finite number defining the compact and totally bounded support of \(F_{\kappa}\). It would then follow that the tail probability of \(X\) beyond \(N\) is the chance that \(X \notin U(\mathcal{M})\). If \(F_{\kappa}\) is the Gaussian distribution then, the probability that \(N \ni F\) and thus that \(N\) is not measurable upon \(F_{\kappa}\), is 0: \[e_{G} = \Pr(|X| > 0,\dots,N) = \Pr(X \notin U(\mathcal{M}) = F(N) - \lim_{\varepsilon \to 0^{+}} F(-N-\varepsilon).\]

Now consider the finite Galois field \(\mathcal{G}\) which contains \(G\), and consists of the Beta-Binomial distribution which is also measured upon random variable \(X\). The total variation distance between \(Y = G(Z)\) and \(X = F(Z)\) is the total variation distance between their probability distributions, where \(\mathcal{F}\) is the (probability measure) Borel set sigma-algebra upon \(F\) and \(\mathcal{G}\) is the corresponding set upon \(G\), from which follows: \[\|X-Y\|_{TV} = \sup_{A\subset \mathcal{F}}\left| \Pr(X\in A) - \Pr(Y \in A) \right|.\] The Markov (Lemma~\ref{lem:markov}) and Tchebyshef (Lemma~\ref{lem:chebyshev}) inequalities are immediately seen to hold for any sub-Gaussian function measured upon the Kemeny distance function. This is due to possessing finite expectations (compact and totally bounded, Lemma~\ref{lem:kem_bounded}), as the expectation of \(|X|\) is bounded above by the integral of the common population, which is finite (Definition~\ref{def:stict_sg}) and therefore are always measurable for any random variable which arises. 

Assume \(N\) is large enough to make \(\Pr(X \in U(\mathcal{M}))\) trivially non-zero, satisfying all conditions for which \(n >2\). Truncating \(X\) at a rational fraction of \(N\) thereby removes all chance that it exceeds \(N\), for which the value of the new distribution function at \(G(x)\) is \(0\) for \(x < -N\), \(1\) for \(x \ge N\) as defined upon \(F(x)\), and otherwise equals \[F_{[N]}(x) = \frac{F(x) - \lim_{\varepsilon \to 0^{+}} F(-N-\varepsilon)}{1-e_F(N)}.\] 

Assume instead that \(F(x)\) is a biased function, allowing \(0 < p < 1\) and allow \(\gamma^{2}_{M}\) to be any positive number, representing the amount by which we wish to shift the expectation of the distribution of \(X\) upon the expected asymptotic population defined by the Kemeny metric. Should  \(X\) be any random variable with a finite expectation, then allow \(\varepsilon > 0\). Pick an \(N\) for which \(e_{F}(N) \le \tfrac{\varepsilon}{2}\) and truncate \(X\) at \(N\). It then follows that the mean of \(E(X_{[N]}) = 0 + \gamma^{2}_{M}\) expressed as a \(\frac{\varepsilon}{2}\)-mixture, which changes the total variation distance by at most \(\tfrac{\varepsilon}{2}\). By sub-additivity then follows
\begin{equation}
\|X - Y\|_{TV} \le \|X - X_{[N]}\|_{TV} + \|X_{[N]} - Y\|_{TV} \le \frac{\varepsilon}{2} + \frac{\varepsilon}{2} = \varepsilon,
\end{equation}
as there exists no element \(N \notin U(\mathcal{M})\).

This therefore proves that for any \(X\) exists a finite expectation, and therefore and independently distributed random variable within finite expectation upon the Kemeny metric, there is always a way to truncate \(X\) and shift its mean to \(\gamma^{2}_{M}\), no matter what value \(\gamma^{2}_{M}\) might have, without moving by more than \(\varepsilon\) in the total variation distance. These constructions put an upper bound on \(\|Y\|\): it is no greater than the larger of $|\tfrac{N}{2}|$, representing the absolute value of the position of the atom located at \(2\tfrac{(\gamma^{2}_{M} - E(X))}{\varepsilon}\). In consequence,  the tails of $Y$ are zero, making them sub-Gaussian. As $\varepsilon$ may be arbitrarily small, the only possible lower bound on the distance is zero, and the affine linear invariance of any compact and totally bounded, or complete metric, space allows the estimated parameter to almost surely be upon \(U(\mathcal{M})\), for all finite \(n\).
\end{proof}

\begin{lemma}
\label{lem:upper_equality}
The total variational distance between any Euclidean linear function and an unbiased Kemeny linear distance function is upper-bounded by a finite function of \(\Pr_{\kappa}(X)\), corresponding to a finite distance of \(|\tfrac{n^{2}-n}{2}|\ge{0}.\)
\end{lemma}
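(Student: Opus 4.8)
The plan is to establish this upper bound as the exact dual to the lower bound of Lemma~\ref{lem:lower_equality}, again exploiting the compact and totally bounded support furnished by Lemma~\ref{lem:kem_bounded}. I would begin from the total variation definition already introduced, \(\|X-Y\|_{TV} = \sup_{A\subset\mathcal{F}}|\Pr(X\in A) - \Pr(Y\in A)|\), and observe that because the Kemeny distance is supported entirely within \(U(\mathcal{M}) = [-\tfrac{n^{2}-n}{2},\tfrac{n^{2}-n}{2}]\), every event \(A\) and its complement partition this bounded interval. Since the expectation is \(0\) by Lemma~\ref{lem:unbiased}, no probability mass can be displaced further than the support radius, and this is precisely the finite quantity \(|\tfrac{n^{2}-n}{2}|\) claimed.

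Next, I would invoke the Markov (Lemma~\ref{lem:markov}) and Tchebyshef (Lemma~\ref{lem:chebyshev}) inequalities to convert the finite second moment into a hard cutoff on the tails: for any \(t > \tfrac{n^{2}-n}{2}\), the tail probability \(\Pr(|\rho_{\kappa}| \ge t)\) is identically zero, so the Kemeny measure places no mass outside the support. Mirroring the truncation construction of Lemma~\ref{lem:lower_equality}, in which \(X\) was truncated at \(N\) and its mean shifted by \(\gamma^{2}_{M}\), I would note that here the reverse truncation can relocate no measurable mass beyond the extremal atom at \(\pm\tfrac{n^{2}-n}{2}\). Hence any coupling between the Euclidean (Gaussian) linear function and the Kemeny linear function requires a displacement whose cost is bounded above by the mass-weighted support radius, yielding an upper bound that is a finite function of \(\Pr_{\kappa}(X)\) and never exceeds \(|\tfrac{n^{2}-n}{2}|\).

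The main obstacle will be reconciling the total variation distance, which as a comparison of probability measures is intrinsically valued in \([0,1]\), with the asserted distance of \(|\tfrac{n^{2}-n}{2}|\). I would resolve this by reading the asserted bound as a transport displacement rather than a raw measure discrepancy: the supremum over measurable sets of the mass that must be relocated, weighted by the maximal relocation distance on the compact support, is exactly \(|\tfrac{n^{2}-n}{2}|\cdot\Pr_{\kappa}(X)\). Making this identification precise, and verifying that it is consistent with the lower bound of \(0\) from Lemma~\ref{lem:lower_equality} so that the two results together sandwich the admissible distance, is the delicate step; everything else follows directly from the boundedness and unbiasedness already established.
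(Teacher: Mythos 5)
Your core mechanism --- the compact, totally bounded support \(U(\mathcal{M})\) of radius \(\tfrac{n^{2}-n}{2}\) caps any realised discrepancy between the two linear functions --- is the same one the paper relies on, so the proposal is not wrong in spirit. But the emphasis differs in two ways worth noting. First, the paper's proof pivots on a case you do not treat: it takes \(X\) to be a random variable \emph{without} finite expectation on the Euclidean side, argues that \(X\) is then unmeasurable under the Frobenius norm, and concludes that the probability bounds are only guaranteed on the Kemeny side, where every measurable distance lies in \(U(\mathcal{M})\) with supremal error \(\lvert\tfrac{n^{2}-n}{2}\rvert\) and \(\Pr\bigl(2\lvert\tfrac{n^{2}-n}{2}\rvert \in U(\mathcal{M})\bigr) = 1\). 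Your argument instead stays entirely within the well-behaved regime (finite second moment, Markov/Tchebyshef tail cutoffs, a mirrored truncation), which is cleaner but does not address the degenerate-Euclidean case the paper treats as the substance of the lemma. Second, the ``delicate step'' you flag --- that total variation distance lives in \([0,1]\) while the asserted bound is \(\lvert\tfrac{n^{2}-n}{2}\rvert\) --- is a genuine mismatch, and you should know that the paper's own proof does not resolve it either: it silently switches from ``total variational distance'' to ``realised error'' and ``maximum error'' without ever reconciling the units. Your proposal to read the bound as a transport-type displacement (mass relocated times maximal relocation distance on the compact support) is a reasonable repair, but it is your addition, not something licensed by the statement or by Lemma~\ref{lem:lower_equality}, whose bound of \(0\) genuinely is a TV bound. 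If you pursue that reading you should say explicitly that you are bounding a Wasserstein-style quantity rather than \(\|\cdot\|_{TV}\) as defined in the preceding lemma, since otherwise the two lemmas no longer sandwich the same object.
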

\begin{proof}
The lower bound is always finite and may be treated as 0 w.l.g. for any affine linear function upon a metric space, for an arbitrary collection of points. The upper bound for the performance of a system \(\rho(\hat{Y},Y)\), for arbitrary metric \(\rho\) which may be indeterminate (due to the lack of a compact and totally bounded domain upon the extended real line). For the Euclidean metric space, this issue is identified by the use of `approximately correct systems' which bound the measure of the space to be a finite value. With the extended real line \(\overline{\mathbb{R}}\) of performance, for which in conjunction with the Kemeny metric \(\rho_{\kappa}\) we obtain finite moments for arbitrary measure spaces, we show that the total variational distance is almost surely finitely upper-bounded for any homogeneous function space (i.e., an affine linear function space for a common population). 

Allow \(X\) to not contain finite expectations and also be a random variable. Then as \(X\) is unmeasurable, for the moments are not in the real line upon the Euclidean metric space, the probability bounds are almost surely only guaranteed (measurable with probability 1) upon the Kemeny metric space for which the Kemeny distance support is defined \(U(\mathcal{M})\) about 0. For any non-constant vector \(X\) then, all measurable distances between \(X\) and \(Y\) are upon \(U(\mathcal{M})\), with realised error \(\sup |\tfrac{n^{2}-n}{2}|.\) Then the maximum error is almost surely in the neighbourhood about 0, \(\Pr(2|(\tfrac{n^{2}-n}{2})| \in U(\mathcal{M})) = 1\), guaranteeing convergence.
\end{proof}

\begin{corollary}~\label{cor:lyapunov}
Suppose \(\{X_{1},\ldots ,X_{n}\}\) is a sequence of independent random variables, each with finite expected value $\mu_{i}$ and variance $\sigma _{i}^{2}$. Define $s_{n}^{2}=\sum _{i=1}^{n}\sigma _{i}^{2}$; then for some $\delta>0$, Lyapunov's condition is satisfied, then a sum of \(\frac {X_{i}-\mu _{i}}{s_{n}}\) converges in distribution to a standard normal random variable, as $n \to \infty^{+}$:
\[\lim _{n\to \infty }\;{\frac {1}{s_{n}^{2+\delta }}}\,\sum _{i=1}^{n}\mathbb {E} \left[\left|X_{i}-\mu _{i}\right|^{2+\delta }\right]=0.\]
\end{corollary}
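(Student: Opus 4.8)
The plan is to establish the stated convergence directly by the method of characteristic functions together with L\'{e}vy's continuity theorem, since the corollary as worded is precisely the Lyapunov central limit theorem and so the implication ``Lyapunov's condition $\Rightarrow$ convergence to $\mathcal{N}(0,1)$'' must be derived rather than merely invoked. First I would normalise the summands, writing $Y_{n,i} = (X_i - \mu_i)/s_n$ so that $S_n = \sum_{i=1}^n Y_{n,i}$ has mean $0$ and $\operatorname{Var}(S_n) = s_n^{-2}\sum_i \sigma_i^2 = 1$ for every $n$. Denoting by $\phi_{n,i}(t) = \mathbb{E}[e^{itY_{n,i}}]$ the individual characteristic functions, independence gives $\mathbb{E}[e^{itS_n}] = \prod_{i=1}^n \phi_{n,i}(t)$, and the entire task reduces to showing this product converges pointwise to $e^{-t^2/2}$, the characteristic function of the standard normal law; L\'{e}vy's continuity theorem then yields $S_n \xrightarrow{d} \mathcal{N}(0,1)$. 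This generalises the i.i.d.\ conclusion of Theorem~\ref{thm:clt_kem} to non-identically distributed summands.

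The key analytic input is the third-order Taylor estimate $\bigl| e^{ix} - (1 + ix - \tfrac{1}{2}x^2) \bigr| \le C\,|x|^{2+\delta}$, valid uniformly for $0 < \delta \le 1$, applied with $x = tY_{n,i}$. Taking expectations and using $\mathbb{E}[Y_{n,i}] = 0$ and $\mathbb{E}[Y_{n,i}^2] = \sigma_i^2/s_n^2$ gives $\phi_{n,i}(t) = 1 - \tfrac{1}{2}t^2\sigma_i^2/s_n^2 + \varepsilon_{n,i}(t)$ with $|\varepsilon_{n,i}(t)| \le C|t|^{2+\delta}\,\mathbb{E}|Y_{n,i}|^{2+\delta}$. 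Summing the remainders and substituting the definition of $Y_{n,i}$, the aggregate error is controlled exactly by the Lyapunov ratio:
\begin{equation}
\sum_{i=1}^n |\varepsilon_{n,i}(t)| \le C|t|^{2+\delta}\,\frac{1}{s_n^{2+\delta}}\sum_{i=1}^n \mathbb{E}|X_i - \mu_i|^{2+\delta} \longrightarrow 0,
\end{equation}
by hypothesis. A preliminary step I would record is that Lyapunov's condition forces asymptotic negligibility, $\max_{i\le n}\sigma_i^2/s_n^2 \to 0$, since $(\sigma_i^2/s_n^2)^{(2+\delta)/2} = \sigma_i^{2+\delta}/s_n^{2+\delta} \le \mathbb{E}|X_i-\mu_i|^{2+\delta}/s_n^{2+\delta}$ is dominated by the Lyapunov ratio itself; each factor then satisfies $\phi_{n,i}(t)\to 1$ uniformly in $i$. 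This legitimises passing to the complex logarithm and using $\log(1+z) = z + O(|z|^2)$ to obtain $\sum_i \log\phi_{n,i}(t) = -\tfrac{1}{2}t^2\sum_i \sigma_i^2/s_n^2 + o(1) = -\tfrac{1}{2}t^2 + o(1)$, whence the product converges to $e^{-t^2/2}$.

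An equivalent and slightly more self-contained route, which I would present as the primary argument within the Kemeny setting, is to first deduce Lindeberg's condition from Lyapunov's and then appeal to the Lindeberg--Feller theorem. The reduction is the one-line truncation bound
\begin{equation}
\frac{1}{s_n^2}\sum_{i=1}^n \mathbb{E}\!\left[(X_i-\mu_i)^2\,\mathbf{1}_{\{|X_i-\mu_i|>\epsilon s_n\}}\right] \le \frac{1}{\epsilon^\delta s_n^{2+\delta}}\sum_{i=1}^n \mathbb{E}|X_i-\mu_i|^{2+\delta} \to 0,
\end{equation}
where on the truncation event $(|X_i-\mu_i|/(\epsilon s_n))^\delta \ge 1$. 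For the application to the Kemeny metric the verification is then immediate: the $\kappa$-image summands are centred (Lemma~\ref{lem:unbiased}), possess finite variance bounded in $(0,1]$ and, being supported on the compact, totally bounded neighbourhood $U(\mathcal{M})$ (Lemma~\ref{lem:kem_bounded}), have uniformly bounded $(2+\delta)$-moments, so any $\delta>0$ makes the Lyapunov ratio of order $n^{-\delta/2}$ and hence vanishing.

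The main obstacle is the passage from the product $\prod_i \phi_{n,i}(t)$ to the exponential limit, which is not merely the convergence of each factor but requires the uniform asymptotic negligibility $\max_i \sigma_i^2/s_n^2 \to 0$ together with control of the accumulated second-order logarithmic error $\sum_i O(|\phi_{n,i}(t)-1|^2)$; this term must be shown to vanish, which follows because it is dominated by $(\max_i \sigma_i^2/s_n^2)\cdot\sum_i \sigma_i^2/s_n^2 = o(1)$, the first factor tending to $0$ and the second equalling $1$. In the Kemeny case the bounded support of $U(\mathcal{M})$ (Lemma~\ref{lem:kem_bounded}) and the strict sub-Gaussianity (Theorem~\ref{thm:strict_subgauss}) make both the negligibility and the remainder control transparent, so the obstacle is purely the general-sequence bookkeeping rather than anything specific to the metric.
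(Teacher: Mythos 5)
Your proposal is correct, but it takes a genuinely different and substantially more complete route than the paper. The paper's proof never establishes the implication ``Lyapunov's condition $\Rightarrow$ convergence to $\mathcal{N}(0,1)$'' at all: it only \emph{verifies} the hypothesis in the Kemeny setting for the fixed choice $\delta=1$, arguing that the skewness vanishes by the even-function property (Lemma~\ref{lem:even}), invoking the i.i.d.\ assumption to write $\sum_{i}E|X_{i}|^{3}=nE|X_{1}|^{3}$, applying Lyapunov's moment inequality $E|X_{n}|^{3}\ge\sigma_{n}^{3}$, and computing the ratio $nE|X_{1}|^{3}/(n^{3/2}\sigma_{\kappa}^{3})=Cn^{-1/2}\to 0$, after which the classical theorem is cited as a black box. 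You instead prove the theorem itself twice over: once by characteristic functions and L\'{e}vy continuity (with correct handling of the two places where such arguments usually leak, namely uniform asymptotic negligibility via $\sigma_{i}^{2+\delta}/s_{n}^{2+\delta}\le E|X_{i}-\mu_{i}|^{2+\delta}/s_{n}^{2+\delta}$ and the accumulated $\sum_{i}O(|\phi_{n,i}(t)-1|^{2})$ logarithm error), and once by the standard truncation reduction of Lyapunov's condition to Lindeberg's. Your verification step for the Kemeny case is also strictly more general than the paper's, holding for every $\delta>0$ with rate $n^{-\delta/2}$ rather than only $\delta=1$ with rate $n^{-1/2}$, and it matches the corollary as actually stated, which concerns independent but not necessarily identically distributed summands, whereas the paper's verification silently retreats to the i.i.d.\ case. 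What the paper's approach buys is brevity; what yours buys is self-containment and a statement-proof match. One small point to tighten: your claim that the Kemeny Lyapunov ratio is of order $n^{-\delta/2}$ tacitly requires $s_{n}^{2}$ to grow linearly, i.e.\ the variances $\sigma_{\kappa}^{2}\in(0,1]$ must be uniformly bounded \emph{below} as well as above; this is automatic under the i.i.d.\ sampling the paper assumes, but you should say so explicitly since your argument otherwise advertises itself as covering the general independent case.
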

\begin{proof}
A simple verification of the Lyapunov condition is seen to hold for any finite $n$. This is because the skewness of any distribution of Kemeny distances, $\gamma^{3}_{\kappa}(\mathcal{M}_{n}) = 0,~\forall n = 1,\dots,\mathbb{N}^{+} < \infty^{+}$, by the even function property of the $\kappa$ function and any affine linear transformations thereof upon any Hilbert space (Definition~\ref{def:hilbert}; Lemma~\ref{lem:even} \& Lemma~\ref{lem:kem_expansion}). 

As \(X_1, X_2,\cdots X_n \in \mathcal{M}\) are i.i.d, $\sum_{i=1}^{\mathcal{M}} E[|X_{i}|^3] = n E(|X_n|^3)$. By Lyapunov's inequality, $\infty > E(|X_n|^3)\geq [E(|X_n|^2)]^{\frac{3}{2}}\geq (\sigma_{n}^{2})^{\frac{3}{2}} = \sigma_{n}^{3}$, which is true by the finite and totally compact nature of the even function. As $s_n^3=(\sum_{i=1}^n \sigma_i^2)^{\tfrac{3}{2}}=n^{\tfrac{3}{2}}\sigma^3$, thus follows $\dfrac{nE|X_1|^3}{\sqrt[1.5]{n}\sigma_{\kappa}^{3}}=C\dfrac{1}{\sqrt{n}}\to 0$.


For any positive integer $n$, the distribution of the function is concluded symmetric, and from this it follows that the expectation of the boundary condition of the moments greater than 2, $2 + \inf{\delta} = 2 + 1 \implies \gamma^{(2 + \inf{\delta})}_{\kappa} = 0$ by the evenness of the $\kappa$ function upon $n$ reals (Lemma~\ref{lem:even}), the central limit theorem is therefore valid for any sequence of $n$ independently distributed realisations. 
 \end{proof}
\begin{corollary}~\label{cor:kolmogorov}
Kolmogorov's law applied to the Kemeny metric holds that the sample average converges almost surely to the expected value over the field of permutations \(\mathcal{M}\): \[\lim_{m\to\infty^{+}}{\overline {x}}_{m}\ \xrightarrow {\text{a.s.}} \mu \equiv \Pr \!\left(\lim _{m\to \infty^{+} }E(x_{m})_{\kappa} =\mu_{\kappa} \right)=1.\]
\end{corollary}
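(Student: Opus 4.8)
The plan is to verify that the i.i.d.\ Kemeny distances $\{x_{m}\}$ satisfy the hypotheses of Kolmogorov's strong law and then deploy the classical \emph{method of subsequences}, which combines Tchebyshef's inequality (Lemma~\ref{lem:chebyshev}) with the Borel--Cantelli lemma; this supplies the Kolmogorov formulation of the convergence already asserted in Lemma~\ref{lem:lsn}. The essential input is that the common population $\mathcal{M}$ furnishes a finite first moment $\mu_{\kappa}$ and a finite, uniformly bounded variance $\sigma^{2}_{\kappa}$; both are guaranteed by the compactness and total boundedness of the Kemeny metric (Lemma~\ref{lem:kem_bounded}) together with the explicit variance bounds of Lemmas~\ref{lem:sigma_lower} and \ref{lem:sigma_upper}, which confine $\sigma^{2}_{\kappa}$ to a finite interval no larger than $\tfrac{n^{2}-n}{2}$. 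Because the sample is identically distributed, this bound is uniform in $m$.

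First I would centre the variables, setting $Y_{m} = x_{m} - \mu_{\kappa}$, so that $E(Y_{m}) = 0$ by the unbiasedness established in Lemma~\ref{lem:unbiased}; the claim then reduces to showing $\overline{Y}_{M} = \tfrac{1}{M}\sum_{m=1}^{M} Y_{m} \to 0$ almost surely. Applying Tchebyshef's inequality to the centred average gives, for any $\epsilon > 0$,
\[
\Pr\!\left(\left|\overline{Y}_{M}\right| \ge \epsilon\right) \le \frac{\sigma^{2}_{\kappa}}{M\,\epsilon^{2}}.
\]
Evaluating this along the quadratic subsequence $M_{k} = k^{2}$ produces a summable majorant, since
\[
\sum_{k=1}^{\infty} \Pr\!\left(\left|\overline{Y}_{k^{2}}\right| \ge \epsilon\right) \le \frac{\sigma^{2}_{\kappa}}{\epsilon^{2}} \sum_{k=1}^{\infty} \frac{1}{k^{2}} = \frac{\sigma^{2}_{\kappa}}{\epsilon^{2}}\cdot\frac{\pi^{2}}{6} < \infty.
\]
By the Borel--Cantelli lemma the event $\{|\overline{Y}_{k^{2}}| \ge \epsilon\}$ occurs only finitely often, so $\overline{Y}_{k^{2}} \to 0$ almost surely along this subsequence.

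The remaining step is to fill the gaps between consecutive squares. For $M_{k} \le M < M_{k+1}$ I would bound the maximal fluctuation $\max_{M_{k} \le M < M_{k+1}} |\sum_{m=M_{k}+1}^{M} Y_{m}|$ by a Kolmogorov-type maximal inequality, again controlled by the summable bound $\sum_{k} (M_{k+1}-M_{k})\sigma^{2}_{\kappa}/M_{k}^{2}$, which converges because $M_{k+1}-M_{k} = 2k+1 = O(\sqrt{M_{k}})$. This forces the intermediate averages to inherit the almost-sure convergence of the subsequence, yielding $\overline{x}_{M} \to \mu_{\kappa}$ almost surely for the full sequence. Equivalently, appealing to the Khintchine--Kolmogorov one-series theorem to show $\sum_{m} Y_{m}/m$ converges almost surely, followed by Kronecker's lemma, produces the same conclusion without explicit subsequence bookkeeping.

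The main obstacle I anticipate is the gap-filling step: Tchebyshef's inequality alone delivers convergence only along $M_{k}=k^{2}$, and passing to the full sequence genuinely requires a maximal inequality (or the Kronecker route) rather than a term-by-term estimate. A secondary subtlety is the correct reading of the limit $m \to \infty^{+}$ over the finite population $\mathcal{M} = n^{n}-n$: I would interpret it as the idealised atomless measure of Theorem~\ref{thm:clt_kem}, under which the i.i.d.\ sampling assumption makes $\sigma^{2}_{\kappa}$ a genuine constant and legitimises the summability computations above.
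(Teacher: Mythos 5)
Your proposal is correct, but it takes a genuinely different route from the paper. The paper's own proof of Corollary~\ref{cor:kolmogorov} is a hypothesis-verification argument: it simply notes that the summands are i.i.d.\ with a finite expected value and a finite second moment (citing Lemmas~\ref{lem:sigma_lower} and \ref{lem:sigma_upper}) and then invokes Kolmogorov's strong law as a black box to conclude $\overline{x}_{m} - E[\overline{x}_{m}] \xrightarrow{\text{a.s.}} 0$. You instead prove the convergence from first principles: centring via Lemma~\ref{lem:unbiased}, Tchebyshef along the quadratic subsequence $M_{k}=k^{2}$, Borel--Cantelli, and then a maximal inequality (or the Khintchine--Kolmogorov one-series theorem plus Kronecker's lemma) to pass from the subsequence to the full sequence. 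Your version is strictly more informative --- it does not presuppose the strong law it is meant to establish, and the gap-filling step you flag as the main obstacle is exactly the point the paper's citation-style proof glosses over; note also that since the Kemeny summands are uniformly bounded by Lemma~\ref{lem:kem_bounded}, the interpolation between consecutive squares can be done deterministically ($|\overline{Y}_{M}-\tfrac{M_{k}}{M}\overline{Y}_{M_{k}}| \le (2k+1)C/k^{2} \to 0$), so the maximal inequality, while valid, is not strictly needed here. Your closing remark about how to read the limit $m\to\infty^{+}$ over the finite population $\mathcal{M}=n^{n}-n$ identifies a real ambiguity in the statement that the paper's proof does not address.
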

\begin{proof}
The strong law applies to independent identically distributed random variables having an expected value in orderability; if the variables are independent and identically distributed (as upon the Kemeny metric), then it is necessary that they have an expected value; Lemma~\ref{lem:sigma_upper}), and if the summands are independent but not identically distributed with a finite second moment (Lemma~\ref{lem:sigma_lower} \& Lemma~\ref{lem:sigma_upper}), follows:

    \[\lim_{m\to\infty^{+}}{\overline{x}}_{m} - E{\big [}{\overline {x}}_{m}{\big ]}\ {\xrightarrow {\text{a.s.}}}\ 0,\]
and therefore the Kolmogorov's law of convergence upon the Kemeny metric space validly holds for any indepedently distributed samples upon a population of \(\mathcal{M}\) permutations. 
\end{proof}

\section{Density of the permutation space \(\mathcal{M}\)}
\label{lem:density}
 \begin{equation*}
\noindent
 \scriptsize
 \begin{split}
 n^{n} & < {\sqrt {2\pi n}}\ \left({\frac {n}{e}}\right)^{n}e^{\frac {1}{12n+1}}\\
n\log(n) & < \frac{1}{2} \log(2\pi n) + n (\log(n) - \log(e)) + \big(\log(1) - \log(12n +1)\big)\log(e) < n! \\
& \hspace{1cm} <  n\log(n)  < \frac{1}{2} \log(2\pi n) + n (\log(n) - \log(e)) + \big(\log(1) - \log(12n)\big)\log(e)\\
n\log(n) & < \frac{1}{2} \log(2\pi n) + n (\log(n) - 1) + \big(0 - \log(12n +1)\big) < n! \\
& \hspace{1cm}<  \frac{1}{2} \log(2\pi n) + n (\log(n) - 1) + \big(0 - \log(12n)\big)\\
n\log(n) & < \frac{1}{2} \log(2\pi) + n ( \log(n) + \log(n) - 1) \equiv (\log(n)) + \big(0 - \log(12n +1)\big)\\
 n^{n} & < \sqrt{2\pi n} \cdot (\frac{n}{e})^{n}\\
 n\log(n) & < \log(2\pi n) + n\big(\log(n) - \log(e)\big)\\
 n \log(n) & < \frac{1}{2} \log\big(2\pi n\big) + n \log(n) - n\log(e)\\
 0 & < \frac{1}{2}\log\big(2\pi n) - n\\
 2n & < \log\big(2\pi n\big)\\
 \end{split}
 \end{equation*}

\end{document}